\newif\ifthese\thesefalse
\newif\iflong\longtrue
\newif\ifnot\notfalse
\keywords{Bisimilarity, unique solution of equations, termination,
  process calculi}
\theoremstyle{plain} 
\begin{document}

\title[Divergence and unique solution of equations]{Divergence and unique solution of equations}
\titlecomment{{\lsuper*} This is an extended version of the paper with
  the same title, published in the proceedings of CONCUR 2017~\cite{DBLP:conf/concur/DurierHS17}.}

\author[A.~Durier]{Adrien Durier}
\address{Univ. Lyon, ENS de Lyon, CNRS, UCB Lyon 1, LIP}
\email{adrien.durier@ens-lyon.fr, daniel.hirschkoff@ens-lyon.fr}

\author[D.~Hirschkoff]{Daniel Hirschkoff}

\author[D.~Sangiorgi]{Davide Sangiorgi}
\address{INRIA and Universit{\`a} di Bologna}
\email{davide.sangiorgi@cs.unibo.it}








\newcommand{\Div}{\mathit{Div}}
\newcommand{\Divi}{\mathit{Div}_{\omega}}
\newcommand{\ind}{\mathsf{ind}}
\newcommand{\ndesc}{\mathsf{nd}}

\theoremstyle{plain}
\newtheorem{theorem}[thm]{Theorem}
\newtheorem{proposition}[thm]{{Proposition}}
\newtheorem{lemma}[thm]{Lemma}
\newtheorem*{theorem*}{Theorem}

\theoremstyle{definition}
\newtheorem{definition}[thm]{Definition}
\newtheorem{example}[thm]{Example}
\newtheorem{remark}[thm]{Remark}

\newcommand{\daniel}[1]{\marginpar{\textcolor{blue}{DH: #1}}}
\newcommand{\danielmain}[1]{{\textcolor{blue}{DH: #1}}}

\definecolor{dbrown}{RGB}{200,100,0}


 \begin{abstract}
We study  proof techniques for bisimilarity based on \emph{unique solution of
  equations}. 



We draw inspiration from a result 
by
 Roscoe in the denotational setting of CSP and for failure semantics, 
 essentially  stating that an equation (or a system of equations) whose
infinite unfolding  never produces  a divergence has the unique-solution property. We  transport this result onto the operational setting of CCS and for
bisimilarity. 
We then exploit the operational approach to: 
refine the theorem, distinguishing between different forms of divergence; derive an
 abstract formulation of the theorems, on generic LTSs; adapt the theorems to other
 equivalences such as trace equivalence, and to preorders such 
 as trace
 inclusion. We compare the resulting techniques to enhancements of the bisimulation proof
 method (the `up-to techniques'). 
Finally, we study the theorems in name-passing calculi such as the asynchronous
$\pi$-calculus, and  
use them  to
 revisit the completeness part of the proof of full
 abstraction
of Milner's encoding of the $\lambda$-calculus into the $\pi$-calculus for L{\'e}vy-Longo Trees. 
 \end{abstract}
 \maketitle

\section{Introduction}


In this paper we study the technique of \emph{unique solution of
  equations} for (weak) behavioural relations. We mainly focus on 
bisimilarity but we also consider 
 other equivalences, such as trace equivalence, as well as
 preorders such as trace inclusion. 
Roughly, the technique consists in proving 
that two tuples of processes are
componentwise in a given   behavioural relation
 by establishing that 
 they are solutions 
of the same system of equations.

In this work, 
behavioural relations, hence also bisimilarity, are meant to be 
  \emph{weak} because  they abstract from internal moves of terms, as
  opposed to 
the  \emph{strong} relations, which
 make no distinctions between the 
 internal moves and  the
external ones (i.e., the interactions with the 
environment). Weak equivalences are, practically, the most
relevant ones:  e.g., two equal programs may produce the same result
with different numbers of evaluation steps. Further, the problems tackled in
this paper only arise in the weak case.

The technique of unique solution has been proposed by Milner
in the setting of CCS, and plays a  prominent role
in proofs of examples in his 
 book \cite{Mil89}. 
The method  is important in verification techniques and tools
  based on 
algebraic      reasoning \cite{RosUnder10,BaeBOOK,groote}. 
Not all equations have a unique solution: for instance any process
trivially satisfies $X =X$.
To ensure unique solution, it is often required that equations
  satisfy some kind of \emph{guardedness} condition; see, e.g., \cite{acp,groote}.
In CCS, the notion of guardedness is  syntactic:  variables of the equations can
only appear underneath a visible prefix. This 
is not sufficient to guarantee uniqueness. Hence, 
in Milner's theorem~\cite{Mil89}, uniqueness of solutions relies on an additional limitation:
 the equations must be
`sequential', that is, the variables of the equations may
not be preceded, in the syntax
tree, by the parallel composition operator.
This limits the expressiveness  of the  technique
(since occurrences of other operators above the variables, such as 
parallel composition and restriction, in general 
cannot be removed),  and its 
transport onto other languages
 (e.g.,  languages for distributed systems or
higher-order languages  usually do not include the sum operator,
which makes the theorem essentially useless).
 A comparable technique, involving similar 
limitations, has been proposed
 by Hoare in his book about CSP~\cite{hoare}, and plays an equally 
 essential role in the theory of CSP.

In order 
 to overcome such limitations, a variant of the technique, called 
\emph{unique solution of contractions},  has been proposed~\cite{popl15}. The technique is for behavioural equivalences;
however the meaning of `solution' is defined in terms of the
contraction of the chosen equivalence. Contraction is, intuitively,  a 
preorder that  conveys an idea of efficiency on
processes, where efficiency is measured on the number of internal
actions needed to perform a certain activity.
The 
condition   for applicability  of the technique is, as for
Milner's, purely syntactic: 
each
 variable in the body of an equation should be
 underneath a prefix.  The technique has two main disadvantages: 
{
1. the equational theory of the contraction preorder associated to an 
equivalence is not the same as the equational theory of the equivalence itself, 
which thus needs to be studied as well; 
2. the contraction preorder is
strictly finer than the equivalence,
hence there are equivalent processes, one of which might be solution of a given 
contraction, while the other is not, 
and the technique might not be applicable in this case.
}

In this paper we explore a different approach, inspired by results by Roscoe in CSP \cite{roscoe2,roscoe1}, 
essentially  stating that a guarded equation (or system of equations) whose
infinite unfolding  never produces  a divergence has the
unique-solution property. Roscoe's result 
is presented,  as  usual in 
CSP, with respect to  denotational semantics and {failure based
equivalence~\cite{failure,failure-divergence}.
%
In such a setting,
where 
divergence is catastrophic (e.g., it is the bottom element of the domain),
the  theorem has  an elegant and natural formulation. 
(Indeed, Roscoe develops a   denotational model  \cite{roscoe1} 
in which the  proof of the  theorem  is just a few lines.)

We draw inspiration from
Roscoe's work to formulate the counterpart of these results in the
operational setting of CCS and bisimilarity.
In comparison with the denotational CSP proof, 
the operational CCS proof  is more complex. 
The operational setting offers however a few advantages.
First, 
we can formulate  more refined versions of the theorem,  in which we
distinguish between different forms of divergence. 
\iflong
Notably, we can ignore 
divergences that appear after finite unfoldings of the equations, 
called \emph{innocuous} divergences in this paper.  
\fi
(These refinements would
 look less 
 natural in the  denotational and trace-based
setting of CSP, where  any divergence causes a process to
be considered undefined.)
A second and more important advantage 
comes as a consequence of 
the flexibility of the 
  operational approach: 
the unique-solution theorems 
 can be 
 tuned to other behavioural relations (both equivalences and
preorders), and to other languages. 

To highlight  the latter aspect,  we present   abstract  formulations of the
 theorems, on a generic LTS (i.e.,  without reference to CCS).
\iflong
 In this
 abstract formulation, 
\else
, where
\fi
the body of an equation becomes a function on the
 states of the LTS.  
The theorems for CCS are instances of the
 abstract formulations. 
%
%
Similarly we can derive analogous theorems for other
 languages. Indeed   we can do so 
  for all languages whose constructs  have an
operational semantics with rules in the 
 GSOS format \cite{gsos} (assuming appropriate hypotheses, among which
 congruence properties). In contrast, 
the analogous 
 theorems fail for languages whose constructs follow the
 \emph{tyft/tyxt} \cite{tyft} format, due to the possibility of rules with
a \emph{lookahead}.
We also consider extensions of the theorems to name-passing calculi such as
the $\pi$-calculus. 

The abstract version 
 of our main unique-solution theorem  has been formalised using
 the Coq proof assistant~\cite{www:usol:coq}.

 The bisimulation proof method is a well-established and extensively 
studied
technique for deriving bisimilarity
results. 
An advantage of this method are  its 
enhancements,
provided by the so called `up-to techniques'~\cite{pous:sangiorgi:upto}.
Powerful such enhancements are  `up to context', whereby  in the
derivatives of two terms a common context can be erased,  
 `up to expansion', whereby two
 derivatives can  be rewritten using the expansion preorder, and 
 `up to transitivity', whereby  the matching between two
derivatives is made with respect to the transitive closure of the candidate
 relation (rather than the relation alone). Different enhancements can sometimes be combined, though
care is needed  to preserve 
 soundness.  
One of the most powerful combinations is an \emph{`up to transitivity
and context'} 
technique due to Pous, which relies on a termination
hypothesis~\cite{damien_phd}. This technique generalises `up to expansion' and combines
it with `up to context' and 
`up to transitivity'.  
We show that, modulo
an additional hypothesis, our techniques are at least as
powerful as this up-to 
technique: any  up-to relation can
be turned into a system of equations of the same size 
\iflong
 (where the size of a relation is the
number of its pairs, and the size of a system of equations is the
number of its equations) for which uniqueness of solutions holds.
\fi

An important difference  between unique solution of equations
and up-to techniques arises 
in the (asynchronous)  $\pi$-calculus. 
In this setting, forms of bisimulation enhancements that involve `up
to context',
\iflong
 such
as `up to expansion and context', 
\fi
 require closure of the candidate relation under
substitutions 
(or instantiation of the
 formal parameters of an abstraction with
arbitrary actual parameters).
It is an
open problem whether this closure is necessary 
in the asynchronous
$\pi$-calculus, where
 bisimilarity 
 is closed under substitutions. 
Our unique-solution techniques
    are strongly reminiscent of up to
context techniques (the body of an equation acts like a context that
is erased in a proof using `up to context'); yet, surprisingly, no 
closure under substitutions is required.

As an example of application of our techniques in the $\pi$-calculus,
we revisit the completeness part of the proof of full abstraction for
the encoding of the call-by-name $\lambda$-calculus into the $\pi$-calculus
\cite{cbn,cbn2} with respect to Levy-Longo Trees (LTs). The proof in
\cite{cbn,cbn2} uses `up to expansion and context'.  Such up-to techniques
seem to be essential: without them, it would be hard even to define
the bisimulation candidate.  For our proof using unique-solution, there is one equation for
each node of a given LT, describing the shape of such node.
  (By the time the revision of this paper has 
been completed, we have used to the technique to establish full
abstraction for the encoding of the call-by-value
$\lambda$-calculus~\cite{DBLP:conf/lics/DurierHS18}.)



\emph{Outline of the paper:} Section~\ref{s:back} provides background 
about CCS and behavioural relations. We formulate our main
results for CCS in Section~\ref{s:mt}, and generalise them in an abstract
setting in Section~\ref{s:generalisation}. Section~\ref{s:pi} shows
how our results can be applied to the $\pi$-calculus.

\emph{Comparison with the results published
in~\cite{DBLP:conf/concur/DurierHS17}:} this paper is an extended
version of~\cite{DBLP:conf/concur/DurierHS17}. We provide here
detailed proofs which were either absent or only sketched
in~\cite{DBLP:conf/concur/DurierHS17}, notably for: our two main
unique-solution theorems (Theorems~\ref{thm:roscoe_ccs}
and~\ref{thm:roscoe2_ccs}); the comparison with up-to techniques
in Section~\ref{s:up-to}; the analysis of call-by-name
$\lambda$-calculus in the asynchronous $\pi$-calculus
(Theorem~\ref{t:fullabs}). 
%
%
We also include more detailed discussions along the paper, notably
about the applicability of our techniques (in particular,
Lemma~\ref{l:div_api}, Propositions~\ref{prop:gsos}
and~Proposition~\ref{prop:min}, and Remark~\ref{r:complete}).





\section{Background}
\label{s:back}
\subsubsection*{CCS}
\label{ss:ccs}

We assume  an infinite set  of \emph{names} $  a, b, \ldots $
 and   a  set  of
\emph{constant identifiers} (or simply \emph{constants}) 
 to write 
recursively defined processes. 
The special symbol $\tau$ does not occur in the names and in the 
constants.  
\iflong
The class   of the  CCS processes  is
built from the operators
of  parallel
composition, guarded sum,
 restriction, 
and  constants, and the guard of a sum 
can be an input, an output, or a   silent
prefix:
\else
We recall  the grammar of CCS:
\fi
$$ 
\begin{array}{ccl}
P  & := &   \;  P_1 |  P_2 \; \midd  \;
\Sigma_{i\in I} \mu_i.   P_i\; \midd 
  \res a P  \;  \midd \;   K   
   \end{array}
\qquad\qquad
\begin{array}{ccl}
\mu    & := &  a    \; \midd  \; \outC a 
  \; \midd       \;  \tau   
   \end{array}
 $$
where $I$ is a countable indexing set.
\iflong
Sums are guarded to ensure that behavioural equivalences and
preorders are substitutive. 
\fi 
We write $\nil $ when $I$ is
empty,
and $P + Q$  for binary sums,
\iflong
 with the understanding
 that, to fit the above grammar, $P$ and $Q$ should be sums of
 prefixed terms. 
\fi
Each constant $K$ has  a definition 
 $K \Defi P$.
We sometimes omit trailing $\nil$, e.g., writing $a|b$ for $a.\nil |b
.\nil $ . 
\iflong
We write
$\mu^n.P$ for $P$ preceded by $n$ $\mu$-prefixes.
In a few examples we write $! \mu. P$ as abbreviation for the constant
$K_{\mu.P} \Defi \mu. (P| K_{\mu.P} )$.
\fi
The operational semantics is given by means of a Labelled Transition System (LTS), and is
given in Figure~\ref{f:LTSCCS} (the symmetric versions  of the rules \ttt{parL} and 
\ttt{comL} have been omitted). 
\iflong
The \emph{immediate derivatives} of a
process $P$ are the elements of  the set $\{P' \st P \arr\mu P' \mbox{ for some $\mu$ }
\}$.   
We use $\ell$ to range over
 visible actions (i.e.,  inputs or outputs, excluding  $\tau$).
\fi
\begin{figure*}
\begin{center}
\ttt{sum} $ \displaystyle{\over \Sigma_{i\in I} \mu_i.   P_i 
\arr{ \mu_i  }P_i } $  $ \hb$ 
$\; \;$ \ttt{parL} $\displaystyle{   P \arr\mu   P' \over   P | Q   \arr\mu
P' | Q } $  $ \hb$   
 $\; \;$ \ttt{comL} $\displaystyle{ P \arr{ a}P' \hk \hk  Q
\arr{\outC a }Q'  \over     P|  Q \arr{ \tau} P'
|  Q'  }$ 
\\
\ttt{res} $\displaystyle{ P \arr{\mu}P' \over
 \res a     P   \arr{\mu} \res a P'} $ $ \mu \neq a, \outC a$
$ \hb$ 
\hskip .5cm 
\const$\displaystyle{ P \arr{\mu}P' \over
 K   \arr{ \mu} P'  } $  if  $  K \Defi P$
\end{center}
\caption{The LTS for CCS}
\label{f:LTSCCS}
\end{figure*}

Some standard notations for transitions:  $\Longrightarrow $ is the 
reflexive and  transitive closure of $\arr\tau $, and 
$\Arr \mu $ is $\Longrightarrow \arr\mu \Longrightarrow $ (the
composition of the three relations).
Moreover,   
$ 
P \arr{\hat \mu} P'$ holds if $P \arr\mu P'$ or ($\mu =\tau$ and
$P=P'$); similarly 
$ 
P \Arcap \mu P'$ holds if $P \Arr\mu P'$ or ($\mu =\tau$ and
$P=P'$).
\iflong
We write $P (\arr\mu)^n P'$ if $P$ can become $P'$ after performing
$n$ $\mu$-transitions. Finally, $P \arr\mu$ holds if there is $P'$
with $P \arr\mu P'$, and similarly for other forms of transitions.

\subsubsection*{Further notations.}
\fi
Letters  $\R,\S$ range over relations.
We use the infix notation for relations, e.g., 
$P \RR Q$ means that $(P,Q) \in \R$, and we write
 $\R\S$  for the composition of $\R$ and
  $\S$. A relation \emph{terminates} if there is no infinite sequence
  $P_1\R P_2\R\dots$.
We use a tilde to denote a tuple, with countably many elements; thus
the tuple may also be infinite.
 All
notations  are  extended to tuples componentwise;
e.g., $\til P \RR \til Q$ means that $P_i \RR Q_i$, for  each  
component $i$  of the tuples $\til P$ and $\til Q$.
We use  
\iflong
symbol 
\fi
$\defi$ for abbreviations; 
\iflong
for instance, $P \defi G $, where
$G$ is some expression, means that  $P$ stands
for the  expression
$G$. 
\fi In contrast,  
\iflong
symbol
\fi $\Defi$ is used for the definition of
constants, and   $=$  for syntactic equality and for equations.
\iflong
If $\leq$ is a preorder, then  $\geq$  is its inverse (and
conversely).
\fi

We focus on \emph{weak} behavioural relations, which
abstract from the number of internal steps performed 
\iflong
 by equivalent
processes, because they are, in  practice, the important relations.  
\fi             
\begin{definition}[Bisimilarity]
\label{d:wb}
A 
 relation ${\R}$ 
 is a \emph{bisimulation} if, whenever
 $P\RR Q$, 
we have:
\begin{enumerate}
\item 
    $P \arr\mu P'$   implies that there is $Q'$ such that $Q \Arr{\hat
\mu} Q'$ and $P' 
\RR Q'$;

\item the converse,
 on the actions from $Q$.
\end{enumerate}  
 $P$ and $Q$ are \emph{bisimilar},
written $P \wb
 Q$, if $P \RR Q$ for some  bisimulation $\R$.  
\end{definition}

\subsubsection*{Systems of  equations.}
\label{ss:SysEq}

             

We need variables to write equations. We  use
 capital
letters  $X,Y,Z$
 for  these variables and call them \emph{\behav\  variables}.
 The body of an equation is a CCS expression
possibly containing \behav\  variables. 
%
We use $E, E',\dots$ to range over \emph{\ees}; these
are process expressions that may contain occurrences of variables;
that is, the grammar for processes is extended with a production for
variables.

\begin{definition}
Assume that, for each $i$ of 
 a countable indexing set $I$, we have a variable $X_i$, and an \See
$E_i$, possibly containing  some variables. 
Then 
$\{  X_i = E_i\}_{i\in I}$
(sometimes written $\til X = \til E$)
is 
  a \emph{system of equations}. (There is one equation for each variable $X_i$.)
\end{definition}

\iflong
  In the equation $X=E[X]$, we sometimes call body of the equation
  the \ee $E$ (we use the same terminology for systems of equations).

\fi  

$E[\til P]$  is the process resulting from $E$ by
replacing each variable $X_i$   with the process $P_i$, assuming
$\til P$ and $\til X$ have the same length. (This is syntactic
replacement.)
The components of $\til P$ need not be
 different from each other, while this 
must hold for the variables $\til
 X$. 

\begin{definition}\label{d:un_sol}
Suppose  $\{  X_i = E_i\}_{i\in I}$ is a system of equations. We say that:
\begin{itemize}
\item
 $\til P$ is a \emph{solution of the 
system of equations  for $\wb$} 
if for each $i$ it holds
that $P_i \wb E_i [\til P]$.

\item The  system has 
\emph{a unique solution for $\wb$}  if whenever 
 $\til P$ and $\til Q$ are both solutions for $\wb$, then $\til P \wb \til Q$. \end{itemize}
 \end{definition}

\iflong
Examples of systems with a  unique solution for $\wb$ are: 
\begin{enumerate}
\item
$ X = a. X$ 

\item 
$ X_1 = a.  X_2$, $ X_2 = b.  X_1$  

\end{enumerate}
 
The unique solution of the system (1), modulo $\wb$,  is the constant $K \Defi a
. K$:  for any other solution $P$ we have $P \wb K$.
The unique solution of (2), modulo $\wb$, is given by the constants $K_1 , K_2$
with $K_1 \Defi a . K_2$ and $K_2 \Defi b. K_1$; again, for any other
pair of solutions $P_1,P_2$ we have $K_1 \wb P_1$ and $K_2 \wb P_2$.

Examples of systems that do not have unique solution are: 
\begin{enumerate}
\item $X = X $ 

\item $X = \tau . X$
\item $X = a | X$

\end{enumerate} 
All processes are solutions of (1) and (2); examples of solutions for
(3) are $K$ and $K | b$, for $K \Defi a
. K$
\else
For instance, 
the system $ X_1 = a.  X_2$, $ X_2 = b.  X_1$  has a unique solution,
whereas the equations 
 $X = X $, or  
 $X = \tau . X$, or
 $X = a | X$ do not. 
\fi
{\begin{remark}
To prove that two processes $P$ and $Q$ are equivalent using the unique solution  
proof technique, one has first to find an equation $X=E[X]$ of which both $P$ and 
$Q$ are solutions. Then, a sufficient condition for uniqueness of
solutions makes it possible to deduce 
$P\bsim Q$. 

  The unique-solution method is currently particularly
used in combination with algebraic laws~\cite{RosUnder10,BaeBOOK,groote}.


\end{remark}}
\iflong
\begin{definition}
An \ee   $E$
 is 
\begin{itemize}
\item
\emph{strongly guarded} if each occurrence of
a variable in $E$ is underneath a visible prefix;

\item
\emph{(weakly) guarded} if each occurrence of
a variable in $E$ is underneath a prefix, visible or not;

\item 
 \emph{sequential}  if each occurrence of
a variable in $E$ is only appears  underneath prefixes and sums .
\end{itemize}
We say that an equation satisfies one of the above properties when its body does.
These notions are extended to systems of equations in a natural way:
for instance, $\{  X_i = E_i\}_{i\in I}$ is guarded if each expression
$E_i$ is (w.r.t.\ every variable that occurs in $E_i$).
 \end{definition}
\else

A system of equations 
is \emph{guarded} 
(resp.\ \emph{strongly guarded})
if each occurrence of
a variable in the body of an equation
 is underneath a prefix (resp.\ a visible prefix, i.e., different from
 $\tau$).
 \iflong
 In the literature, guarded is sometimes called
 \emph{weakly} guarded.\fi
\fi
\iflong
In other words,  
if the system is sequential, then 
 for
every expression $E_i$, any sub-expression of $E_i$ in which $X_j $ 
appears, apart from $X_j$ itself,  is a sum (of prefixed terms). 
For instance, 
\begin{itemize}
\item $X = \tau. X + \mu . \nil$ is sequential but not 
 strongly guarded, because the guarding prefix for the variable
is not visible.

\item $X =  \ell . X | P$ is  guarded but not sequential.

\item $X =  \ell . X + \tau. \res a (a .\outC b | a.\nil)$, as well as 
$X = \tau . (a. X + \tau . b .X + \tau  )$
are both 
 guarded and sequential.
\end{itemize} 

{
\begin{remark}[Recursive specifications in ACP]
Systems of equations
are called \emph{recursive specifications}
in the literature related to ACP~\cite{acp}.
%
%
In that context, other notions of guardedness have been studied.
A sufficient condition, more
powerful than Milner's,  was originally
given by Baeten, Bergstra and Klop~\cite{acp}, where synchronisation
is transformed into a visible action, which is then deleted through an
explicit operator. An equation is then guarded if no such deletion
operator appears in its body.

In some process calculi, such as ACP~\cite{acp} and mCRL2~\cite{groote}, guardedness is synonymous, for an 
equation, to having a unique solution: this follows the 
\emph{Recursive specification principle} (RSP), which states that guarded recursive 
specifications are unique (while the \emph{Recursive definition principle} states that 
recursive specifications do have solutions). 
Increasingly general definitions of 
guardedness that make this principle sound are then studied.
 In this framework, 
our contribution can be seen as a new notion of guardedness, stronger that what is found 
in the literature, under which the recursive definition principle is sound for weak bisimilarity.
\end{remark}
}

\begin{theorem}[unique solution of equations, \cite{Mil89}]
\label{t:Mil89}
A system of strongly guarded and sequential equations
   has 
a unique solution
 for $\wb$.
\end{theorem}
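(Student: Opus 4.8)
The plan is to prove Milner's theorem by a direct bisimulation argument, exploiting the syntactic constraints — strong guardedness and sequentiality — to control how the two solutions $\til P$ and $\til Q$ can interact. Suppose $\til P$ and $\til Q$ are both solutions, so $P_i \wb E_i[\til P]$ and $Q_i \wb E_i[\til Q]$ for each $i$. The natural candidate relation is
\[
\R \defi \{ (R[\til P], R[\til Q]) \st R \text{ a sequential expression} \},
\]
where $R$ ranges over expressions built with the same sequential discipline (variables guarded only by prefixes and sums). Since each $E_i$ is itself such an expression and $X_i = E_i$, this candidate contains the pairs $(E_i[\til P], E_i[\til E])$ and, composed with $\wb$, relates $P_i$ to $Q_i$. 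The heart of the proof is to show that $\R$, closed under $\wb$ on both sides (i.e.\ $\wb \R \wb$), is a bisimulation.

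First I would analyse a transition $R[\til P] \arr\mu S$. Because $R$ is sequential, any occurrence of a variable $X_i$ sits underneath a visible prefix and inside sums, never under parallel composition or restriction: this is exactly what sequentiality buys us. Consequently the first action emitted by $R[\til P]$ is \emph{contributed by $R$ itself}, not by the plugged-in processes $\til P$ — a variable cannot fire until its guarding visible prefix has been consumed. I would make this precise by an induction on the structure of $R$, showing that $R[\til P] \arr\mu S$ forces $S = R'[\til P]$ for some sequential expression $R'$, where the shape of $R'$ depends only on $R$ and $\mu$, not on $\til P$. The crucial case is $R = X_i$: here no immediate transition of $R[\til P] = P_i$ need be matched by $R'$, because strong guardedness means the variable contributes nothing at top level — this is where we instead unfold using $P_i \wb E_i[\til P]$, replacing the bare variable by its defining body $E_i$, which \emph{is} guarded.

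The matching step then runs as follows. Given $R[\til P] \arr\mu R'[\til P]$ with the derivative again sequential, the identical syntactic transition is available on the $Q$-side, $R[\til Q] \arr\mu R'[\til Q]$, because the transition was determined by $R$ alone; the pair $(R'[\til P], R'[\til Q])$ lies in $\R$. Interleaving this with the equivalences $P_i \wb E_i[\til P]$ to handle the transitions actually originating from the components, and using that $\wb$ is closed under the sequential contexts and is transitive, one checks that $\wb \R \wb$ satisfies the weak bisimulation clauses. Strong guardedness is what guarantees that when we unfold $X_i$ to $E_i$ the matching move on the other side is a genuine \emph{weak} transition $\Arr{\hat\mu}$ rather than a vacuous one, so the up-to-$\wb$ argument does not collapse.

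The main obstacle I expect is precisely the treatment of the weak (as opposed to strong) setting: a single $\arr\mu$ on one side must be matched by $\Arr{\hat\mu}$ on the other, and silent $\tau$-moves from the unfoldings can accumulate. One must ensure the bisimulation game stays well-founded and that the candidate $\wb \R \wb$ is genuinely closed under the game — this is the delicate bookkeeping that strong guardedness plus sequentiality are designed to make possible, and it is the reason Milner's theorem requires \emph{both} hypotheses rather than guardedness alone. A clean way to package this is to prove the key syntactic lemma (that sequential contexts transport transitions identically, modulo unfolding guarded variables) once, and then let the bisimulation check follow by routine comparison of the two sides.
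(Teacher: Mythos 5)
Your skeleton is essentially the one behind the paper's (one-sentence) sketch of Milner's proof: a candidate relation $\R \defi \{(R[\til P],R[\til Q]) \st R \mbox{ sequential}\}$, an invariance lemma saying that transitions of $R[\til P]$ are instances of transitions of $R$ itself, and a packaging as a bisimulation up to $\wb$. The genuine gap is in the matching step. As you state it, a \emph{strong} challenge $R[\til P]\arr\mu\cdot$ is answered by $R[\til Q]\Arr{\hat\mu}\cdot$ with derivatives related through $\wb\R\wb$; this is exactly ``weak bisimulation up to weak bisimilarity'' with strong challenger moves, which is an \emph{unsound} technique (the paper itself recalls this in its discussion of up-to techniques). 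The standard counterexample fits your clauses: for $\R\defi\{(\tau.a.\nil,\,\nil)\}$, the challenge $\tau.a.\nil\arr\tau a.\nil$ is answered by $\nil\Arr{\hat\tau}\nil$ with $a.\nil \wb \tau.a.\nil \mathrel{\R} \nil \wb \nil$, yet $\tau.a.\nil\not\wb\nil$. So ``one checks that $\wb\R\wb$ satisfies the weak bisimulation clauses'' does not follow from your single-step lemma: to verify that $\wb\R\wb$ is a bisimulation you must first transport the challenger's move through $\wb$, which turns it into an arbitrary \emph{weak} transition $R[\til P]\Arr{\hat\mu}S$, and such a weak transition interleaves expression steps with steps of the $P_i$ once a visible action has exposed a variable. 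That analysis is the actual content of Milner's proof, and it is precisely the ``delicate bookkeeping'' you defer.

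To close the gap one proves the invariance lemma for \emph{weak} transitions of strongly guarded sequential expressions: (i) every $\tau$-step of $R[\til P]$ is an instance of an expression transition $R\arr\tau R'$ where $R'$ is again strongly guarded and sequential --- this is where sequentiality does its real work, excluding a communication that consumes a visible guard, which is exactly what happens in the paper's counterexample $X=\res a (a.X\,|\,\outC a)$ (your text instead credits sequentiality with putting variables under visible prefixes, which conflates it with strong guardedness); (ii) the at most one visible step may expose variables, after which one rewrites the derivative $R'$ to $R'[\til E]$ using $P_i\wb E_i[\til P]$ and congruence (legitimate here because sequential contexts are prefixes and guarded sums), restoring strong guardedness so that the trailing $\tau$-steps are again expression transitions. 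With this lemma, weak challenges from $R[\til P]$ are matched by $R[\til Q]$ up to $\wb$ and one verifies \emph{directly} that $\wb\R\wb$ is a bisimulation, with no appeal to an up-to soundness theorem. Two smaller points: your single-step lemma requires $R$ \emph{guarded}, not merely sequential --- the relation contains unguarded derivatives beyond $R=X_i$, e.g.\ $\tau.X+b.\nil$ after a visible action --- and the pair you exhibit should read $(E_i[\til P],E_i[\til Q])$, not $(E_i[\til P],E_i[\til E])$.
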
 

The proof exploits an invariance property on immediate transitions for
strongly guarded and sequential expressions, and then extracts a bisimulation
(up to bisimilarity) out
of the solutions of the system.  

To see the  need of the
 sequentiality  condition, 
  consider
 the equation (from \cite{Mil89}) 
 \[ X = \res a (a. X | \outC a)
 \enspace,\]
 where the occurrence of $X$ in the equation expression is strongly
 guarded but
 not sequential. Any process that does not use $a$ is a solution.

\subsubsection*{Note.} In CCS, $\nu$ is not a binder; 
this allows us to write equations where local names are used
outside their scopes (for instance, $ X=a.\res {a}(\outC a | X)$), as
there is no alpha-renaming. It would still be possible, when $\nu$ has
to be considered a binder (in a higher-order setting, for example), to
write similar equations in a parametric fashion: $X=(a)~a.\res
{a}(\outC a | X\langle a\rangle)$. Such an approach is adopted in
Section~\ref{s:pi}, to handle name passing.

\fi


\begin{definition}[Divergence]
  A process $P$ \emph{diverges} if it can perform an infinite sequence
  of internal moves, possibly after some visible ones; i.e., there are
  processes $P_i$, $i\geq 0$, and some $n$, such that
  $P=P_0\arr{\mu_0} P_1 \arr{\mu_1} P_2 \arr{\mu_2}\dots$ and for all
  $i>n$, $\mu_i=\tau$. We call a \emph{divergence of $P$} the sequence
  of transitions $\big(P_i\arr{\mu_i}P_{i+1}\big)_{i}$\enspace.
\end{definition} 

\begin{example}
  The process $ L\Defi a.\res {a}(L|\outC a)$ diverges, since
  $L\arr a \res a (L|\outC a)$, and (leaving aside $\nil$ and useless
  restrictions) $\res a (L|\outC a)$ has a $\tau$ transition onto
  itself.
\end{example}

\section{Main Results}
\label{s:mt}
\subsection{Divergences and Unique Solution}
This section is devoted to our main results for %
bisimilarity, in the case of CCS.  
%
We need to reason with the unfoldings of the given  equation $X = E$: 
{ we define} the $n$-th unfolding of $E$ {to be}  $E^n$; 
thus  $E^1$ { is defined as} $E$,   
  $E^2$ {as} $E[E]$,   and 
  $E^{n+1}$ { as} $E^n[E]$. 
The \emph{infinite} unfolding
represents the simplest
and most intuitive solution to the equation. In the CCS grammar, such a solution is
obtained by turning the equation into a constant definition, namely
the constant $K_E$ with $K_E \Defi E[K_E]$. We call $K_E$ the 
\emph{syntactic solution of the equation}.

For  a system of equations  $\til X=\til
E[\til X]$, the unfoldings are defined accordingly 
(where $E_i$ replaces $X_i$ in the unfolding): 
 we write
$\til{E}^2$  for the system $\{X_i=E_i\croc{ \til E}\}_{i\in I}$, and similarly for
$\til{E}^n$. 
The syntactic solutions
are defined to be the set of mutually recursive
 constants $\{\ssoli E i\Defi E_i[\ssols E]\}_i$.

As \ees are terms of an extended CCS grammar, that includes variables,
we can apply the SOS rules of CCS to them (assuming that 
variables have no
transitions). 
We extend accordingly to \Sees notations and terminology for LTS and
transitions.

We have the following properties for transitions of processes of the
form $E[\tP]$, where the transition emanates from the $E$ component only:
\begin{lemma}[Expression transitions]\label{l:context:lts}~
  \begin{enumerate}
  \item\label{it:l:exptrans} Given $E$ and $E'$ two \ees, if $E\arr\mu E'$, then
    $ E[\tP]\arr\mu E'[\tP]$, for all processes $\tP$, and
    $E\croc {\tF}\arr\mu E'\croc{\tF}$ for all \ees $\tF$.
\item If $E$ is a guarded \See and $E[\tP]\arr \mu T$, then there
  is an \See $E'$ such that $E\arr\mu E'$ and $T=E'[\tP]$. Similarly for a transition
  $E[\tF]\arr{\mu}E'$.
  \end{enumerate}
\end{lemma}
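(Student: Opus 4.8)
The plan is to prove Lemma~\ref{l:context:lts} by structural induction on the \See $E$, treating part~(1) and part~(2) separately. Part~(1) is the easy direction: I want to show that any transition emanating from the $E$ component alone is preserved under substitution. I would proceed by induction on the derivation of $E\arr\mu E'$ (equivalently, on the structure of $E$), following the SOS rules of Figure~\ref{f:LTSCCS}. Since variables have no transitions, the base case of a transition must arise from a \ttt{sum} rule $\Sigma_{i\in I}\mu_i.E_i\arr{\mu_i}E_i$, which is immediately stable under the substitution $[\tP]$ because the substitution commutes with the syntactic shape of the prefix; the inductive cases \ttt{parL}, \ttt{comL}, \ttt{res} and \const\ all follow by applying the induction hypothesis to the premise(s) and reassembling with the same rule, using that substitution distributes over the operators. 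The statement for \See substitution $E\croc{\tF}$ is entirely analogous, the only point being that substituting \ees (rather than processes) for variables again commutes with the SOS rules.

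Part~(2) is the real content and will be the main obstacle. Here I need the converse: if $E$ is \emph{guarded} and $E[\tP]\arr\mu T$, I must locate an \See $E'$ with $E\arr\mu E'$ and $T=E'[\tP]$ — that is, the transition of the instantiated term is ``caused'' by the context $E$ rather than by the processes $\tP$ plugged in. I would again argue by structural induction on $E$, now inspecting the last rule used to derive $E[\tP]\arr\mu T$. The guardedness hypothesis is precisely what rules out the dangerous case: the only way a substituted process $P_i$ could itself contribute a transition is if a variable $X_i$ occurred unguarded (not underneath any prefix), so that $E[\tP]$ exposes $P_i$ at top level; guardedness forbids this. Concretely, in the \ttt{sum} case, every summand is $\mu_j.E_j$ with $E_j$ guarded, and the transition is $E[\tP]=\Sigma_j\mu_j.E_j[\tP]\arr{\mu_j}E_j[\tP]$, so I take $E'=E_j$; the variables inside $E_j$ are now under one fewer prefix but this does not matter since I only need one transition step. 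For \ttt{parL} (and symmetrically its omitted twin), the premise is a transition of one sub-expression, which is again guarded, so the induction hypothesis applies and I recombine. For \ttt{comL} I apply the induction hypothesis to both premises. For \ttt{res} the side condition transfers unchanged. The constant case \const\ does not arise for a genuine \See with a free variable at the point of guarding, but where constants appear they carry no variables and reduce to the ordinary CCS semantics, so $E'$ is simply a closed process.

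The delicate step I expect to dwell on is making precise that guardedness is inherited by the relevant sub-expressions as the induction descends, and that at no point is a top-level variable exposed. I would state clearly that guardedness of $\Sigma_j\mu_j.E_j$ imposes no constraint on the $E_j$ (the guarding prefix has already been consumed), whereas guardedness of $E_1\mid E_2$ and of $\res a\,E_1$ requires each constituent to be guarded, which is exactly what feeds the induction hypothesis in those cases. The companion claim for the \See substitution, $E[\tF]\arr\mu E'$, is handled by the identical induction, replacing the processes $\tP$ by \ees $\tF$ throughout; the argument does not use any special property of processes beyond the fact that variables are transition-free, so it goes through verbatim. I would therefore organise the write-up as a single induction with the two substitution regimes treated in parallel, highlighting only the \ttt{sum} and \ttt{parL} cases in detail and noting that the remaining rules are routine.
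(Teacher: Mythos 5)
Your proof is correct and takes essentially the same route as the paper's, which establishes part~(1) by induction on the derivation of the transition and part~(2) by structural induction on the guarded \See $E$ — exactly the two inductions you carry out. (Your early phrase claiming the continuations $E_j$ in the \ttt{sum} case are themselves guarded is inaccurate but harmless, and you correct it yourself in the final paragraph: guardedness of a sum imposes no condition on the $E_j$, and the resulting $E'$ need not be guarded.)
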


\begin{proof}
\begin{enumerate}
\item By a simple induction on the derivation of the transition.
\item By a simple induction on the \ee $E$.
\end{enumerate}
\end{proof}

In the hypothesis of case~\ref{it:l:exptrans} above, 
  we sometimes call a transition 
$ E[\tP]\arr\mu E'[\tP]$ 
 \emph{an
    instance} of the \See transition  $E\arr\mu E'$.

\begin{definition}[Reducts]
\begin{enumerate}
\item The \emph{set of reducts} of an  \See $E$, written $\red E$, is
given by: 
$$\red E\quad\Defi\quad  \bigcup_n  \: \{E_n \st 
E\arr{\mu_1}E_1 \dots\arr{\mu_n} E_n \mbox{ for some $\mu_i,E_i$ ($1\leq i\leq n$) }\}
.$$
%
\item The set of \emph{reducts of the unfoldings} of a system of equations $\{X_i=E_i[\til X]\}_{i\in I}$, 
also written $\reds {\til E}$, is defined as 
$$\reds {\til E}\quad \Defi \quad\bigcup _{n\in\N,i\in I}\red{E_i^n}$$
\end{enumerate}
\end{definition}
%
%
\begin{definition}
A system of equations $\til E$ 
 \emph{protects its solutions} if, for all 
solution $\til P$ of the equation $\til X=\til E[\til X]$, the following holds: 
for all  $E'\in \reds {\til E}$, if
 $E'[\til P]\xRightarrow{\mu}Q$  for some $\mu$ and $Q$,
then there exists $E''$ and $n$ such that
 $E'\croc{\til E^n}\xRightarrow{\hat\mu} E''$, 
 and
 $E''[\til P] \bsim Q$.
\end{definition}
Consider a single equation $X=E[X]$: it protects its solutions when all sequences of
transitions emanating from $E'[P]$, where $P$ is a solution of $E$ and
$E'$ is a reduct of the unfoldings of $E$, can
be mimicked by transitions involving unfoldings of $E$ and reducts of
$E$ only (without $P$ performing a transition).
  This is a technical condition, which is useful in the proofs.  
  In this paper, all examples of equations having a unique solution satisfy this property.
\begin{proposition}
\label{l:protects}
A system of equations
that protects its solutions 
  has a unique solution for $\bsim$.
\end{proposition}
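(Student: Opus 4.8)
The plan is to prove uniqueness directly by exhibiting a single weak bisimulation relating $E_i[\til P]$ with $E_i[\til Q]$ for every $i$, where $\til P$ and $\til Q$ are two arbitrary solutions. Since $E_i \in \reds{\til E}$ (it belongs to $\red{E_i^1}=\red{E_i}$ via the empty reduction sequence), and since $P_i \bsim E_i[\til P]$ and $Q_i \bsim E_i[\til Q]$ because $\til P,\til Q$ are solutions, transitivity of $\bsim$ will then give $P_i \bsim Q_i$, i.e.\ $\til P \bsim \til Q$, which is exactly uniqueness. My candidate is
\[
  \R \defi {\bsim}\,\R_0\,{\bsim}, \qquad \R_0 \defi \{(E'[\til P], E'[\til Q]) \mid E' \in \reds{\til E}\}.
\]
I stress that I will prove $\R$ \emph{itself} to be a bisimulation, rather than treat $\R_0$ as a bisimulation up to $\bsim$: the latter enhancement is unsound for weak bisimilarity, so the composition with $\bsim$ must be carried inside the relation and checked by hand.

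Two preliminary facts feed the main argument. First, an iterated solution property: $E_i^n[\til P] \bsim P_i$ and $E_i^n[\til Q] \bsim Q_i$ for every $n$, proved by induction on $n$ from $P_i \bsim E_i[\til P]$, using that $\bsim$ is substitutive for the contexts at hand --- note that, since sums in CCS are guarded, a variable can never occur as a bare summand, so the problematic case for congruence of weak bisimilarity never arises. As a consequence, $E'\croc{\til E^n}[\til P] \bsim E'[\til P]$, and likewise for $\til Q$, for every \ee $E'$ and every $n$. Second, a closure fact: $E'\croc{\til E^n} \in \reds{\til E}$ whenever $E' \in \reds{\til E}$. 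This holds because $E_i^m\croc{\til E^n} = E_i^{m+n}$, so Lemma~\ref{l:context:lts}(\ref{it:l:exptrans}) transports the \See transitions witnessing $E' \in \red{E_i^m}$ into transitions out of $E_i^{m+n}$; and reducts of elements of $\reds{\til E}$ again lie in $\reds{\til E}$.

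The heart is the bisimulation game. Take $(S,T) \in \R$, witnessed by $E' \in \reds{\til E}$ with $S \bsim E'[\til P]$ and $E'[\til Q] \bsim T$, and suppose $S \arr\mu S'$. Transferring this move along $S \bsim E'[\til P]$ yields $E'[\til P] \Arr{\hat\mu} U$ with $S' \bsim U$. If $\mu=\tau$ and this matching transition is idle ($U=E'[\til P]$), I answer from $T$ by idleness too and $(S',T)\in\R$ at once. Otherwise $E'[\til P] \xRightarrow{\mu} U$ is a genuine weak transition, and here I invoke the hypothesis that $\til E$ \emph{protects its solutions}: it provides $E''$ and $n$ with $E'\croc{\til E^n} \xRightarrow{\hat\mu} E''$ and $E''[\til P] \bsim U$, whence $E'' \in \reds{\til E}$ by the closure fact. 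This is the pivotal step: it re-synchronises the arbitrary derivative $U$ into the structured shape $E''[\til P]$. I then instantiate the \See transition $E'\croc{\til E^n} \xRightarrow{\hat\mu} E''$ at $\til Q$ via Lemma~\ref{l:context:lts}(\ref{it:l:exptrans}), giving $E'\croc{\til E^n}[\til Q] \xRightarrow{\hat\mu} E''[\til Q]$, and push it through the two bisimilarities $E'[\til Q] \bsim E'\croc{\til E^n}[\til Q]$ (preliminary fact one) and $E'[\til Q] \bsim T$, to obtain $T \Arr{\hat\mu} T'$ with $T' \bsim E''[\til Q]$. As $S' \bsim E''[\til P]$ and $E'' \in \reds{\til E}$, we get $(S',T') \in {\bsim}\,\R_0\,{\bsim} = \R$. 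The clause for moves of $T$ is symmetric.

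I expect the delicacy to lie entirely in this game, and in two places. The methodological one is resisting the up-to-$\bsim$ formulation and instead verifying the composed relation directly; what makes this possible is precisely \emph{the protecting hypothesis}, which re-expresses each reachable derivative of $E'[\til P]$ as some $E''[\til P]$ with $E''$ again a reduct of the unfoldings, keeping $\R$ closed under transitions --- without it, $U$ could fall outside the range of $\R_0$ and the argument would collapse. The second is the bookkeeping of weak versus idle $\tau$-moves and of the hat on $\hat\mu$, so that the premise $E'[\til P]\xRightarrow{\mu}U$ of the protecting condition is appealed to only when the matching transition actually performs work. Once $\R$ is shown to be a bisimulation, $\R \subseteq {\bsim}$, and taking $E' = E_i$ gives $E_i[\til P] \bsim E_i[\til Q]$, hence $\til P \bsim \til Q$, as required.
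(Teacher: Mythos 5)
Your proof is correct and follows essentially the same route as the paper: the relation you call ${\bsim}\,\R_0\,{\bsim}$ is exactly the paper's candidate bisimulation $\{(S,T) \mid S\bsim E'[\til P],\ T\bsim E'[\til Q],\ E'\in\reds{\til E}\}$, and the bisimulation game is played identically (protection yields $E''$ and $n$, Lemma~\ref{l:context:lts}(\ref{it:l:exptrans}) instantiates the \See transition at $\til Q$, and the solution property plus congruence closes the diagram). Your explicit treatment of the idle-$\tau$ case and of the closure fact $E'' \in \reds{\til E}$ spells out details the paper leaves implicit, but introduces no new idea.
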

\begin{proof}
  Given two solutions $\til P$, $\til Q$ of the system of equations
   $\til X = \til E[\til X]$ 
  we prove that the relation
  $$\R\quad\Defi\quad \{(S,T)~|~ \exists E',\text{ s.t. }S\bsim\C {\til P},\,
  T\bsim\C {\til Q}\text{ and } E'\in \reds {\til E}\}$$
  is a bisimulation relation such that $\til P\R\til Q$. 

  We consider $(S,T)\in \R$, that is, 
  $S\bsim \C \tP$ and
  $T\bsim \C \tQ$, for some $E'\in\reds \tE$. If $S\arr{\mu}S'$, then by
  bisimilarity $\C \tP\xRightarrow{\hat\mu}S''\bsim S'$.
  Since $\tE$ protects its solutions, there are $n,~ {E''}$ such that
  $ E'\croc {\tE^n}\xRightarrow{\hat\mu}E''$ and $\Cp \tP\bsim
  S''$. 
  We can deduce by Lemma~\ref{l:context:lts}(\ref{it:l:exptrans}) that 
  $E'\croc {\tE^n [\tQ]}\xRightarrow{\hat\mu} {E''}[\tQ]$. Since $\tQ$ is a
  solution of $\tE$, we have $\tQ\bsim \tE^n[\tQ]$. This entails, as we know by
  hypothesis
  $T\bsim E'[\tQ]$, that 
  $T\bsim\C \tQ\bsim E' \croc{\tE^n [\tQ]}$. By bisimilarity, we deduce from
  $E'\croc {\tE^n [\tQ]}\xRightarrow{\hat\mu} {E''}[\tQ]$
  that
  $T\xRightarrow{\hat\mu} T'\bsim E'' [\tQ]$.

  The situation can be depicted on the following diagram:

\centerline{
\xymatrix @M=0.4pc @C=0pc 
{ S \ar@{->}[d]_{\mu} & \bsim & \C \tP \ar@{=>}[d]_{\hat\mu}
 & \bsim & E' \croc{\tE^n[\tP]}\ar@{=>}[d]_{\hat\mu}
 \ar @{} [drr] |{=}& & E'\croc{\tE^n[\tQ]}\ar@{=>}[d]_{\hat\mu}
 & \bsim & \C \tQ
 & \bsim & T\ar@{=>}[d]_{\hat\mu}
\\
S' & \bsim & S'' & \bsim & \Cp \tP & & \Cp \tQ & & \bsim & & T'
} }

Then, from $E'\in\reds \tE$ and $E'\croc {\tE^n}\Arr{\hat\mu}E''$, we
deduce that $E''\in\reds \tE$. Finally, $T'\bsim\Cp \tQ$ and $S'\bsim\Cp
\tP$ give that $(S,T)\in\R$.

We reason symmetrically for $T\xrightarrow\mu T'$.
\end{proof}



We say that the syntactic solutions of the system $\til E$ do not diverge if, for all $i\in I$, $\ssoli E i$ does not diverge.

\begin{theorem}[Unique solution]
\label{thm:roscoe_ccs}
A guarded system of equations 
whose  syntactic 
solutions 
 do not diverge has a unique solution for $\bsim$.
\end{theorem}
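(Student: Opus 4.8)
The plan is to deduce the theorem from Proposition~\ref{l:protects}: since a system that protects its solutions has a unique solution for $\bsim$, it suffices to show that a guarded system $\til E$ whose syntactic solutions $\ssols E$ do not diverge protects its solutions. So I fix a solution $\til P$, a reduct $E'\in\reds{\til E}$, and a transition $E'[\til P]\xRightarrow{\mu}Q$, and I must produce $n$ and an expression $E''$ with $E'\croc{\til E^n}\xRightarrow{\hat\mu}E''$ and $E''[\til P]\bsim Q$.

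First I would record two preliminary facts. (a) The non-divergence hypothesis is really a \emph{termination} statement on the syntactic side: for every $E'\in\reds{\til E}$ the process $E'[\ssols E]$ behaves like a reduct of the corresponding constant $\ssoli E i$ (constant unfolding is transition-transparent, and by Lemma~\ref{l:context:lts}(\ref{it:l:exptrans}) the expression transitions leading to $E'$ instantiate to $\ssols E$), hence $E'[\ssols E]$ admits no infinite sequence of $\tau$-transitions. (b) For every $n$ one has $\til P\bsim\til E^n[\til P]$ because $\til P$ is a solution, and $G_n\defi E'\croc{\til E^n}$ is guarded for $n\ge 1$ (every variable occurrence of $E'$, guarded or not, is replaced by the guarded expression $E_i^n$). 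Since the reducts in $\reds{\til E}$ place variables only under prefix, parallel composition and restriction --- operators for which $\bsim$ is a congruence --- I get $E'[\til P]\bsim G_n[\til P]$, and therefore $G_n[\til P]\xRightarrow{\hat\mu}Q_n$ for some $Q_n\bsim Q$.

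The core of the argument is to show that a \emph{finite} unfolding depth $n$ makes the expression part $G_n$ alone responsible for the action $\mu$ (and for the internal steps preceding it), up to $\bsim$. Guardedness of $G_n$ ensures, via the second clause of Lemma~\ref{l:context:lts}, that in $G_n[\til P]$ the components $\til P$ cannot move until an expression move has crossed a guard; in particular the first visible action is always produced by a guard of $G_n$. Each time $\til P$ does move it is matched only weakly and only up to $\bsim$ (through the solution property), which is precisely why one cannot track the given transition step by step in the syntactic world, but must instead push it onto the more deeply unfolded $G_n[\til P]$ and absorb each $\til P$-move --- either into expression $\tau$-moves (after a further unfolding) or, when matched in zero steps, discard it up to $\bsim$.

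The main obstacle is proving that this absorption \emph{terminates}, i.e.\ that some finite $n$ suffices, and this is exactly where non-divergence enters. I would argue by contradiction: if no finite $n$ let the expression $G_n$ realise $\xRightarrow{\hat\mu}$ up to $\bsim$, then for each $n$ the expression $G_n$ would be forced to perform a lengthening chain of internal expression moves before being able to expose $\mu$ --- as in the guarded but divergent equation $X=\res a(a.X | \outC a)$, where each extra unfolding only yields one more silent synchronisation. Being expression transitions of reducts in $\reds{\til E}$, these chains are coherent across successive unfoldings and, instantiated with $\ssols E$ via Lemma~\ref{l:context:lts}(\ref{it:l:exptrans}), combine into an infinite sequence of $\tau$-transitions of $E'[\ssols E]$, i.e.\ a divergence of the syntactic solution, contradicting the hypothesis. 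Hence a finite $n$ exists, yielding $G_n\xRightarrow{\hat\mu}E''$ with $E''[\til P]\bsim Q_n\bsim Q$, as required. The passage from a single equation to a system is then only a matter of indexing, all the constructions above being already stated for tuples.
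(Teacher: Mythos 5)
Your proposal is correct and follows essentially the same route as the paper's proof: reduce to Proposition~\ref{l:protects}, transfer the residual of the transition onto deeper unfoldings $E'\croc{\til E^n}$ via the solution property and congruence, use guardedness (Lemma~\ref{l:context:lts}(2), plus the fact that substituting the guarded bodies re-guards every variable) to extract at each stage a non-empty, strictly growing prefix of expression transitions, and obtain a divergence of the syntactic solution when this absorption never terminates. The only blemish is the inessential aside that ``the first visible action is always produced by a guard of $G_n$'' --- false in general, since after a $\tau$-guard of $G_n$ is crossed the components $\til P$ may well produce the visible action --- but your argument never actually relies on it.
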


\begin{proof}
To enhance readability, we first give the proof for a single equation $X=E[X]$. 
We discuss the generalisation to a system of equations at the end of the proof.

  Given a guarded \ee $E$, we prove that 
  $E$ protects the
solutions of its equations. 
To prove that, 
we need to consider a transition $E_0 [P]\xRightarrow{\mu} P'$, for
some $E_0\in \reds E$ and some solution $P$ of $E$.

We build a sequence of expressions $E_n$, and an increasing
sequence of transitions $ E_0 \croc{ E^n} \xRightarrow{\mu?}E_n$
(where $\xRightarrow{\mu?}\defi\Rightarrow\cup\xRightarrow{\hat\mu}$)
such that: either this construction stops, yielding a transition
$E_0\croc{E^n}\Arr{\mu}E_n$, or the construction is
infinite, therefore giving a divergence of $\ssol E$.

We build this sequence so that it additionally satisfies:
\begin{itemize}
\item Either we have
 $E_0 \croc {E^n} \xRightarrow{\hat\mu} E_n $ and $ E_n[P]\Rightarrow
  \bsim P'$,
 or $E_0 \croc{ E^n } \Rightarrow E_n $ and $E_n[P]\xRightarrow{\hat\mu}
  \bsim P'$.
\item The sequence is strictly increasing: the sequence of transitions
$E_0\croc{E^{n+1}}\Arr{\mu?}E_n[E]$ is a strict prefix of the sequence of transitions
$E_0\croc{E^{n+1}}\Arr{\mu?}E_{n+1}$
\end{itemize}
\iflong
This construction is illustrated by Figure~\ref{f:div_proof_ccs}. 
\fi
We start with the empty sequence from $E_0$. 

\medskip

Suppose therefore that at step $n$,  we have for example
$E_0 \croc {E^n[P]} \xRightarrow{\mu}E_n[P]\Rightarrow T_n$, with
$P'\bsim T_n$.
\begin{figure}[t]
\centerline{
\xymatrix @M=0.4pc @C=0pc @R=1.5pc 
{\ar @{} [drr] |{=}
 E_0 \croc{ E^n[P] } \ar@{=>}[d]_{\mu ?} & &  E_0 \croc{ E^{n+1}[P]}
 \ar@{=>}[d]_{\mu ?} & & = & E_0\croc{ E^{n+1}[P]}
 \ar@{=>}[dd]_{\mu ?}\\ 
 E_n[P]\ar@{=>}[dd]_{\mu ?} & \bsim &
 E_n[E[P]]\ar@{=>}[dd]_{\mu ?}  & = & E_n \croc {E[P]}
 \ar@{=>}_-{\mu ?} [d]  & *{} \\ 
*{} & *{}  & \ar@{..}[rr]  & *{} & *{} \ar@{..}[r] \ar@{=>}_-{\mu ?} [d]  & E_{n+1}[P]  \\
\save[] *[l]{P'~~\bsim ~~ T_n}\restore {\color{white}A}& \bsim & T_{n+1} & = & T_{n+1} &
} }
\caption{Recursion: construction of the sequence of transitions $E_0\croc{E^n[P]}\Arr{\mu?}E_n$}
\label{f:rec_proof_ccs}
\end{figure}
\iflong
\begin{figure}
\centerline{ \xy    
\xymatrix "M" @M=0.65pc @C=0pc 
{
  E_0 [P] \ar@{=>}[ddddd]_{\hat\mu }
& \bsim &\textcolor{cyan}{  E_0 [ E }[P]\textcolor{cyan}{]}\ar@{=>}[ddddd]_{\hat\mu }
& \bsim & \textcolor{cyan}{ E_0[ E^2} [P]\textcolor{cyan}{]}{\ar@{=>}@[cyan][d]_{\textcolor{cyan}{\mu? }}}
& \bsim &\textcolor{cyan}{  E_0 [ E^3}[P]\textcolor{cyan}{]}\ar@{=>}@[cyan][dd]_{\textcolor{cyan}{\mu? }}
& \bsim & \dots 
& \bsim  & \textcolor{cyan}{  E_0[ E^n}[P]\textcolor{cyan}{]}\ar@{=>}@[cyan][ddddd]_{\textcolor{cyan}{\mu? }}\\
& & *{}\ar@{..}[rrr]  & & *{}\ar @{=>}_{\mu?}[dddd] & & &  & & \\
& & & & *{}\ar@{..}[rrr]& &*{} \ar@{=>}[ddd]_{\mu?}& & &  \\
& & & &  & & & & & &\\
& & & & & & & & & &  \\
P'&\bsim & T_1 &\bsim & T_2 &\bsim & T_3 &\bsim &\dots &\bsim & \textcolor{cyan}{ {E_n} }[P]
 }\textcolor{cyan}{
  \POS"M2,5"."M3,5"!C*++L-U\frm{\{},
    \POS"M3,7"."M4,7"!C*+L+U-U\frm{\{},
      \POS"M1,3"."M2,3"+<-1ex,-7ex>*+<0ex,6ex>\frm{\{}}
 \endxy }
\caption{Construction of the \See transition}
 \label{f:div_proof_ccs}
\end{figure}
\fi
%
\begin{itemize}
\item If $E_n[P]\Rightarrow T_n$ is the empty sequence, we stop. We
  have in this case $E_0 \croc E^n \xRightarrow{\mu} E_n$ and
  $P'\bsim E_n[P]$. 
\item Otherwise (as depicted on Figure \ref{f:rec_proof_ccs}), we
  unfold further equation $E$: we have 
  $E_0\croc{E^{n+1}}\xRightarrow{\mu} E_n\croc E$ by Lemma
  \ref{l:context:lts}. By congruence and bisimilarity we have
  $E_n\croc {E [P]} \Rightarrow T_{n+1}\bsim P'$ for some
  $T_{n+1}$. If $E_n\croc {E [P]} \Rightarrow T_{n+1}$ is the empty
  sequence, we stop as previously. Otherwise, we take
  $E_n\croc E\Arr{\mu?}E_{n+1}$ to be the longest prefix sequence of
  transitions in $E_n\croc {E [P]} \Rightarrow T_{n+1}$ that are
  instances of \See transitions from $E_n\croc E$ (we remark that as $E_n\croc E$ is
  guarded, this sequence is not empty).
\end{itemize}

Suppose now that the construction given above never stops. 
We know that
$E_n\croc{E} \Arr{\mu?} E_{n+1}$, therefore
$E_n[\ssol E]\Arr{\mu?}E_{n+1}[\ssol E]$. This gives an infinite
sequence of transitions starting from $E_0[\ssol E]$:
$E_0[\ssol E]\Arr{\mu?}E_1[\ssol E]\Arr{\mu?}\dots$. We observe that
in the latter sequence, every step involves at least one transition,
and moreover, there is 
at most one visible action ($\mu$) occurring in this infinite
sequence.  Therefore $E_0[\ssol E]$ is divergent, which contradicts
the hypothesis of the theorem. Hence, the construction does stop, and
this concludes the proof.

\subsubsection*{Systems of equations.} 
To extend the previous proof 
to systems 
of equations, we consider solutions $\tP$ to the system $\til E$, and 
use unfoldings of the system of equations (instead of unfoldings of a single equation). 
We also reason about an initial transition from $E_0[\tP]$, where $E_0$ is a reduct of the unfoldings 
of one of the equations (i.e., $E_0\in\reds {\til E}$).
\end{proof}

\iflong
\begin{figure}[h]
\centerline{
\xymatrix @M=0.4pc @C=0pc @R=1.5pc 
{P \ar[ddd]_\mu &\bsim & E[P]\ar@{=>}[ddd]_{\hat\mu} & \bsim & E\croc{E[P]}\ar@{=>}[ddd]_{\hat\mu} & 
 \bsim & \dots &\bsim & E^n[P]\ar@{=>}[ddd]_{\hat\mu}& & E^n[\cdot]\ar@{=>}[ddd]_{\hat\mu}&& E^n[Q]\ar@{=>}[ddd]_{\hat\mu}&\bsim &Q\ar@{=>}[ddd]_{\hat\mu}\\
&&*{}\ar@{..}[rrrrrrrr] &&&&&&&&*{}&&*{}&&\\
&&&&*{}\ar@{..}[rrrrrr]&&&&&&*{}&&*{}&&\\
*{}&&*{}&&*{}&&&&*{}\ar@{..}[rrrr]&&*{}&&*{}&&*{}}}
\caption{Proof of Theorem \ref{thm:roscoe_ccs}, building a a
  transition of $E^n[Q]$ where $Q$ does not move}
\label{f:proof_ccs}
\end{figure}
\fi

\subsection{Innocuous Divergences}\label{s:innocuous}

 In the remainder of the section we refine Theorem~\ref{thm:roscoe_ccs}
 by taking into account only certain forms of divergence. 
To introduce the idea, 
consider 
 the equation $X=  a. X | K $,  for $K \Defi \tau. K$:
 the divergences  induced by
$K$
do not prevent uniqueness of the solution, as any  solution $P$
necessarily satisfies $P\bsim a.P$. 
Indeed the variable of the equation is strongly 
guarded and  a visible action has to be produced
before accessing the variable. 
These divergences are not dangerous because they
 do not percolate through the infinite unfolding of the equation; in other
words, { a finite unfolding may produce the same divergence, therefore } it is not necessary to go to the infinite unfolding to diverge.
We call such divergences \emph{innocuous}. Formally, these divergences
are derived by applying only a finite number of times  rule 
\const of the LTS 
 (see Figure~\ref{f:LTSCCS}) to the constant
that represents the syntactic solution of the equation.
Those divergences 
can be understood as instances, in the syntactic solution, of divergences 
of some finite unfolding of the equation: 
{Consider a (single) equation  $X=E[X]$; this way, any divergence of $E^n$ can be 
 transformed into an innocuous divergence of $K_E$. This idea is also
behind Lemma~\ref{l:inj} below.
}



This refinement fits well the operational approach we adopt to
 formalise the results; 
it looks less 
 natural in a  denotational or trace-based 
setting   for CSP like  in \cite{roscoe1,roscoe2}, where  any divergence causes a process to
be considered undefined.

\begin{definition}[Innocuous divergence] 
  Consider a guarded system of equations $\til{X}= \til{E}$ and its
  syntactic solutions $\ssols E$.  A
  divergence 
  of $\ssoli E i$ (for some $i$) is called \emph{innocuous} when, 
  summing up all usages of rule \const with one of the $\ssoli E j$s
  (including $j=i$) \emph{in all derivation proofs of the transitions
   belonging to the divergence}, we obtain a finite number.
\end{definition}

\begin{theorem}[Unique solution with innocuous divergences]
\label{thm:roscoe2_ccs}
Let $\til{X}=\til{E}$ be a system of guarded equations, and $\ssols E$
be its syntactic solutions. If all divergences of any $\ssoli E i$ are
innocuous,
then $ \til{E}$ has a unique solution for $\bsim$.
\end{theorem}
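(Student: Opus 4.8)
The plan is to mirror the proof of Theorem~\ref{thm:roscoe_ccs}: I would again establish that the system protects its solutions and then invoke Proposition~\ref{l:protects} to conclude that it has a unique solution for $\bsim$. As there, I treat a single equation $X=E[X]$ first and indicate the extension to systems at the end. So I fix a solution $P$, a reduct $E_0\in\reds E$, and a transition $E_0[P]\Arr{\mu}P'$, and I reuse verbatim the construction of Theorem~\ref{thm:roscoe_ccs}, building the strictly increasing sequence of expression transitions $E_0[E^n]\Arr{\mu?}E_n$ together with the invariant relating $E_n[P]$ to $P'$ modulo $\bsim$. Exactly as there, either the construction halts --- yielding $E_0[E^n]\Arr{\mu}E_n$ with $E_n[P]\bsim P'$, which is what ``protects its solutions'' requires --- or it runs forever and produces an infinite sequence $E_0[\ssol E]\Arr{\mu?}E_1[\ssol E]\Arr{\mu?}\cdots$, i.e.\ a divergence of $\ssol E$.

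The only change with respect to Theorem~\ref{thm:roscoe_ccs} is how I rule out the infinite case. There, the bare existence of this divergence contradicted the no-divergence hypothesis; here divergences are allowed, so I must instead show that the divergence just built is \emph{not} innocuous. Concretely, I would count the applications of rule \const to the constant $\ssol E$ occurring in the derivation proofs of the transitions of this divergence, and argue the count is infinite. The construction is strictly increasing --- at stage $n{+}1$ the equation is unfolded once more and the transition sequence reaching $E_{n+1}$ strictly extends the one reaching $E_n[E]$ --- so it passes through every finite unfolding level, and a fresh level can only be entered after exposing one more copy of $\ssol E$, that is, after at least one further application of \const. Guardedness is what makes this genuine: since every occurrence of the variable sits under a prefix, the $n$-th copy of $E$ in the infinite unfolding is shielded by at least $n$ prefixes, so no shortcut can expose infinitely many levels with only finitely many \const steps. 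Hence the divergence uses \const infinitely often, is therefore not innocuous, and contradicts the hypothesis; so the construction must halt. This dovetails with Lemma~\ref{l:inj}, which ties innocuous divergences of $\ssol E$ to divergences of the finite unfoldings $E^n$: an innocuous divergence lives inside some fixed $E^n$, whereas the divergence produced here outruns every $E^n$.

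The step I expect to be the main obstacle is exactly this counting argument: converting the informal slogan ``each strict extension costs one more \const'' into a rigorous claim. The subtlety is that a single transition $E_n[\ssol E]\Arr{\mu?}E_{n+1}[\ssol E]$ might unfold several copies of $\ssol E$ at once, and, conversely, a \const application need not by itself reveal a new level; so one must track, via Lemma~\ref{l:context:lts} and the notion of instance of an expression transition, that the \emph{cumulative} number of \const applications along the whole divergence is unbounded, with guardedness guaranteeing that progress through levels cannot be achieved for free. Granting this, the passage to a system $\til X=\til E$ is routine: one reasons with reducts $E_0\in\reds{\til E}$ of the unfoldings of the system and with the syntactic solutions $\ssols E$ as in the single-equation case, the innocuousness hypothesis now being imposed on each $\ssoli E i$.
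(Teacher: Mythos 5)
Your proposal is correct, and its skeleton is the paper's: reuse the construction of Theorem~\ref{thm:roscoe_ccs} to show the system protects its solutions (concluding via Proposition~\ref{l:protects}), and, when the construction does not halt, contradict the hypothesis by showing that the divergence of $\ssol E$ just built uses rule \const on $\ssol E$ infinitely often, hence is not innocuous. The one genuine difference is where the counting comes from. The paper \emph{changes the stopping condition}: the construction may halt as soon as the whole transition $E_0\croc{E^n[P]}\xRightarrow{\hat\mu}T_n$ is an instance of an expression transition of $E_0\croc{E^n}$, not only when the residual is empty; non-termination then says literally that for every $n$ the transition is not an instance at level $n$, from which the paper reads off that at least $n$ applications of \const to $\ssol E$ occur, for every $n$. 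You keep the original stopping rule and instead charge at least one \const application to each block $E_n\croc{E}\Arr{\mu?}E_{n+1}$. This per-block count is sound, but the justification is precisely the obstacle you flag, and it deserves to be spelled out rather than left as a slogan about ``shielding prefixes'': block transitions are instances, so hole contents never act during blocks and the constants sitting in the holes of $E_n$ remain folded; for the construction to continue, the next residual must begin with a transition in which a hole content of $E_{n+1}$ acts, so that hole must be unguarded; since $E$ is guarded, \emph{every} hole of $E_n\croc{E}$ is guarded, hence some transition of block $B_n$ fired the guarding prefix of a freshly substituted copy of $E$, and in the instantiation at $\ssol E$ that transition requires one application of \const to the (still folded) constant. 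With that argument made explicit your proof goes through and is essentially equivalent to the paper's; what the paper's relaxed stopping condition buys is exactly that this structural analysis of blocks becomes unnecessary, the counting being immediate from ``not an instance at level $n$''.
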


\begin{proof}
  We reason like in the proof of Theorem~\ref{thm:roscoe_ccs}.  We
  explain the difference, in the case where we have a single
  equation  
  (bearing in mind that the
  generalisation discussed at the end of the proof of
  Theorem~\ref{thm:roscoe_ccs} can be carried over accordingly).

Along the
construction in that proof,
if at some point the transition $E_0 \croc{ E^n[P]}
\xRightarrow{\hat\mu}T_{n} $
 is an instance of an \See transition $E_0 \croc{ E^n}
\xRightarrow{\hat\mu} E_n $, 
 then the construction can stop.

 Consequently, if the construction never stops, we build a
 non innocuous divergence: we can assume that for any $n$,
 $E_0\croc {E^n [P]}\xRightarrow{\hat\mu}T_n$ is not an instance of an
 \See transition from $E_0\croc {E^n}$. This means that in this
 sequence of transitions, the LTS rule \const is used 
 at least $n$ times applied to the constant $\ssol E$.  
 Hence,  the divergence we build uses at least $n$ times
 the rule \const applied to the constant $\ssol E$, for all $n$.
 Therefore, 
 the divergence is not innocuous, which is a
 contradiction. 
\end{proof}

\begin{remark}\label{r:altguard}
The conditions for unique solution in Theorems~\ref{thm:roscoe_ccs} and~\ref{thm:roscoe2_ccs} 
combine 
syntactic 
(guardedness)
and semantic  (divergence-free) conditions.
A purely semantic condition can be used if 
rule 
\const of Figure~\ref{f:LTSCCS} is modified so that the unfolding of a
constant yields a $\tau$-transition: 
$$\overline{K\arr\tau P} ^{\mbox{ if }K\Defi P}$$
Thus in the theorems the condition imposing guardedness of the
equations 
could be dropped.  The resulting theorems
would actually be more powerful because they would
accept equations which are not syntactically guarded: 
it is sufficient that each equation has a finite unfolding which 
is guarded. For instance the system of equations 
$X = b| Y$, $Y = a.X$ would be accepted, although the first
equation is not guarded.
\end{remark}

\begin{remark}\label{r:complete}
Even taking innocuous divergences into account, our criterion for 
equations to have a unique solution is not complete. Indeed, the following 
equation 
$$X=\inpC a .X+\outC a .X+\inpC d .(X|X)$$ 
has a non-innocuous divergence: its syntactic solution, $K$, has the
transition $K\arr d K|K$; then, $K|K$ diverges by unfolding on both
sides at every step. However, the equation has a unique solution,
intuitively because the prefix $d$ acts as a strong guard, that does
not take part in the divergence. This could hint at further
developments of our theory.

Note moreover that this equation is  
non-linear, in the sense that there are occurrences of the same variable $X$ in 
parallel. In practice, such examples are not very useful;  for linear
equations, we were 
not able to find  counter-examples.
\end{remark}

\subsection{Conditions for unique solution}

The following lemma states a condition  to ensure that all divergences
produced by a system of equations are innocuous.
{This condition is decidable, but weak. However, in practice, it is 
often satisfied;} 
it is in particular sufficient 
in all examples in the paper.

\begin{lemma}
\label{l:criterion:div}
Consider a system of equations $\til{X} =\til{E}$, and suppose that for each $i$ there is
$n_i$ such that in $E_i^{n_i}$, each variable is underneath a visible
prefix (say, $a$ or $\outC a$) whose complementary prefix ($\outC a$
or $a$) 
does not appear in any equation. Then the system has only innocuous
divergences.
\end{lemma}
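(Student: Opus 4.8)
The plan is to argue by contradiction: assuming the system admits a \emph{non-innocuous} divergence, I extract from it an infinite sequence of visible transitions, contradicting the very definition of divergence (which forces all but finitely many labels to be $\tau$).

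First I would record the basic consequence of the hypothesis. Since the complementary prefix $\outC a$ of any guarding prefix $a$ does not occur in any equation, it occurs in none of the syntactic solutions $\ssoli E j$; and as every term reachable from a syntactic solution is built only from the equation bodies, introducing no fresh names, $\outC a$ never occurs in any term along the divergence. Hence such a prefix $a$ can never be consumed by a synchronisation (rule \ttt{comL} is inapplicable), so the only transition consuming it is an actual visible $a$-transition via rule \ttt{sum}; in particular, whenever such a guard is passed, the overall transition carries the visible label $a$ (and if $a$ sits under a restriction $\res a$, the guard cannot be passed at all, which only strengthens the argument). Call these the \emph{safe guards}.

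Next I would organise the applications of rule \const to the syntactic-solution constants into a tree. The divergence starts from a single constant $\ssoli E i$, and every occurrence of a syntactic-solution constant in a later term was produced by unfolding a variable occurrence of some body $E_j$; tracing this ancestry yields a tree whose nodes are the unfolding events of the constants $\ssoli E j$ and whose root is the initial unfolding of $\ssoli E i$. By definition the divergence is non-innocuous precisely when this tree is infinite. I would then check that the tree is finitely branching: along a single run each sum fires at most once and commits to one branch, and each prefix fires at most once, so one copy of a body $E_j$---even if $E_j$ contains infinite sums---exposes only finitely many constants that are ever brought to head position and unfolded. König's Lemma then yields an infinite branch, i.e.\ an infinite lineage of unfoldings $\ssoli E {j_0}\to\ssoli E {j_1}\to\cdots$, where each $X_{j_{m+1}}$ is a variable occurrence of $E_{j_m}$. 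Reading this lineage through the iterated unfoldings, the segment $j_0\to\cdots\to j_{n_{j_0}}$ is exactly a variable occurrence at depth $n_{j_0}$ in $E_{j_0}^{n_{j_0}}$, which by hypothesis lies beneath a safe guard; since $\ssoli E {j_{n_{j_0}}}$ is actually unfolded, it reaches head position, so that safe guard is consumed, contributing---by the first step---a visible action. Restarting from $\ssoli E {j_{n_{j_0}}}$ with its own bound, and iterating, every finite block of the infinite lineage forces a further safe guard to be crossed, hence a further visible action; this gives infinitely many visible transitions, the desired contradiction.

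I expect the delicate point to be the finite-branching claim together with the precise matching between an initial segment of the lineage and a deep variable occurrence of $E_{j_0}^{n_{j_0}}$: one must ensure that the safe guard guaranteed inside $E_{j_0}^{n_{j_0}}$ really sits \emph{above} the corresponding constant in the running term, so that firing it is forced rather than discarded by some sum choice, and that infinite sums do not secretly introduce infinite branching. Lemma~\ref{l:context:lts} is what justifies that the structure of $E_{j_0}^{n_{j_0}}$ is faithfully reflected in the transitions of the instantiated term, and I would lean on it to make this block argument rigorous. The statement then follows, since finiteness of the tree is exactly innocuousness of the divergence.
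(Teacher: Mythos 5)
Your proof is correct and rests on exactly the same observation as the paper's (very terse) proof: a guarding prefix whose complement occurs nowhere in the system can never take part in a $\tau$-synchronisation, so it can only fire as a visible action, and hence cannot be crossed infinitely often along a divergence, bounding the number of uses of rule \const. The paper states this causal chain in three sentences; your ancestry tree, finite-branching check, K\"onig's Lemma, and block decomposition are a faithful rigorous elaboration of that same argument rather than a different route.
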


\begin{proof}
The condition ensures that 
at least one of the 
 prefix occurring above the equation variables 
may never take part in a $\tau$ action. This entails that this prefix
cannot be triggered along an infinite sequence of $\tau$ steps. 
Hence a divergence of the unique solution of such a system 
of equations may not require the rule \texttt{const} to be used an 
infinite number of times.
\end{proof}



\subsection{Comparison with other techniques}

We now compare our technique with two alternative approaches: unique
solution of contractions, and enhancements of bisimulation.

\subsubsection{An example (lazy and eager servers), and comparison
  with contractions} 
\label{s:contractions}

We now 
show
an example of application of our technique, 
taken from~\cite{popl15}.
The example also illustrates the relative strengths of the two
unique solution theorems (Theorems~\ref{thm:roscoe_ccs} 
and~\ref{thm:roscoe2_ccs}),
\iflong
 and a few aspects of the comparison
with other bisimulation techniques.  
\fi


For the sake of readability, 
we use a version of CCS with value passing; this could be
translated into pure CCS~\cite{Mil89}.
In a value-passing calculus, $a(x).P$ is an input at $a$ in which $x$ is the
placeholder for the value received, whereas $\out a n.P$ is an output
at $a$ of the value $n$; and  $\app A n $ is a parametrised constant.
This example consists of two implementations of a server; 
this server, when interrogated by
clients at a channel $c$,  should start
 a certain interaction protocol with the client,
 after consulting an auxiliary server
 $A$ at $a$.

We consider the two following implementations of this server:
the first one, $L$, is `lazy', and
consults $A$ only \emph{after} a request from a client has been
received.  In contrast, the other one, $E$, is `eager', and 
 consults $A$ \emph{beforehand},
so  then to be  ready in answering a client: 
$$
\begin{array}{crcl}
\begin{array}{rcl}
L  & \Defi &  c(z). a(x). (L |  \app R {x,z}) \\ 
  E  & \Defi & a(x). c(z). (E |  \app R {x,z} )
\end{array}
  &
~\qquad 
\app A n & \Defi &  \out a n . \app A {n+1} 
\end{array}
 $$ 
Here $ \app R {x,z}$ represents the interaction protocol that is
started with a client, and can  be any process. It may use the
values $x$ and $z$ (obtained from the client and the auxiliary server
$A$);  the interactions produced may indeed depend  on the
values $x$ and $z$. 
We assume for now that  $ \app R {x,z}$ may not use 
channel $c$ and $a$; that is, the interaction protocol that has been
spawned need not come back to the main server or to the auxiliary server.
Moreover we assume $R$ may not diverge.
%
We want to prove that 
the  two servers, when composed with $A$, yield bisimilar
processes. We thus define
$
\app \SL n    \,\Defi\,  \res a (\app A n| L)$ and
$\app \SE n \,  \Defi\,  \res a (\app A n | E)$.
A proof that 
$\app \SL n \wb \app \SE n$
 using the plain bisimulation proof method would be 
long and tedious,  due to the differences between the lazy and the
eager server, and to the fact that $R$ is nearly an arbitrary process. 

The paper~\cite{popl15} presents two proofs of this equivalence. One proof
makes use of the `bisimulation up-to expansion and context' technique;
this makes it possible to carry out a proof using a single pair of
processes for each integer $n$.  A proof of similar size uses the
technique of `unique solution of contractions', by establishing, with
the help of a few simple algebraic laws, that {$\{\app\SL n\}_n$} and
{$\{\app\SE n\}_n$} are solutions of the same system of contractions.

%
We can also build a proof using the technique of `unique solution of
equations'. Milner's original version of this technique cannot be
used, because the equations make use of operators other than just
prefix and sum.
%
In contrast, Theorem~\ref{thm:roscoe_ccs} can be applied.
The  equations are: 
$\{ X_n = \inp c z.    ( X_{n+1}  |  \app R {n,z}) \}_n$.
%
The proofs that the two servers are solutions can be  carried out
using a few  algebraic laws: expansion law, 
structural laws for parallel composition and restriction, 
one $\tau$-law. It is essentially the same proof as that for 
unique solution of contractions.

To apply Theorem~\ref{thm:roscoe_ccs}, we also need to check  that the equations
do not produce divergences. This check is straightforward, as  no silent
move may be produced by interactions along $c$, and any two internal
communications at $a$ are separated by a visible input at $c$.
Moreover, by assumption,  the protocol  $R$ does not produce internal
divergences.

If on the other hand the hypothesis that $R$ may not diverge is lifted, then
Theorem~\ref{thm:roscoe_ccs} is not applicable anymore, and divergences
are possible.  However, such divergences are 
innocuous: the equation need not be unfolded an infinite number
of times for the divergence to occur.
We can therefore still prove the result, by appealing to the more
powerful Theorem~\ref{thm:roscoe2_ccs}.

We can relax the definition of $R$ even further and allow 
calls back to the main server from $R$ itself. 
In this case, interactions between $R$ and the main server (eager or
lazy) may yield divergences (for instance setting $R \defi \outC c$). 
Such divergences need not be innocuous, as intuitively they require
 infinitely many unfolding  of the body of an equation. 
Thus now even Theorem~\ref{thm:roscoe2_ccs} is not applicable. 
(More precisely, for Theorem~\ref{thm:roscoe2_ccs} to fail
$\app R {n,z}$, for each $n$ and $z$, should have 
the possibility of performing an output at $c$ as  first visible
transition.)


This becomes therefore an example in which the `unique solution of
contraction' technique is more powerful, as such technique does not
rely on conditions about divergence and is therefore applicable.



\subsubsection{Comparison with up-to techniques}
\label{s:up-to}

Milner's syntactic condition for unique solution of equations
essentially allows only equations in which variables are underneath
prefixes and sums. The technique is not complete \cite{popl15}; for
instance it cannot handle the server example of Section~\ref{s:contractions}.

The  technique of `unique solution of contractions' 
\cite{popl15} relies on the theory of an
auxiliary preorder (contraction), needed to establish the meaning of
`solution'; and the soundness theorems in \cite{popl15} use a purely
syntactic condition (guarded variables). 
In contrast, our techniques with equations 
 do not  rely on  auxiliary relations and their theory, but the
soundness theorems use a semantic condition (divergence)
, see also
Remark~\ref{r:altguard}).

The two  techniques are incomparable. Considering the server example of
Section~\ref{s:contractions}, the contraction technique is capable of handling
also the case in which the protocol $R$ is freely allowed to make 
 calls back to the main
server, including the possibility that, in doing this, infinitely many
copies of $R$ are spawned. This possibility is disallowed for us, as
it would correspond to a non-innocuous divergence. 
On the other hand, when using contraction, 
a solution is evaluated with respect to the contraction preorder, that
conveys an idea of efficiency (measured against the number of silent
transitions performed). Thus,  while two bisimilar processes are
solutions of exactly the same set of equations, they need not be
solutions of the same  contractions. For instance,
 we can use our techniques to
  prove that processes $K \Defi \tau. a . a . K$ and $H \Defi a. H$
  are bisimilar because  solutions of the equation $X = a.X$;
  in contrast, 
only $H$ is  a solution of the corresponding contraction.

\subsubsection*{Up-to techniques.}
We compare our unique-solution techniques with one of the most
powerful forms of enhancement of the  bisimulation proof method,
namely Pous `up to transitivity and context' technique~\cite{damien_phd}. 
This technique   allows one to use `up to 
weak bisimilarity', `up to transitivity', and `up to context' techniques together.
While `up to weak bisimilarity' and `up to transitivity' are known to be
unsound techniques~\cite{pous:sangiorgi:upto}, here 
they are  combined safely thanks to 
a `control
relation', written below $\succ$, which satisfies a termination hypothesis.
\iflong
This control relation is used to make sure there
is no cyclic or infinite sequence of $\tau$ transitions used to hide other potential
visible transitions. 
{In that regard, this condition is very similar to the non-divergence hypothesis 
of Theorem~\ref{thm:roscoe2_ccs}, as can be seen in the proofs below.} 
\fi

 Formally, contexts are processes that can contain holes,
indexed by some finite set of integers. Therefore, contexts are like \ees,
except that numbered holes are used instead of variable names. In this
section, we will switch freely between \ees and contexts, keeping in
mind that 
contexts
are needed to study up-to techniques while \ees are needed for unique
solution of equations.

We write $\mathcal{C}(\R)$ for the context closure of a relation $\R$,
defined as the set of all pairs $(\ct{\til P},\ct{\til Q} )$ with
$\til P \RR \til Q$. Moreover, $\overline\R$ stands for
$(\bsim\cup\,\mathcal{C}(\R))$, and $\R^+$ for the transitive closure
of $\R$.

\begin{definition}
\label{d:pous:technique}
Let $\succ$  be a  relation that is 
  transitive,  closed under contexts, and such that 
  $\succ(\arr\tau^+)$  terminates. 
 A relation $\R$ is a
  \emph{bisimulation up to  $\succ$ and context} if, whenever $P\RR
  Q$:
\begin{enumerate}
\item  if $P\arr\mu P'$ then
  $Q\Arr{\hat\mu} Q'$ for some $Q'$ with $P'\: \mathrel{(\succ\protect{\cap}\, \overline\R
  )^+ \;\: \mathcal{C}(\R)\; \bsim }\:Q'$;
  
\item the converse on the transitions  from  $Q$.
\end{enumerate}
\end{definition}

\iflong
We refer to~\cite{damien_phd} for more details on this
up-to technique, and for the the proof of its soundness.

We remark that in~\cite{damien_phd}, instead of the transitive closure
$(\succ\cap~\overline{\R})^+$ in Definition~\ref{d:pous:technique}, we
have 
a \emph{reflexive}   and transitive closure.  We do not know if
relaxing this technical condition breaks
Theorem~\ref{thm:upto_equa_complete} below.

\fi

\medskip

We introduce some notations in order to state the correspondence
between the technique introduced in Definition~\ref{d:pous:technique}
and systems of equations.

If $\R $ is a relation, then we can also view $\R$ as an ordered
sequence of pairs (e.g., assuming some lexicographical  ordering). 
Then $\R_i$ indicates the tuple obtained by
projecting the   pairs in $\R$ on the $i$-th component ($i=1,2$).

In the following statement, the \emph{size} of a relation is the
number of pairs it contains, and the size of a system of equations is
the number of equations it consists of.

\begin{theorem}[Completeness with respect to up-to techniques]
\label{thm:upto_equa_complete}
Suppose  $\R$ is   a bisimulation up to 
  $\succ$ and context. Then there exists a 
 guarded system of equations, 
 with only
 innocuous divergences, that admits $\R_1$ and $\R_2$ as solutions.
 Moreover, this system has the same size as $\R$.
\end{theorem}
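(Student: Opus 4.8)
The plan is to construct, from the up-to relation $\R$, a system of equations indexed by the pairs of $\R$, with one equation $X_{(P,Q)}$ for each pair $(P,Q)\in\R$, and then show that $\R_1$ and $\R_2$ are both solutions while the system is guarded and has only innocuous divergences. The key idea is that the body of the equation for the pair $(P,Q)$ should mimic the transitions of $P$ (equivalently of $Q$, since they are related), with each derivative $P'$ reached via $P\arr\mu P'$ being matched by $Q\Arr{\hat\mu}Q'$ so that $P'\,(\succ\cap\,\overline\R)^+\,\mathcal{C}(\R)\,\bsim\,Q'$. The up-to condition tells us precisely that $P'$ and $Q'$ are related through a chain ending in a context filled with pairs of $\R$; the plan is to read off from this chain an \emph{expression} $E_{(P,Q)}$ whose holes are the behavioural variables $X_{(P',Q')}$ corresponding to the pairs appearing in the context $\mathcal{C}(\R)$. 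Concretely, I would set
\[
E_{(P,Q)}\;\defi\;\sum_{P\arr\mu P'}\mu.\,C_{(P,Q),\mu}[\til X]\,,
\]
where $C_{(P,Q),\mu}$ is the context extracted from the up-to matching of that particular transition, its holes filled with the variables indexing the relevant pairs of $\R$. Guarding each summand by the prefix $\mu$ ensures the system is guarded.

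First I would verify that $\R_1=\til P$ is a solution: this amounts to checking $P\bsim E_{(P,Q)}[\til P]$ for each pair, which follows because $E_{(P,Q)}[\til P]$ exhibits exactly the transitions $P\arr\mu$ into the processes $C_{(P,Q),\mu}[\til P]$, and each such process is bisimilar to $P'$ by construction of the context (the $\bsim$ at the tail of the up-to chain is absorbed here). The symmetric check for $\R_2=\til Q$ uses the $Q\Arr{\hat\mu}Q'$ side together with the fact that prefixing by $\mu$ and the weak transitions compose correctly up to $\bsim$. Both verifications rely only on the up-to matching clause of Definition~\ref{d:pous:technique} and on elementary bisimulation reasoning about prefixed sums. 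The size claim is immediate, since there is exactly one equation per pair of $\R$.

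The main obstacle, as I see it, is establishing that the system has only \emph{innocuous} divergences, and this is where the termination hypothesis on $\succ$ must be put to work. A divergence of a syntactic solution $\ssoli E {(P,Q)}$ that unfolds the constants infinitely often would correspond to an infinite chain of pairs $(P_0,Q_0),(P_1,Q_1),\dots$ linked through the contexts appearing in the bodies, hence through the relation $\succ\cap\,\overline\R$ at each step (an unfolding crosses a hole, which sits inside the context produced by the up-to matching, and the chain $(\succ\cap\,\overline\R)^+$ records exactly such a step together with at least one $\tau$-transition). The plan is to show that an infinite sequence of constant-unfoldings would yield an infinite $\succ(\arr\tau^+)$ sequence, contradicting the termination of $\succ(\arr\tau^+)$; thus every divergence can only use rule \const finitely often and is therefore innocuous. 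Making this correspondence precise — matching each use of \const in the divergence with a genuine step of the terminating relation, and accounting for the prefixes $\mu$ guarding the variables so that the visible actions do not disrupt the counting — is the technical heart of the argument, and I expect it to require a careful bookkeeping lemma relating unfoldings of the constructed constants to iterations of the up-to functional. Once that is in place, Theorem~\ref{thm:roscoe2_ccs} applies and the uniqueness follows, while the solution claims give $\R_1$ and $\R_2$ as the two solutions.
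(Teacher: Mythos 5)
Your construction is identical to the paper's: one guarded-sum equation per pair of $\R$, with bodies $\sum_j \mu_{i,j}.C_{i,j}[\til X]$ read off from the up-to matching, the size claim being immediate, and non-innocuous divergences refuted by converting an infinite sequence of constant unfoldings into an infinite $\succ(\arr\tau^+)$ sequence. However, there is a genuine gap in your verification that $\R_1$ and $\R_2$ are solutions. You claim that $C_{(P,Q),\mu}[\til P]\bsim P'$ holds ``by construction of the context'', with ``the $\bsim$ at the tail of the up-to chain absorbed here''. That is not what the up-to clause gives you: the chain is $P'\; (\succ\cap\,\overline\R)^+ \; C[\til P]\;\mathcal{C}(\R)\;C[\til Q]\;\bsim\;Q'$, so the $\bsim$ sits between $C[\til Q]$ and $Q'$, not on the $P$-side; what relates $P'$ to $C[\til P]$ is only $(\succ\cap\,\overline\R)^+$, which is not contained in $\bsim$ by construction. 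The missing ingredient is the \emph{soundness} of Pous's technique: because $\succ(\arr\tau^+)$ terminates, the results of~\cite{damien_phd} give $\R\subseteq\bsim$, hence $\mathcal{C}(\R)\subseteq\bsim$ by congruence, hence $(\succ\cap\,\overline\R)^+\subseteq\bsim$; only then do you get $P'_{i,j}\bsim C_{i,j}[\til P]$ and thus $P_i\bsim E_i[\til P]$. This is a second, independent use of the termination hypothesis, which your proposal uses only in the divergence argument.

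The same omission undermines your ``symmetric check'' for $\R_2$. The equation bodies enumerate the transitions of the $P_i$'s; the up-to clause says nothing directly about \emph{all} transitions of $Q_i$, only how $Q_i$ answers those of $P_i$, so establishing $Q_i\bsim E_i[\til Q]$ by analysing $Q_i$'s own moves and ``composing weak transitions up to $\bsim$'' does not go through. The correct (and simpler) route, which the paper takes, is again via soundness: from $\R\subseteq\bsim$ one gets $\til P\bsim\til Q$, and then $\til Q$ is a solution because the bodies respect $\bsim$: $Q_i\bsim P_i\bsim E_i[\til P]\bsim E_i[\til Q]$. Your plan for the innocuous-divergence part is sound and matches the paper's proof; there one indeed only needs $(\succ\cap\,\overline\R)^+\subseteq\,\succ$ (transitivity of $\succ$) together with closure of $\succ$ under contexts, not soundness. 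It is the two solution claims that need repair.
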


\begin{proof}

  Suppose $\R=\{(P_i,Q_i)\}_{i\in I}$ is a bisimulation up to $\succ$
  and context.   
  
  We index the transitions of $P_i$ from $1$ to
  $n_i$, 
  and write $P_i\arr{\mu _{i,j}}P'_{i,j}$ for $1\leq j\leq n_i$.

  Then, by Definition~\ref{d:pous:technique}, 
  for $1\leq j\leq n_i$, 
  there exist $C_{i,j}$ and $Q'_{i,j}$ such that we have the following diagram:

\centerline{
\xymatrix @C=0pc {
P_i \ar[d]_{\mu _{i,j}} &&& \R &&& Q_i \ar@{=>}^{\hat{\mu} _{i,j}}[d]\\
P'_{i,j} & (\succ\cap(\bsim\cup\protect{\mathcal{C}(\R)}))^+ & C_{i,j}[\til{P}]
& \mathcal{C}(\R) & C_{i,j}[\til{Q}]&  \bsim & Q'_{i,j}  
}}

We define the system of equations $\til X =\til E$ by setting
$\forall i\in I,~E_i= \sum_{j=1}^{n_i}\mu _{i,j} .C_{i,j}[\til{X}]
$. 
By construction, this system and $\R$ have the same size.

We prove
that 
$\til{P}$
is a solution of $\til
X =\til{E}$. 
We have that $P_i\sim \sum _{j=1}^{n_i}\mu _{i,j}
.P'_{i,j}$. Moreover, by the results in~\cite{damien_phd}, since $\succ(\arr\tau^+)$
terminates, $\R\subseteq\bsim$, which implies 
$(\succ\cap(\bsim\cup\mathcal{C}(\R)))^+\subseteq\protect{\bsim}$. 
Therefore, $P'_{i,j}\bsim
C_{i,j}[\til{P}]$, and $P_i\bsim\sum_{j=1}^{n_i} \mu _{i,j}.
C_{i,j}[\til{P}]$. This shows that $\til{P}$ is a solution of $\til X
=\til{E}$. 

Since $\R\subseteq\bsim$, we have $\til{P}\bsim\til{Q}$, hence
$\til{Q}$ is also a solution of the system of equations.

\medskip

It is left to prove that if for some $i$, $\ssoli E i$ has a
non-innocuous divergence, then $\succ(\arr\tau^+)$ does not terminate.
As this would be contradictory, we will be able to apply
Theorem~\ref{thm:roscoe2_ccs} to finish the proof.


Assume that
$\ssoli E k\arr{\mu_1}\dots\arr{\mu_n}F_0[\ssols E]\arr\tau\arr\tau\dots$,
and that this divergence is not innocuous. 
Based on this divergence, we build a sequence of \ees $F_n$ and $F_n'$ such that: 
$$F_0[\ssols E]\arr\tau^* F'_0[\ssols E]\arr\tau F_1 [\ssols
E]\arr\tau^* F_1'[\ssols E]\arr\tau F_2[\ssols E] \dots$$
In the above divergence, we have that for all $n$, $F_n\arr\tau^* F_n'$ (these are \See transitions) and $F_n'[\ssols E]\arr\tau 
F_{n+1}[\ssols E]$; moreover, the latter transition is not an instance of an \See transition from $F_n'$ 
(meaning that $\ssols E$ contributes to the transition). We will then show 
that $F_n'[\til P]\arr\tau \succ F_{n+1}[\til P]$. Therefore $\succ (\arr\tau^+)$ 
does not terminate.

$F_0$ is already given. 
Assume then that we have $F_n$ such that there is a non innocuous divergence from $F_n[\ssols E]$. 
This entails that there exists $F_n'$ such that
$F_n[\ssols E]\arr\tau^* F_n'[\ssols E]$,
$F_n'[\ssols E]\arr\tau$, and the latter transition is not an instance of 
an \See 
transition from $F_n'$. 
Without loss of generality, we can assume that 
the transitions in $F_n[\ssols E]\arr\tau^* F_n'[\ssols E]$ are 
instances of \See transitions.  

The fact that transition $F_n'[\ssols E]\arr\tau$ is not an instance
of an \See transition means that at least one equation expression
$E_i$ is involved in it. 
For readability, we assume that this transition is of the form
$F_n'[\ssols E]\arr\tau F_n'[(\ssoli E k)_{k<i},Q,(\ssoli E k)_{k>i}]$
for some $k$ and some $Q$,  with $\ssoli E i\arr\tau Q$ (i.e., in
$F_n'$ 
there is only one copy
of $\ssoli E i$, and the transition involves only $\ssoli E
i$).
The reasoning below can be adapted to cases where the $\tau$-transition involves a more complex synchronisation, and/or several copies
of $\ssoli E i$.

Since
$E_i=\sum_j \mu _{i,j}.C_{i,j}[\til X]$, there is $j$ such that
$\tau =\mu _{i,j}$ and $Q=C_{i,j}[\ssols E]$.  We then fix $F_{n+1}$
to be $F_n'\subst{C_{i,j}[\til X]}{X_i}$.  Indeed we have
$F_n'[\ssols E]\arr\tau F_{n+1}[\ssols E]$.

We have observed that $P_i\sim \sum_j \mu _{i,j}.P'_{i,j}$,
and $E_i=\sum_j \mu _{i,j}.C_{i,j}[\til X]$, hence any transition from
$E_i$ can be mimicked by a transition of $P_i$. Therefore we have that
$P_i\arr\tau P'_{i,j}$ 
and
$F_n'[\til P]\arr\tau F_n'[(P_k)_{k<i},P'_{i,j},(P_k)_{k>i}]$.  We
have that $P'_{i,j}\succ^+ C_{i,j}[\til{P}]$ by construction, hence,
since $\succ$ is transitive, $P'_{i,j}\succ C_{i,j}[\til{P}]$;
finally, $\succ$ is closed by contexts, hence
$$F_n'[(P_k)_{k<i},P'_{i,j},(P_k)_{k>i}]\succ F_n'\croc{(P_k)_{k<i}
  ,C_{i,j}[\til{P}],(P_k)_{k>i}} = F_{n+1}[\til P]\enspace.$$

The sequence of \ees $F_n$ and $F_n'$ has thus been defined in such a
way that $F_n\arr\tau^*F_n'$, hence
$F_n[\til P]\arr\tau^* F_n'[\til P]$. We also have
$F_n'[\til P]\arr\tau \succ F_{n+1}[\til P]$, thus 
$F_n[\til P](\arr\tau^+)\succ F_{n+1}[\til P]$. This is an infinite sequence
for $(\arr\tau^+)\succ$, hence we also have non-termination for $\succ(\arr\tau^+)$.  This yields a contradiction.
\end{proof}

Theorem~\ref{thm:roscoe2_ccs} is essential to
establish Theorem~\ref{thm:upto_equa_complete}; 
Theorem~\ref{thm:roscoe_ccs} would be insufficient.

\ifthese
\adrien{Deplacer cette remarque a qui de droit.}
\begin{remark}
The above proof shows how to build a system of equations starting from a pair of processes, 
by relying on the so-called \emph{expansion law}~\cite{Mil89}.
Using this law, one can use guarded sum to express the immediate transitions of
a process. Accordingly, we can turn a process into a 
system of equations, each consisting of 
a guarded sum where the continuations of prefixes are equation
variables.

This form of system of equations
is sometimes called a specification in the literature. Such a specification can then be used to prove bisimilarity results, using 
 unique solution of equations. Milner's unique solution
 theorem (Theorem~\ref{t:Mil89}) can be sufficient in such a situation, as
 long as $\tau$ prefixes do not appear in the specification.
\end{remark}\fi
%
%

 
\section{Abstract Formulation}
\label{s:generalisation}
\subsection{Abstract LTS and \Fcts}
\label{s:lts:ops}

\newcommand{\Tlts}{\ensuremath{\mathcal T}}

In this section we propose generalisations of the unique-solution
 theorems. For this we introduce abstract formulations of them, 
which are meant to highlight their essential ingredients. 
When  instantiated to the
specific case of CCS,
such abstract formulations yield the theorems in Section~\ref{s:mt}.
 The proofs are adapted from those of the corresponding
  theorems in Section~\ref{s:mt}. The results of
this section, up to Theorem \ref{thm:roscoe_abstract}, have been formalised
in Coq theorem prover~\cite{www:usol:coq}.

The abstract formulation is stated  on 
a generic LTS, that is, a triple 
 $\Tlts~=~(S,\Lambda,\rightarrow)$ where: 
 $S$ is the set of states;  $\Lambda$ the set of action labels, containing
the  special label $\tau$
accounting for silent actions;  $\rightarrow$ is the transition
relation. As usual, we write $s_1\arr\mu s_2$
when ${(s_1,\mu,s_2)}\in{\rightarrow}$. 
The definition of weak bisimilarity $\bsim$ 
is
 as in Section~\ref{s:back}. We omit explicit reference to \Tlts{}
 when there is no ambiguity.

We reason about \emph{state functions}, i.e., functions from $S$ to $S$, and use
$f, f', g$ to range over them. 
As usual, $f\circ g$ denotes the composition of $f$ and $g$.
The CCS processes of Section~\ref{s:back} correspond here to the states of
an LTS. In turn, a CCS context $C$ corresponds to 
a {state function}, mapping a process $P$ to the process $C[P]$.

\iflong
\begin{definition}
A state function $f$
  \emph{respects a relation $\rel$} 
   whenever 
$ x\rel y$ implies $ f(x)\rel f(y)$, for all states 
$x,y$.
\end{definition}
\fi

\begin{definition}[Autonomy]
For state functions  $f,f'$ 
we say that there is an \emph{autonomous $\mu$-transition from $f$ to $f'$},
written $f\arr\mu\autonomous f'$, if for all states $x$ it holds that
 $f(x)\arr\mu f'(x)$.

Likewise, given a set $\mathcal{F}$ of 
state functions 
and $f\in
\mathcal{F}$, we say that a transition $f(x)\arr\mu
y$ is \emph{autonomous on $\mathcal{F}$} if, for some  $f'\in
\mathcal{F}$  we have $f\arr\mu\autonomous f'$ and $y=f'(x)$.
Moreover, we say that  \emph{function $f$ is  autonomous
on $\mathcal{F}$}
if all the transitions emanating 
from $f$ 
(that is, all transitions of the form $f(x)\arr\mu y$, for some
$x,\mu,y$)
are autonomous on $\mathcal{F}$.

When  $\mathcal F$
is clear, we omit it, and 
we  simply say that 
a function \emph{is autonomous}.
\end{definition}

Thus,  $f$ is  
autonomous on $\mathcal{F}$ 
if, for some indexing set $I$, 
 there are $\mu_i$ and
\iflong
 functions
\fi $ f_i\in \mathcal{F} $ 
such that  for all  $ x$ 
\iflong
it holds that: 
\else
we have 
\fi
 $ f(x) \arr{\mu_i}  f_i(x)$, for each $i$; 
 the set of all transitions emanating from $f(x)$ is precisely 
$ \cup_i \{ f(x) \arr{\mu_i}  f_i(x)  \}$.
%
\iflong
Autonomous transitions correspond to \See transitions in CCS, 
and
\fi
 autonomous functions correspond to guarded contexts,
which do not need 
\iflong
the contribution of
\fi
 their process argument to
perform the first transition.

We  now formulate conditions under which, intuitively, a state function behaves like a CCS
context. 
Functions satisfying these conditions are called \emph{\fcts}.
%
Such operators are defined relative to a behavioural equivalence or preorder; 
in this section we are interested in bisimilarity, hence the relation $\rel$ in 
the definition below should be understood to be $\bsim$.

\begin{definition}[Set of \fcts]\label{d:opsets}
  Consider an LTS \Tlts, a binary relation $\rel$ on \Tlts, 
  and a set $\mathcal O$ of functions from $S$ to $S$.
We say that  $\mathcal O$ is a \emph{\relsetops R on \Tlts} 
if the following
  conditions hold: 
  \begin{enumerate}
  \item 
$\mathcal{O}$ contains the identity function; 
  \item $\mathcal O$ is closed by composition
(that is, 
$f\circ g\in\mathcal O$ whenever $f, g\in\mathcal O$);
  \item\label{d:operators:autonomy:comp} 
composition preserves autonomy (i.e., 
if $g$ is autonomous on $\mathcal{O}$, then
so is     $f\circ g$); 
  \item  \label{d:operators:congruence}
\iflong
$\rel$ is preserved, 
that is,
\fi
 all functions 
in $\mathcal O$  respect $\rel$
  \end{enumerate}
\end{definition}

A `symmetric variant' of
clause~\ref{d:operators:autonomy:comp} always holds: if $f$ is
autonomous, then so is $f\circ g$: {indeed, transitions of $f(x)$ do not depend on $x$, 
  hence transitions of $f(g(x))$ do not depend on either $x$ or $g$.
  Clause~\ref{d:operators:congruence} expresses congruence of the equivalence 
w.r.t.\ the set of contexts: here, the state functions in $\mathcal{O}$.}
The autonomous transitions of 
a set of \relfcts \rel yield an LTS
whose  states are the \fcts  themselves. Such transitions are of
the form 
$ f \arr\mu\autonomous g$.

In the remainder of this section, we assume $\rel$ to be equal to bisimilarity, 
and call \emph{set of operators} any \relsetops \bsim, without specifying 
the relation. 
Where the underlying set
  $\mathcal O$ of \fcts  is clear, we simply call  
a function belonging to
  $\mathcal O$ an \emph{\fct}.


We define weak transitions as follows:
$f\Arr{}g$ if 
there is a sequence of autonomous transitions $f\arr{\tau}f_1\arr{\tau}\dots\arr{\tau}f_n\arr\tau g$. $\Arr{\mu}$ is then simply $\Arr{}\arr\mu\Arr{}$.

We use state functions to express equations, such as $X = f(X)$. 
We thus look at conditions under which such an equation has a unique solution
(again, the generalisation to a system of equations is easy, {using $n$-ary functions}).


Thinking of functions as equation expressions,  to
 formulate our abstract theory about unique solution of 
equations, 
we have to  
define 
the divergences of finite and infinite unfoldings of state functions.
The $n$th unfolding of $f$ (for $n\geq 1$), $f^n$,   is the function obtained 
by  $n$ applications of  $f$.
An operator is \emph{well-behaved} if there is $n$ with
$f^n$ autonomous (the 
well-behaved operators 
 correspond, in   CCS, to   equations some finite unfolding of which 
yields  a guarded
expression). 
We also have to reason about the infinite unfolding of an equation $X = f(X)$.
For this, given a set 
  $\mathcal O$ of \fcts, we 
  consider the infinite terms obtained by infinite
  compositions of  \fcts\   in $\mathcal O$, that is, the set coinductively defined by the
  grammar: 
\iflong
\[  F :=  f\circ F  \hskip 1cm  \mbox{    for $f \in   \mathcal O$ }\]   
\else
$\quad  F :=  f\circ F  \hskip .5cm  \mbox{   where $f \in    \mathcal
  O$ (i.e., $f$ is a metavariable for the elements in $    \mathcal
  O$)}$.
\fi

\noindent 
{(Note that this syntax only deals with infinite
  compositions.  Since
$\mathcal O$   is closed under finite
 compositions, we do not need to handle such compositions in the
 above definition.) }
In particular, we write  $f^\infty$ 
for  the infinite term $f \circ f \circ f   \circ \ldots$.  

 We define the autonomous transitions for   
such infinite terms {using }the following rules:
 $$\displaystyle{   g \arr\mu   g' \over 
 g\circ F \arr\mu g' \circ F} \mbox{ $g$ autonomous}  \hskip 2cm 
 \displaystyle{    (g\circ  f) \circ F   \arr\mu F'   \over 
 g\circ ( f \circ F)  \arr\mu F'}  \mbox{
     $g $ not autonomous}$$
Intuitively  a term is `unfolded', with the second rule, until an autonomous
function is uncovered, and then  transitions are computed using the
first rule (we disallow unnecessary unfoldings; these would complicate our abstract
theorems, by adding 
 duplicate transitions, since 
the transitions
 of $g\circ f$ duplicate those of $g$
when $g$ is autonomous).
An infinite term has no transitions if none of its finite unfoldings
ever yields an autonomous function. This situation does not
arise for terms of the  
 form $f^\infty $ or $g\circ f^\infty $,
where $f$ is well-behaved, which are the terms we are  interested in.
Note that no
infinite term belongs to a set of operators.

\iflong

These rules are consistent for finite compositions of functions in
$\mathcal O$, in the sense that they allow one to infer the correct
transitions for finite compositions of functions.

\fi

The following lemmas show that infinite terms have the same 
 transitions $\arr\mu$ and weak transitions $\Arr\mu$ as their counterpart 
 finite unfoldings. To build a transition $f^\infty \arr\mu F$ 
 from a transition of an unfolding $f^n\arr\mu g$, 
 we need to reason up to (finite) unfoldings of $f$: thus we set $=_f$ to
be the symmetric reflexive transitive closure of the 
relation that relates
$g$ and $g'$ whenever $g=g'\circ f$. 

\begin{lemma}\label{l:abstract_trans}
Let $f$ be a well-behaved \fct and $g$ an \fct in some set $\Op$. 
Then $g\circ f^\infty\arr \mu F$ for some $F$  
if and only if there is $n$ such that $g\circ f^n\arr\mu h$ for some $h$; in which case, 
there is $h'=_f h$ such that $F= h'\circ f^\infty$.
\end{lemma}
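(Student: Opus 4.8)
The plan is to prove the two implications separately, reading each transition of the infinite term as a finite phase of \emph{unfolding} (the second rule) followed by a single \emph{firing} step (the first rule); throughout I read a transition $\phi\arr\mu\psi$ between \fcts as the autonomous transition $\phi\arr\mu\autonomous\psi$, this being the only transition relation on $\Op$. Two elementary facts about autonomy drive both directions. \emph{Lifting}: if $\phi\arr\mu\autonomous\psi$ then $\phi\circ f^j\arr\mu\autonomous\psi\circ f^j$ for every $j$, since $\phi(f^j(x))\arr\mu\psi(f^j(x))$ for all $x$. \emph{Projection}: if $\phi$ is autonomous and $\phi\circ f^j\arr\mu\autonomous\chi$, then $\chi=\psi\circ f^j$ for some $\psi$ with $\phi\arr\mu\autonomous\psi$, because $\phi$ ignores its argument and $\phi\circ f^j$ is again autonomous by the symmetric variant of Definition~\ref{d:opsets}(\ref{d:operators:autonomy:comp}). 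Finally, since $f$ is well-behaved there is $m$ with $f^m$ autonomous, hence $g\circ f^m$ is autonomous by Definition~\ref{d:opsets}(\ref{d:operators:autonomy:comp}); I let $k_0$ be the \emph{least} index with $g\circ f^{k_0}$ autonomous (setting $g\circ f^0\defi g$), so that $g\circ f^i$ is not autonomous for $i<k_0$.

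For the forward implication I would induct on the derivation of $g\circ f^\infty\arr\mu F$, that is, on the number of uses of the second rule. If only the first rule is used, then $g$ is autonomous with $g\arr\mu\autonomous g'$ and $F=g'\circ f^\infty$; taking $h\defi g'\circ f$, which by lifting is an autonomous transition of $g\circ f$, gives $g\circ f\arr\mu h$ and $F=h\circ f^\infty$. If the derivation begins with the second rule, then $g$ is not autonomous and the transition is inferred from $(g\circ f)\circ f^\infty\arr\mu F$; applying the induction hypothesis to the \fct $g\circ f\in\Op$ yields $n'$ and $h$ with $(g\circ f)\circ f^{n'}\arr\mu h$ and $F=h'\circ f^\infty$ for some $h'=_f h$, and since $(g\circ f)\circ f^{n'}=g\circ f^{n'+1}$ I conclude with $n\defi n'+1$.

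For the backward implication I would start from $g\circ f^n\arr\mu h$ and transport it to an autonomous transition of the \emph{canonical} unfolding $g\circ f^{k_0}$: the claim is that there is $h_0$ with $g\circ f^{k_0}\arr\mu\autonomous h_0$ and $h_0=_f h$. If $n\le k_0$, lifting gives $g\circ f^{k_0}\arr\mu\autonomous h\circ f^{k_0-n}$ and I set $h_0\defi h\circ f^{k_0-n}$. If $n>k_0$, then $g\circ f^n=(g\circ f^{k_0})\circ f^{n-k_0}$ with $g\circ f^{k_0}$ autonomous, so projection gives $h=h_0\circ f^{n-k_0}$ for some $h_0$ with $g\circ f^{k_0}\arr\mu\autonomous h_0$; in both cases $h_0=_f h$. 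I would then build the infinite-term transition by applying the second rule $k_0$ times (legitimate because $g\circ f^i$ is not autonomous for $i<k_0$), reaching $(g\circ f^{k_0})\circ f^\infty$, and then the first rule with $g\circ f^{k_0}\arr\mu\autonomous h_0$; this produces $g\circ f^\infty\arr\mu h_0\circ f^\infty$, the desired $F$, so that $h'\defi h_0$ satisfies $F=h'\circ f^\infty$ and $h'=_f h$.

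The hard part will be this backward direction, specifically the bookkeeping around $=_f$: an autonomous transition may first appear at an unfolding $g\circ f^n$ with $n\neq k_0$, so it must be transported to the canonical index $k_0$ --- lifting when $n\le k_0$, projecting when $n>k_0$ --- and one must check that the two witnesses differ only by trailing powers of $f$, which is precisely what $=_f$ records. The other point needing care, which guarantees that $k_0$ exists and hence that the first rule eventually fires, is the combined use of well-behavedness of $f$ and the autonomy-closure clause of Definition~\ref{d:opsets}.
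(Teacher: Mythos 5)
Your proof is correct and follows essentially the same route as the paper's: the forward direction by induction on the derivation of the infinite-term transition, and the backward direction by passing to the least exponent (your $k_0$, the paper's $m$) making $g\circ f^{k_0}$ autonomous, lifting when $n\le k_0$, projecting when $n>k_0$, and rebuilding the infinite-term transition via the unfolding rule, which minimality licenses. The only difference is presentational: you spell out the lifting/projection facts and the $=_f$ bookkeeping that the paper leaves implicit.
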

\begin{proof}
\begin{enumerate}
\item[$(\Rightarrow)$] Assume $g\circ f^\infty\arr \mu F$. By a simple induction over the derivation of this 
transition, we get that there is $n$ such that $f^n$ is autonomous, $f^n\arr\mu h$, 
and $F= h\circ f^\infty$
\item[$(\Leftarrow)$] Assume $g\circ f^n\arr\mu h$. Since $f$ is well-behaved and composition respects 
autonomy, there is $m$ such that $g\circ f^m$ 
is autonomous; we take that $m$ to be the minimum among the exponents that make $g\circ f^m$ 
autonomous. If $n\leq m$, 
there is an autonomous transition $g\circ f^m\arr\mu h\circ (f^{m-n})$. 
If $m\leq n$, given that $g\circ f^m$ is autonomous, 
there is a transition $g\circ f^m\arr\mu h'$, 
where $h'\circ f^{n-m}=h$. Either way, $g\circ f^m\arr\mu h_0$, 
where $h_0=_f h$. 

Since $m$ is the smallest exponent that makes $g\circ f^m$ autonomous, we can derive 
$(g\circ f^m)\circ f^\infty \arr\mu g_0\circ f^\infty$. This concludes the proof.
\qedhere
\end{enumerate}
\end{proof}

{We now adapt Lemma~\ref{l:abstract_trans} to weak transitions.}
\begin{lemma}\label{l:abstract_weaktrans}
Let $f$ be a well-behaved \fct and $g$ an \fct in some set $\Op$. Then
$g\circ f^\infty\Arr \mu F$ for some $F$   
if and only if there is $n$ such that $g\circ f^n\Arr \mu h$ for some $h$; in which case, 
there is $h'=_f h$ such that $F= h'\circ f^\infty$.
\end{lemma}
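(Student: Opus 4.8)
The plan is to lift Lemma~\ref{l:abstract_trans} from strong to weak transitions by decomposing the weak transition into its strong components and applying the strong lemma at each step. Recall that $g\circ f^\infty\Arr\mu F$ means there is a finite chain of autonomous transitions $g\circ f^\infty\arr\tau G_1\arr\tau\dots\arr\tau G_k\arr\mu G_{k+1}\arr\tau\dots\arr\tau F$. Each individual transition in this chain is covered by Lemma~\ref{l:abstract_trans}, so the strategy is to propagate the invariant that every intermediate infinite term has the form $h\circ f^\infty$ for some \fct $h$, and that the corresponding finite unfolding matches up via $=_f$.

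For the direction $(\Rightarrow)$, I would proceed by induction on the length of the weak transition $g\circ f^\infty\Arr\mu F$. For a single $\arr\tau$ step (or the single $\arr\mu$ step), Lemma~\ref{l:abstract_trans} gives an $n$ with $g\circ f^n\arr\tau h_1$ (resp.\ $\arr\mu$) and an intermediate term $h_1'\circ f^\infty$ with $h_1'=_f h_1$. The key observation is that $h_1'$ is again an \fct lying in $\Op$ (closure under composition from Definition~\ref{d:opsets}), so the induction hypothesis applies to $h_1'\circ f^\infty\Arr{}F$ or $h_1'\circ f^\infty\arr\mu\cdots$, yielding some $m$ with $h_1'\circ f^m\Arr{}h$ and $F=h'\circ f^\infty$, $h'=_f h$. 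The delicate point is reconciling the two exponents: I must show that $g\circ f^n\arr\tau h_1$ combined with $h_1'\circ f^m\Arr{}h$ can be glued into a single weak transition $g\circ f^N\Arr{}h$ for a suitable $N$. Since $h_1'=_f h_1$, the terms $h_1'\circ f^m$ and $h_1\circ f^{m'}$ coincide for appropriate $m,m'$ (adjusting by the relation $=_f$, which merely shifts how many copies of $f$ are absorbed), so taking $N=n+m$ or $N=n+m'$ and using Lemma~\ref{l:context:lts}-style composability of autonomous transitions, the chain reassembles into a weak transition of $g\circ f^N$.

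The reverse direction $(\Leftarrow)$ is analogous but cleaner: given $g\circ f^n\Arr\mu h$, I decompose this finite-unfolding weak transition into its strong steps and apply the $(\Leftarrow)$ direction of Lemma~\ref{l:abstract_trans} to each, building up $g\circ f^\infty\Arr\mu F$ with $F=h'\circ f^\infty$ and $h'=_f h$. Here the well-behavedness of $f$ guarantees that sufficiently large unfoldings are autonomous, so each strong step in the finite chain is faithfully reflected in the infinite term.

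I expect the main obstacle to be the bookkeeping of exponents and the $=_f$ equivalence across composition of steps: because each application of Lemma~\ref{l:abstract_trans} introduces a fresh $n$ and only determines the intermediate term up to $=_f$, chaining several steps requires carefully tracking how these unfolding counts accumulate and checking that $=_f$ is compatible with prefixing by autonomous \fcts (i.e.\ that $h=_f h'$ is preserved when we build $G\circ h\circ f^\infty$). Once one establishes the single-step compatibility lemma that $g\circ f^\infty\arr\mu h'\circ f^\infty$ with $h'=_f h$ is equivalent to the existence of a matching finite-unfolding transition --- which is exactly Lemma~\ref{l:abstract_trans} --- the induction goes through routinely, and the genuine content is confined to verifying that the $=_f$ relation behaves well under the composition operations guaranteed by Definition~\ref{d:opsets}.
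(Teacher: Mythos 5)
Your proposal is correct and follows exactly the paper's own argument: the paper proves this lemma by induction over the length of the weak transition $\Arr{\mu}$, applying Lemma~\ref{l:abstract_trans} at each strong step, which is precisely your strategy. The exponent bookkeeping and $=_f$-compatibility issues you flag are real but routine, and are absorbed by the fact that the lemma's conclusion is only stated up to $=_f$.
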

\begin{proof}
For each implication, proceed by induction over the length of $\Arr\mu$, 
and use Lemma~\ref{l:abstract_trans}.
\end{proof}

\begin{definition}[\Fcts and divergences]\label{d:abs:div}
  Let $f,f',f_i$ be \fcts in a set $\mathcal{O}$ of \fcts, and
  consider the LTS induced by the autonomous transitions of
  \fcts in $\mathcal{O}$.
%
A sequence of transitions
$
f_1\arr{\mu_1}f_2 \arr{\mu_2}f_3\dots
$
is a \emph{divergence} if   
  for some $n\geq 1$ 
we have  $\mu_i = \tau $ whenever $i\geq n$.
We also say that $f_1$ \emph{diverges}.
We apply these notations and terminology also to infinite terms, 
 as expected. 
%
\end{definition}

In the remainder of the section 
we fix a set $\mathcal{O}$ of \fcts and we only
consider autonomous transitions on $\mathcal{O}$.
We  now state the ``abstract version'' of
Theorem~\ref{thm:roscoe_ccs}. 

\begin{theorem}[Unique solution, abstract formulation]
\label{thm:roscoe_abstract}
  Let $f$ be \hypabs \fct on $\mathcal O$
\iflong  on some LTS \Tlts. \fi
  If $f^\infty$ does not diverge, 
  then the equation $X=f(X)$ either has no solution or has a unique solution for $\bsim$.
\end{theorem}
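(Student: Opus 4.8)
The plan is to follow, in the abstract setting, the same route as for Theorem~\ref{thm:roscoe_ccs}: since ``unique solution'' means that any two solutions are bisimilar, it suffices to prove $P\bsim Q$ for two arbitrary solutions $P,Q$ of $X=f(X)$ (so $P\bsim f(P)$ and $Q\bsim f(Q)$), which I would do by exhibiting a bisimulation relating them. The part played in CCS by the reducts of the unfoldings is taken here by the set $\mathcal{U}$ of operators reachable from some finite unfolding $f^n$ by finitely many autonomous transitions. Note that $f\in\mathcal U$ (zero transitions from $f^1$), and that $\mathcal U$ is stable under postcomposition with unfoldings: an autonomous transition of $f^n$ lifts, by applying it pointwise at $f^m(x)$, to an autonomous transition of $f^n\circ f^m=f^{n+m}$, so $g\in\mathcal U$ entails $g\circ f^m\in\mathcal U$. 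The candidate relation is
$$\R\defi\{(S,T)\st\exists g\in\mathcal U,\ S\bsim g(P)\text{ and }T\bsim g(Q)\}\,,$$
which contains $(P,Q)$, taking $g=f$ and using $P\bsim f(P)$, $Q\bsim f(Q)$.

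The heart of the proof is an abstract ``protection'' property: for every $g\in\mathcal U$ and every weak transition $g(P)\xRightarrow{\hat\mu}R$ there are $m$ and $g'\in\mathcal U$ with an autonomous weak transition $g\circ f^m\Arr{\hat\mu}g'$ and $g'(P)\bsim R$. Granting this, $\R$ is a bisimulation by the same diagram chase as in Proposition~\ref{l:protects}: from $S\arr\mu S'$ and $S\bsim g(P)$ one gets $g(P)\xRightarrow{\hat\mu}S''\bsim S'$; protection yields $g\circ f^m\Arr{\hat\mu}g'$ with $g'(P)\bsim S''$; instantiating this autonomous transition at the state $Q$, and using $Q\bsim f^m(Q)$ (iterating the solution equation) together with the congruence clause~\ref{d:operators:congruence} of Definition~\ref{d:opsets}, gives $T\bsim g(Q)\bsim (g\circ f^m)(Q)\xRightarrow{\hat\mu}T'\bsim g'(Q)$, so $(S',T')\in\R$; moves from $T$ are symmetric.

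To establish protection I would reproduce the inductive construction in the proof of Theorem~\ref{thm:roscoe_ccs}, with \emph{autonomy} of operators playing the role of guardedness. Starting from $g(P)\xRightarrow{\hat\mu}R$ I build a strictly increasing sequence of autonomous weak transitions $g\circ f^n\Arr{\hat\mu}g_n$, peeling off the autonomous transitions of the current unfolding and unfolding $f$ once more whenever the argument $P$ would have to move, until the residual of $R$ is matched, at which point the construction stops and delivers the required $m,g'$. Here well-behavedness of $f$ guarantees progress: since $f^k$ is autonomous for some $k$, clause~\ref{d:operators:autonomy:comp} of Definition~\ref{d:opsets} makes $f^n$, and hence $g\circ f^n$, autonomous as soon as $n\geq k$, so each round contributes at least one transition and the sequence genuinely grows.

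The step I expect to be the main obstacle is turning a non-terminating construction into a contradiction. If the construction never stopped, it would yield an infinite sequence of autonomous transitions, spread across the finite unfoldings $g\circ f^n$ and carrying at most one visible label; I must read this as a genuine divergence of $f^\infty$. This is exactly the purpose of Lemmas~\ref{l:abstract_trans} and~\ref{l:abstract_weaktrans}: each step $g_n\circ f\Arr{}g_{n+1}$ transfers to $g_n\circ f^\infty\Arr{}g_{n+1}\circ f^\infty$ (up to the bookkeeping relation $=_f$), producing a divergence of $g\circ f^\infty$. Finally, since $g\in\mathcal U$ there is $N$ with $f^N$ reaching $g$ by autonomous transitions, and as $f^N\circ f^\infty=f^\infty$ these lift, again by Lemma~\ref{l:abstract_trans}, to a finite autonomous path from $f^\infty$ to $g\circ f^\infty$; prepending it to the divergence shows $f^\infty$ itself diverges, contradicting the hypothesis. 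Hence the construction terminates, protection holds, and $P\bsim Q$.
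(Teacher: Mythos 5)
Your proposal is correct and follows essentially the same route as the paper: the paper's own proof of Theorem~\ref{thm:roscoe_abstract} simply transports the proof of Theorem~\ref{thm:roscoe_ccs} (via Proposition~\ref{l:protects}) to the abstract setting, replacing equation expressions by operators, reducts of unfoldings by operators reachable through autonomous transitions (your $\mathcal U$), guardedness by autonomy/well-behavedness, and instances of expression transitions by instances of autonomous transitions, exactly as you do. Your treatment of the divergence transfer via Lemmas~\ref{l:abstract_trans} and~\ref{l:abstract_weaktrans}, with the identification $f^N\circ f^\infty =_f f^\infty$, matches the bookkeeping the paper itself performs (e.g., in Lemma~\ref{l:inj} and Theorem~\ref{thm:roscoe2_trace}).
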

\iflong
\begin{proof}
  The proof is essentially the same as that of Theorem \ref{thm:roscoe_ccs},
  replacing equations expressions with operators, from a fixed set of operators 
  $\mathcal{O}$ (one still has to consider reducts from an operator), and 
  replacing instantiation of equation expression transitions with instantiation 
  of autonomous transitions. A guarded \ee becomes an autonomous operator.
\end{proof}
\fi

The equation in the statement of the theorem might have no solution at
all. 
For
example, consider
 the LTS $(\N, \{a\}, \rightarrow)$ where for each $n$ we have
$ n+1\arr a n$. The function $f$ with $f(n)= n+1$ is an
\fct of the set $\mathcal{O}=\{f^n\}_{n\in\N}$ (with
$f^0 = \mathtt{Id}$, the identity function). The function $f$ is autonomous because, for all $n$,
the only transition of $f(n)$ is $f(n)\arr a n$ (this transition is
autonomous
because $f\arr a \mathtt{Id}$). A fixpoint of $f$ would be an
element $x $ with  $x\arr a x$, and there is no such $x$ in the LTS.


Theorem~\ref{thm:roscoe_abstract} 
can be refined along the lines of
Theorem~\ref{thm:roscoe2_ccs}. For this, we have to relate the
divergences of any $f^n$ (for $n\geq 1$)
to divergences of $f^\infty$, in order to distinguish between innocuous
and non-innocuous divergences.
%


\begin{lemma}
\label{l:inj}
  Consider an autonomous \fct $f$ on $\mathcal{O}$
  and a divergence of $f^n$ 
\[ f^n  \arr{\mu_1} f_1   \arr{\mu_2}  \ldots \arr{\mu_i} f_i 
\arr{\tau} f_{i+1}
\arr{\tau}  \ldots
 \] 
This yields a divergence of $f^\infty$: 
$
  f^\infty  \arr{\mu_1}g_1\circ f^\infty   \arr{\mu_2}  \ldots \arr{\mu_i} g_i\circ f^\infty 
\arr{\tau} g_{i+1} \circ f^\infty
\arr{\tau} \ldots
$

\noindent such that for all $i\geq 1$, $g_i$ is an operator and $g_i=_f f_i$.
\ifnot
\finish{below, the old version. Remove if the new one is always better; maybe see if we should rewrite}

{\alert  $\big((g_i,\mu_i,g_{i+1})\big)_{i\geq 0}\in \Div (f^n)$ ($g_0=f^n$).
  There exists a divergence 
  $\big((f_i\circ \fsol f,\mu_i, f_{i+1}\circ\fsol f)\big)_{i\geq 0}\in
  \Div(f^\infty)$
  such that $g_i=f_i\circ f^{n-i}$ for $i\leq n$, and $f_i=g_i$ for
  $i>n$.
  
This defines 
 the \emph{projection of divergences, $\phi_f$}, accordingly:
$$
\begin{array}{rlcl}
\phi_f:&\cup_{n\in \mathbb{N}^*}\Div (f^n) &\rightarrow
  &\Div(f^\infty)\\ 
       & \big((g_i,\mu_i,g_{i+1})\big)_{i\geq 0}\in \Div (f^n)
                                           &\mapsto&
                                                     \big((f_i\circ \fsol f,\mu_i, f_{i+1}\circ \fsol f)\big)_{i\geq 0}
\end{array}
$$}\fi

\end{lemma}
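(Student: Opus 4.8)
The plan is to read the required divergence of $f^\infty$ off the given divergence of $f^n$ one transition at a time, lifting each single step through Lemma~\ref{l:abstract_trans}. By Definition~\ref{d:abs:div} the given divergence is a sequence of autonomous transitions $f_i \arr{\mu_{i+1}} f_{i+1}$ between operators of $\mathcal{O}$, where I set $f_0 = f^n$. I would proceed by induction, maintaining the invariant that at stage $i$ I have built a prefix ending in $g_i\circ f^\infty$ together with an operator $g_i\in\mathcal{O}$ satisfying $g_i =_f f_i$. The base case takes $g_0 = \mathtt{Id}$, which satisfies $\mathtt{Id} =_f f^n$ because $\mathtt{Id}\circ f^n = f^n\circ f^0$. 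Note that $f$ is well-behaved (being autonomous), so Lemma~\ref{l:abstract_trans} applies throughout.

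For the inductive step I would first unfold $g_i =_f f_i$ into a concrete equality $g_i\circ f^a = f_i\circ f^b$ for suitable $a,b\geq 0$; this is the explicit shape of $=_f$, which identifies two operators precisely when they coincide after composing with appropriate powers of $f$ on the right. The transition $f_i \arr{\mu_{i+1}} f_{i+1}$ being autonomous means $f_i(x)\arr{\mu_{i+1}} f_{i+1}(x)$ for all $x$; instantiating $x$ at $f^b(y)$ lifts it to the autonomous transition $f_i\circ f^b \arr{\mu_{i+1}} f_{i+1}\circ f^b$, and hence, via the equality above, to a transition $g_i\circ f^a \arr{\mu_{i+1}} f_{i+1}\circ f^b$ of the finite unfolding $g_i\circ f^a$. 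The backward direction of Lemma~\ref{l:abstract_trans}, applied with base operator $g_i$ and exponent $a$, then produces a transition $g_i\circ f^\infty \arr{\mu_{i+1}} h'\circ f^\infty$ with $h' =_f f_{i+1}\circ f^b$ and $h'\in\mathcal{O}$. Taking $g_{i+1} = h'$ and combining $f_{i+1}\circ f^b =_f f_{i+1}$ with transitivity of $=_f$ restores the invariant $g_{i+1} =_f f_{i+1}$.

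Iterating over all $i$ yields the announced sequence $f^\infty \arr{\mu_1} g_1\circ f^\infty \arr{\mu_2} g_2\circ f^\infty \arr{\tau}\cdots$ with $g_i =_f f_i$ for every $i\geq 1$. Since the labels $\mu_i$ are inherited verbatim from the original divergence, they are eventually all equal to $\tau$, so this sequence is a genuine divergence of $f^\infty$ in the sense of Definition~\ref{d:abs:div}.

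The conceptual work is entirely carried by Lemma~\ref{l:abstract_trans}; this proof merely threads its single-step correspondence along the divergence. The only delicate point I anticipate is the bookkeeping around $=_f$: I must verify that the relation is preserved at each stage and is exactly what is needed to re-invoke Lemma~\ref{l:abstract_trans}. The two facts that make this routine are that an autonomous transition survives right-composition with any power of $f$, and that right-composition with $f^b$ never changes the $=_f$-class.
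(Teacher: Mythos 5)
Your proof is correct and follows essentially the same route as the paper's: the paper's own argument is a short induction on the sequence of transitions defining the divergence, ``unfolding $f^\infty$ as many times as needed'' at each step, which is precisely what your invocation of Lemma~\ref{l:abstract_trans} (together with the $=_f$ bookkeeping) makes explicit. The only cosmetic difference is your use of $g_0 = \mathtt{Id}$ as the starting point of the invariant, which is harmless since $\mathtt{Id}\in\mathcal{O}$ and $\mathtt{Id}\circ f^\infty$ has exactly the transitions of $f^\infty$ under the unfolding rules.
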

\iflong
\begin{proof} The proof is  a simple induction on the sequence of transitions defining 
the divergence. At each step, we unfold $f^\infty$ as many 
times as needed. 
\ifnot
\finish{idem}
{\alert
Consider a divergence of the \fct $f^n$: 
$f^n\arr{\mu_1}g_1\arr{\mu_2}g_2\dots$. Then, as $f$ is autonomous,
 there is $f_1$ such that $f\arr{\mu_1}f_1$ and $g_1= f_1\circ f^{n-1}$. 
Likewise, if $i\leq n$, there is 
$f_i$ such that $g_i=f_i\circ f^{n-i}$ and $f_i\circ f\arr{\mu _{i+1}}f_{i+1}$.
Therefore $f_i\circ\fsol f\arr{\mu _{i+1}}f_{i+1}\circ \fsol f$ for $i\leq n$.
Since also $f_n=g_n$, we have that $g_i\circ\fsol f\arr{\mu _{i+1}}g_{i+1}\circ \fsol f$.

  The sequence of the $f_i$'s gives us a matching divergence of $f^\infty$.}
  \fi
\end{proof}

Given a divergence $\Delta$ of $f^n$, we write $\Delta^\infty$ to
indicate  the
divergence of $f^\infty$ obtained from $\Delta$ 
as in 
Lemma~\ref{l:inj}. 
We call a divergence of  $f^\infty$ \emph{innocuous} when it can be
described in this way, that is, as a divergence 
$\Delta^\infty$  obtained from  a divergence $\Delta$ of $f^n$, for 
 some  $n$.

\begin{theorem}[Unique solution with innocuous divergences, abstract formulation]
\label{thm:roscoe2_abstract}
Let  $f\in \mathcal{O}$ be \hypabs \fct. If 
all divergences of $f^\infty$ are innocuous, then 
the equation $X= f(X)$ either has no solution or 
  has a unique solution for $\bsim$. 
\end{theorem}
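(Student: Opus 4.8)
The plan is to mirror the proof of Theorem~\ref{thm:roscoe2_ccs}, which refined Theorem~\ref{thm:roscoe_ccs} in exactly the same way that Theorem~\ref{thm:roscoe2_abstract} refines Theorem~\ref{thm:roscoe_abstract}. The underlying machinery is already in place: the abstract reduct/protection argument behind Theorem~\ref{thm:roscoe_abstract} builds, from an initial weak transition $f_0(x) \Arr{\mu} x'$ (with $x$ a solution and $f_0$ a reduct of the unfoldings of $f$), an increasing sequence of autonomous transitions along the unfoldings $f_0 \circ f^n \Arr{\mu?} f_n$; if this construction never terminates it produces a divergence of $f^\infty$, contradicting the non-divergence hypothesis. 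In the refined version, I want to strengthen the stopping criterion so that the divergence I construct is guaranteed to be \emph{non-innocuous}, so that a non-innocuous divergence hypothesis is all I need to rule it out.

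The key step is therefore to modify \emph{when} the construction is allowed to stop. As in the concrete proof, I would stop as soon as the transition $f_0 \circ f^n(x) \Arr{\hat\mu} T_n$ we have reached is actually an \emph{instance} of an autonomous transition $f_0 \circ f^n \Arr{\hat\mu} f_n$ of the corresponding operator (i.e.\ a transition that does not require the argument $x$ to move), since in that case we have produced the desired autonomous matching transition and $f$ protects its solutions. Consequently, if the construction runs forever, at every stage $n$ the transition $f_0 \circ f^n(x) \Arr{\hat\mu} T_n$ fails to be such an instance, which means the argument $x$ — equivalently, the infinite term $f^\infty$ in its role as syntactic solution — must genuinely contribute to that step. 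The divergence of $f^\infty$ obtained in the limit is then one in which the ``application of $f^\infty$'' (the abstract counterpart of rule \const applied to $K_E$) is used unboundedly often, so it cannot arise as the image $\Delta^\infty$ of a divergence of any single finite unfolding $f^n$. By the definition of innocuous divergence for $f^\infty$, it is therefore non-innocuous, contradicting the hypothesis. Lemma~\ref{l:inj} (and its notation $\Delta^\infty$) is exactly what pins down this correspondence between divergences of $f^n$ and innocuous divergences of $f^\infty$, so I would invoke it to justify the non-innocuousness claim cleanly, together with Lemmas~\ref{l:abstract_trans} and~\ref{l:abstract_weaktrans} to transfer between transitions of $f^\infty$ and of the finite unfoldings $f^n$.

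Because all the heavy lifting — protection of solutions, the inductive construction, and the translation between finite and infinite unfoldings — is already available from Theorem~\ref{thm:roscoe_abstract} and the preceding lemmas, the proof reduces to recording the one difference in the stopping condition, exactly parallel to how Theorem~\ref{thm:roscoe2_ccs} was obtained from Theorem~\ref{thm:roscoe_ccs}. I expect the main obstacle to be bookkeeping rather than conceptual: making precise, in the abstract setting, what it means for a step of the constructed divergence of $f^\infty$ to ``use the infinite-unfolding rule,'' and counting these usages so as to match the definition of innocuousness via $\Delta^\infty$. The concrete CCS proof counted usages of rule \const applied to $K_E$; abstractly the analogue is the number of times an autonomous function is uncovered by genuine unfolding of $f^\infty$ rather than by a finite unfolding $f^n$, and I would phrase the non-innocuousness argument in terms of $=_f$ and Lemma~\ref{l:inj} so that the contradiction with the hypothesis is immediate.
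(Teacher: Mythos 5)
Your proposal is correct and takes essentially the same approach as the paper: the paper's own proof of Theorem~\ref{thm:roscoe2_abstract} is precisely the instruction to modify the proof of Theorem~\ref{thm:roscoe_abstract} in the same way that the CCS proof of Theorem~\ref{thm:roscoe_ccs} is modified to obtain Theorem~\ref{thm:roscoe2_ccs}, i.e., the strengthened stopping criterion (stop when the weak transition is an instance of an autonomous transition) together with the observation that a never-terminating construction yields a non-innocuous divergence of $f^\infty$. Your extra care about making the bookkeeping precise via $=_f$, Lemma~\ref{l:inj}, and Lemmas~\ref{l:abstract_trans} and~\ref{l:abstract_weaktrans} is exactly the ``substituting equation expressions for operators'' step that the paper leaves implicit.
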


\begin{proof}
Just as in CCS,   where 
 the proof of Theorem~\ref{thm:roscoe_ccs} 
 has to be modified for Theorem~\ref{thm:roscoe2_ccs},
here the proof of Theorem~\ref{thm:roscoe_abstract} is to be modified.
The modification is essentially the same as in CCS, 
again substituting equation expressions for 
operators. 
\end{proof}

\subsection{Reasoning with other Behavioural Relations}
\label{ss:equivalences}
\subsubsection*{Trace-based Equivalences.}

We can adapt the results of the previous section about
bisimilarity to  other settings, including both
preorders and non-coinductive relations.
As an example, we consider trace-based relations. 
%


We call a finite sequence of actions $s = \mu_1, \ldots, \mu_n$,
where each $\mu_i$ is a visible action, 
 a \emph{trace}. 
Accordingly, an infinite sequence of such actions $s=(\mu_i)_{i\in\N}$ is called an 
\emph{infinite trace}.
Given $s=\mu_1,\dots,\mu_n$ a trace, we write $x \Arr{s} $
if $x \Arr{\mu_1} x_1 \Arr{\mu_2}x_2 \ldots x_{n-1} \Arr{\mu_n}x_n$, for
some  states $x_1, \ldots, x_n$. Likewise, given $s=(\mu_i)_{i\in\N}$ 
an infinite trace, we write $x\Arr s$ if there are states $(x_i)_{i\in\N}$ such that 
$x=x_0$  
and for all $i\in\N$, $x_i\Arr{\mu_i} x_{i+1}$.

\begin{definition}[Trace-based relations]
\label{d:trace}
 Two  states $x,y$ are in the \emph{trace inclusion}, written 
$x \trincl y $, if $x \Arr{s} $ implies $y \Arr{s} $, for each trace
$s$. They are 
\emph{trace equivalent}, 
written 
$x \trequiv y $, if both 
$x \trincl y $ and 
$y \trincl x $ hold.

 Two  states $x,y$ are in the \emph{infinite trace inclusion}, written 
$x \itrincl y $, if $x \Arr{s} $ implies $y \Arr{s} $, for each finite or infinite trace
$s$. They are 
\emph{infinite trace equivalent}, 
written 
$x \itrequiv y $, if both 
$x \itrincl y $ and 
$y \itrincl x $ hold.
\end{definition}

The definitions from Section~\ref{s:lts:ops}
are the same as for $\bsim$; however we now consider sets of
\trincl-operators,
i.e., we are interested in \fcts that 
respect $\trincl$. Indeed the preorder $\trincl$ is used 
in the proof of unique solution for $\trequiv$, hence \fcts 
must respect $\trincl$ rather than $\trequiv$.
%

The notion of trace is extended to \fcts like we do for weak
transitions: we write $f\Arr{s}$ if 
there is a sequence of autonomous transitions $f\Arr{\mu_1}f_1\Arr{\mu_2}\dots\Arr{\mu_n}f_n$ 
($s=\mu_1,\dots,\mu_n$), and likewise for infinite traces. 

\begin{lemma}\label{l:trace}
Given a well-behaved \relfct \trincl $f$, a \relfct\trincl $g$,
 and a (finite) trace $s$, we have $g\circ f^\infty\Arr s$ 
if and only if there is $n$ such that $g\circ f^n\Arr s$.
\end{lemma}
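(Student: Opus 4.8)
The plan is to prove both implications at once by induction on the length of the trace $s$, keeping the operator $g$ universally quantified so that the induction hypothesis can be reapplied to the tail of the trace. The single-action case is exactly Lemma~\ref{l:abstract_weaktrans}, which I will use as the basic building block; the real work is in splicing one weak transition $\Arr{\mu}$ onto the remainder of the trace while keeping the bookkeeping of unfoldings consistent. For the base case, when $s$ is the empty trace both $g\circ f^\infty\Arr{s}$ and $g\circ f^n\Arr{s}$ hold trivially for every $n$, so the equivalence is immediate. Throughout I will rely on two elementary closure facts: first, that $=_f$-related operators become literally equal once post-composed with $f^\infty$, since $k\circ f^\infty=k\circ f\circ f^\infty$; and second, that an autonomous weak transition is stable under post-composition with powers of $f$, i.e.\ $g\circ f^m\Arr{\mu}h$ entails $g\circ f^{m+p}\Arr{\mu}h\circ f^p$ (immediate from the definition of autonomous transition, whose source does not depend on the argument).

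For the inductive step I write $s=\mu\cdot s'$ with $s'$ strictly shorter. For the forward direction, assume $g\circ f^\infty\Arr{s}$, so that $g\circ f^\infty\Arr{\mu}F_1$ and $F_1\Arr{s'}$ for some $F_1$. Lemma~\ref{l:abstract_weaktrans} supplies an $m$ with $g\circ f^m\Arr{\mu}h$ and $F_1=h'\circ f^\infty$ for some $h'=_f h$; by the first closure fact $F_1=h\circ f^\infty$, hence $h\circ f^\infty\Arr{s'}$. Since autonomous transitions stay inside $\mathcal{O}$, $h$ is an operator respecting $\trincl$, so the induction hypothesis applied to $h$ yields a $p$ with $h\circ f^p\Arr{s'}$. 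Using the second closure fact to bump the first step to depth $m+p$, I obtain $g\circ f^{m+p}\Arr{\mu}h\circ f^p\Arr{s'}$, that is $g\circ f^{m+p}\Arr{s}$, as required.

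For the backward direction, assume $g\circ f^n\Arr{s}$ for some $n$, i.e.\ $g\circ f^n\Arr{\mu}h\Arr{s'}$. Lemma~\ref{l:abstract_weaktrans} gives $g\circ f^\infty\Arr{\mu}h'\circ f^\infty$ with $h'=_f h$, hence $g\circ f^\infty\Arr{\mu}h\circ f^\infty$ by the first closure fact. Reading $h\Arr{s'}$ as $h\circ f^{0}\Arr{s'}$ and noting again that $h\in\mathcal{O}$ respects $\trincl$, the induction hypothesis yields $h\circ f^\infty\Arr{s'}$. Chaining these gives $g\circ f^\infty\Arr{\mu}h\circ f^\infty\Arr{s'}$, i.e.\ $g\circ f^\infty\Arr{s}$.

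The step I expect to be the main obstacle is precisely the reconciliation of unfolding depths between the first action and the rest of the trace: Lemma~\ref{l:abstract_weaktrans} only pins down the endpoint up to $=_f$ and fixes an exponent $m$ tailored to the first action, whereas the induction hypothesis fixes an independent exponent $p$ for the tail. The two closure facts highlighted above are exactly what is needed to absorb this mismatch, letting me choose the single exponent $m+p$ that serves the whole trace; everything else is routine manipulation of weak transitions.
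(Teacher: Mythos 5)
Your proof is correct and takes essentially the same route as the paper, whose entire proof of this lemma reads: proceed by induction over the length of $s$, applying Lemma~\ref{l:abstract_weaktrans} in each case. Your two closure facts (absorbing $=_f$ into composition with $f^\infty$, and post-composing autonomous weak transitions with powers of $f$ to reconcile the unfolding depths $m$ and $p$) are exactly the bookkeeping that the paper's one-line argument leaves implicit.
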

\begin{proof}
We proceed by induction over the length of $s$, and apply Lemma~\ref{l:abstract_weaktrans} 
in each case.
\end{proof}

All theorems obtained for $\wb$  can be adapted to 
$\trequiv$, with similar proofs. As an example, Theorem~\ref{thm:roscoe2_abstract} becomes: 

\begin{theorem}\label{thm:roscoe2_trace}
  Let $f\in \mathcal{O}$ be \hypabs \relfct \trincl. If 
  all divergences of $f^\infty$ are innocuous,
  then the equation
  $X= f(X)$ either has no solution or has a unique solution for $\trequiv$.
\end{theorem}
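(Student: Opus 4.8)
The plan is to mirror the proof strategy used for the bisimilarity case (Theorem~\ref{thm:roscoe2_abstract}), adapting each coinductive step to the inductive, language-based structure of trace equivalence. The key observation is that the preorder $\trincl$ plays the role that $\bsim$ played before: since $\trincl$ is respected by every \fct in $\mathcal{O}$ and is itself a preorder, it provides the rewriting and matching machinery needed to extract solutions. I would first establish, by analogy with the "protects its solutions" argument, that whenever $P$ is a solution of $X = f(X)$ (so $P \trequiv f(P)$, hence $P \trincl f(P)$ and $f(P) \trincl P$) and $f$ is well-behaved, any trace of a reduct $g(P)$ with $g \in \mathcal{O}$ can be traced back to an autonomous trace of a finite unfolding $g \circ f^n$, up to the solution contributing only finitely often.

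The heart of the argument is the following: given two solutions $P, Q$ and a trace $s$ with $P \Arr{s}$, I would show $Q \Arr{s}$, which yields $P \trincl Q$; by symmetry (swapping $P$ and $Q$) one gets $Q \trincl P$, hence $P \trequiv Q$. To do this I would run the same recursive construction as in the proof of Theorem~\ref{thm:roscoe_ccs}: starting from a trace of $f(P)$, I repeatedly unfold $f$ and try to realise the observed transitions as autonomous transitions of $f^\infty$ via Lemma~\ref{l:abstract_weaktrans} and Lemma~\ref{l:trace}. If the construction terminates, I obtain an autonomous trace $f^n \Arr{s} h$, which by Lemma~\ref{l:trace} lifts to $f^\infty \Arr{s}$, and then instantiating at $Q$ together with $Q \trincl f^n(Q)$ (a consequence of $Q$ being a solution, using that $f$ respects $\trincl$) gives $Q \Arr{s}$. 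If the construction fails to terminate, it produces a divergence of $f^\infty$; by the innocuousness hypothesis this divergence is of the form $\Delta^\infty$ for some divergence $\Delta$ of a finite $f^n$, which by Lemma~\ref{l:inj} means only finitely many unfoldings genuinely contribute, contradicting non-termination exactly as in the bisimilarity proof.

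The main obstacle I anticipate is handling \emph{infinite} traces, which have no analogue in the bisimilarity setting and which make Definition~\ref{d:trace} (via $\itrincl$, $\itrequiv$) subtle. For the finite-trace equivalence $\trequiv$ stated in the theorem, only Lemma~\ref{l:trace}, which is restricted to finite traces, is directly available; so I would keep the argument confined to finite traces and use $\trincl$ rather than $\trequiv$ throughout the matching step, precisely because (as the text notes) the \fcts are required to respect $\trincl$ and not merely $\trequiv$. The delicate point is that, unlike bisimilarity, trace inclusion does not let one match transitions \emph{step by step} with a coinductive hypothesis; instead one must match an entire trace at once. This is why I would process the whole trace $s$ through the recursive unfolding and invoke Lemma~\ref{l:trace} only after the autonomous trace $f^n \Arr{s}$ has been fully assembled, rather than attempting an inductive bisimulation-style diagram chase.

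Finally, I would note that because the hypothesis allows the equation to have \emph{no} solution, the statement is vacuously satisfied in that case; and that the well-behavedness of $f$ is what guarantees, via the closure properties in Definition~\ref{d:opsets} (identity, closure under composition, and preservation of autonomy), that the recursive construction stays inside $\mathcal{O}$ and that the terms $g \circ f^\infty$ appearing in Lemmas~\ref{l:abstract_trans}, \ref{l:abstract_weaktrans} and~\ref{l:trace} actually have the transitions used. The overall proof is therefore almost entirely a transcription of the proof of Theorem~\ref{thm:roscoe2_abstract}, with $\bsim$ replaced by $\trincl$ in the matching step and $\trequiv$ recovered at the end by the symmetric application, exactly as anticipated by the remark preceding the statement that "all theorems obtained for $\wb$ can be adapted to $\trequiv$, with similar proofs."
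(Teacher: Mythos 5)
Your proposal is correct and is essentially the paper's own proof: the paper factors the argument through the syntactic solution $f^\infty$ (its part showing $x\trincl f(x)$ implies $x\trincl f^\infty$ is exactly your unfolding construction that either terminates or yields a non-innocuous divergence, and its part showing $f(x)\trincl x$ implies $f^\infty\trincl x$ is exactly your step of instantiating the autonomous trace $f^n\Arr{s}$ at $Q$), whereas you compose these two halves directly to compare the solutions $P$ and $Q$. The only slip is a reversed inclusion: to pass from $f^n(Q)\Arr{s}$ to $Q\Arr{s}$ you need $f^n(Q)\trincl Q$, not $Q\trincl f^n(Q)$; this holds because $Q$ solves the equation for the equivalence, so $f(Q)\trincl Q$, and $f$ respects $\trincl$, giving $f^n(Q)\trincl f^{n-1}(Q)\trincl\dots\trincl Q$.
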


\begin{proof}
  We proceed by showing that $x\trincl f(x)$ implies
  $x\trincl f^\infty$, and then that $f(x)\trincl x$ implies
  $f^\infty \trincl x$. This indeed gives that $x\trequiv f(x)$
  implies $x\trequiv f^\infty$; hence the equation has at most one
  solution ($f^\infty$ does not belong to the LTS, hence it is not a
  solution).
\begin{enumerate}
\item \label{proof1_trace}
$f(x)\trincl x$ implies $f^\infty \trincl x$. For this part, the absence of divergence  hypothesis is not needed.

Let $s$ be a trace, and assume $f^\infty \Arr s$. By Lemma~\ref{l:trace}, there is 
$n$ such that $f^n\Arr s$. Hence, $f^n(x)\Arr s $. Since $f(x)\trincl x$ and $f$ is an 
\relfct \trincl, we have that $f^n(x)\trincl f^{n-1}(x)\trincl\dots \trincl x$. 
Hence, $x\Arr s$, and 
$f^\infty \trincl x$.
\item \label{proof2_trace}
$x\trincl f(x)$ implies $x\trincl f^\infty$. This part of the proof is very similar 
to the proofs of Theorems~\ref{thm:roscoe2_ccs} and~\ref{thm:roscoe2_abstract}.

Assume $x\trincl f(x)$, and $x\Arr s$. We want to show that there exists some $n$ 
such that $f^n \Arr s$, then apply Lemma~\ref{l:trace}; this would show $f^\infty \Arr s$.
To that end, we build a strictly increasing sequence 
of (autonomous) transitions $f^{n\times p}\Arr {s_n} g_n$,
such that $g_n(x)\Arr {s'n}$ and such that $s=s_n s'_n$ for all $n$ ($s_n$ and $s'_n$ are 
traces whose concatenation is $s$). Here $p$ is such that $f^p$ is 
autonomous (we always unfold $p$ times at once).
Strictly increasing means that the transition $f^{(n+1)\times p}\Arr{s_{n+1}} g_{n+1}$
can be decomposed as $f^{(n+1)\times p}\Arr{s_n}g_n\circ f^p\Arr {s''_n} g_{n+1}$ for some 
$s''_n$ ($s_{n+1}=s_n s''_n$).
This construction will have to stop, otherwise we 
build a non-innocuous divergence.

\subsubsection*{Construction of the sequence.}
Assume $p$ is such that $f^p$ is autonomous; such a $p$ exists, since $f$ is 
well-behaved. 

We initialise with the empty trace from $f^0=\identity$. Indeed, we have 
$\identity(x)\Arr{} \identity(x) \Arr{s} $, and $s$ is indeed the concatenation 
of itself with the empty trace.

Then, at step $n$, suppose we have, for instance,
$f^{n\times p} \Arr{s_n}g_n$, and $g_n(x)\Arr{s'_{n}}$.

%
\begin{itemize}
\item If $s'_n$ is the empty trace, we stop. We
  have in this case $f^n\Arr s $, which concludes.
\item Otherwise, we have 
  $f^{(n+1)\times p}\Arr{s_n} g_n\circ f$ (by autonomy). 
  The \fcts are \relfcts \trincl, and by hypothesis $x\trincl f(x)$, 
  therefore $g_n(x)\trincl g_n\circ f^p(x)$. 
  From there, $g_n\circ f^p(x)\Arr{s'_n}$. 
  
  $s'_n$ is not the empty sequence, so there is  $y_1,\dots,y_n$ for 
  $n\geq 1$ such that $g_n\circ f^p(x)\arr{\mu_1} y_1\arr{\mu_2}\dots\arr{\mu_n}y_n$, 
  where $s'_n=\mu_1\dots\mu_k$.
  Take $\mu_1\dots\mu_i$ the longest prefix of $s'_n$ such that there is \fcts
   $h_1,\dots,h_i$ such that $g_n\circ f^p\arr{\mu_1} h_1\arr{\mu_2}\dots\arr{\mu_i}h_i$.
  It cannot be empty: indeed, $f^p$ is autonomous and 
  composition respects autonomy, hence so is $g_n\circ f^p$, so $g_n\circ f^p(x)\arr{\mu_1} y_1$ 
  is always autonomous. 
  We take $g_{n+1}$ to be $h_i$, 
  $s'_{n+1}$ to be the trace corresponding to $\mu_{i+1}\dots\mu_k$ (i.e., removing all $\tau$s) 
  and $s_{n+1}$ to be $s_n s''_n$, where $s''_n$ is the trace corresponding to $\mu_1\dots\mu_i$ 
  (again, removing all $\tau$s). 
  
  Then, we indeed have $f^{n\times p}\circ f^p=f^{(n+1)\times p}\Arr{s_{n+1}} g_{n+1}$, 
  and $g_{n+1}(x)\Arr{s'_{n+1}}$, where $s=s_{n+1} s'_{n+1}$.
\end{itemize}

Suppose now that the construction given above never stops. To make  
the argument clearer, we reason up to $=_f$ (we identify all \fcts that 
are $=_f$). 
We know that
$g_n\circ{f^p} \Arr{s''_n} g_{n+1}$, therefore, by applying 
Lemma~\ref{l:abstract_weaktrans} as many times as needed (just as in Lemma~\ref{l:trace}), 
we get that $g_n\circ f^\infty \Arr{s''_n} g_{n+1}\circ f^\infty$.

 This gives an infinite
sequence of transitions starting from $ f^\infty$:
$f^\infty \Arr{s''_0}g_1\circ f^\infty\Arr{s''_1}\dots$. We observe that
in the latter sequence, every step involves at least one transition (the 
$s''_i$ are supposed non-empty, otherwise we would have stopped).
Moreover, the sequence $s''_1 s''_2\dots$, when removing all $\tau$s, 
is a prefix of $s$. 
Therefore there is a finite number of visible actions occurring in this infinite
sequence.  Therefore $f^\infty$ is divergent. 

Furthermore, this 
divergence is not innocuous: at every step, we know that the 
transition $g_{n+1}(x)\Arr{s''_{n+1}} $ is not autonomous, 
otherwise $s''_{n+1}$ would be part of $s_n$. Hence, 
there cannot be such a divergence $f^m\Arr{s''_0}\Arr{s''_1}\dots$, 
as this would imply $g_{n+1}(x)\Arr{s''_{n+1}} $ is autonomous for 
$n$ such that $n\times p \geq m$.
\qedhere
\end{enumerate}
\end{proof}

In contrast, the  theorems fail for
\emph{infinitary trace equivalence}, \itrequiv (whereby two processes 
are equated if they have the same traces,  
including the infinite ones), for
the same reason why the `unique solution of contraction' technique
fails in this case~\cite{popl15}.  As a counterexample, 
we consider 
equation $X =  a+\inpC a . \X$, whose syntactic solution has no
 divergences. 
The process $P\Defi \sum_{n>0}\inpC {a^n} $ is a solution, 
yet it is not \itrequiv-equivalent 
 to the syntactic solution of the
equation, {because the syntactic solution has an infinite trace
  involving $a$ transitions}.
{This phenomenon is due to the presence of 
  infinitary observables.}


\subsubsection*{Preorders.}
\newcommand{\twosim}{\asymp}
\newcommand{\ws}{\leq_{s}}

We show how the theory 
for  equivalences can be transported  onto \emph{preorders}. 
This means moving to \emph{systems of  pre-equations},
$\{  X_i \leq E_i\}_{i\in I}$.
With preorders,  
our theorems have a different shape: we do not use pre-equations to
reason about unique solution -- we expect interesting pre-equations to
have many solutions, some of which may be  incomparable with each
other.
We rather derive  theorems to prove that, in a given preorder, 
 any solution of a  pre-equation is below its
syntactic solution.

\iflong
{In the LTS we consider, an equation does not always have a solution.
We thus 
extend the original LTS with the transitions corresponding to the autonomous transitions of the syntactic solution $f^\infty$.
%
}


\fi

We write $\trincl$ for trace inclusion, $\itrincl$
for infinitary trace inclusion, 
and $\ws$ for weak simulation. 
These preorders 
are standard from the literature~\cite{spectrum}.

In the abstract setting, the body of the pre-equations are functions. 
Then the theorems give us conditions under which, given a pre-equation 
 $X \leq f(X) $ and a  behavioural  preorder $\preceq$,
a  solution $r$, i.e., a state 
 for which $r \preceq f(r) $ holds, 
is below  the syntactic solution $f^\infty$. 
We present 
the counterpart of Theorem~\ref{thm:roscoe2_abstract}; other theorems are
transported in a similar manner\iflong, both the statements and the  proofs.  
\else
.
\fi

\begin{theorem}
\label{t:preorder}
  Let $f\in \mathcal{O}$ be \hypabs \fct. 
  If  
  all divergences of $f^\infty$ are innocuous,
  then, given  a preorder 
${\leq}\in {\{\trincl, \itrincl,\ws \}}$, 
whenever  $x\leq f(x)$ we also have $x\leq f^\infty$, for any  state
$x$.
\end{theorem}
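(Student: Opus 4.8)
The plan is to handle the three preorders $\trincl$, $\itrincl$ and $\ws$ in parallel, reusing the construction from part~\ref{proof2_trace} of the proof of Theorem~\ref{thm:roscoe2_trace}. The common starting point is to turn the single hypothesis $x \leq f(x)$ into a chain of approximants: since every operator in $\mathcal{O}$ respects the relation $\leq$ under consideration, applying $f$ repeatedly gives $f^k(x) \leq f^{k+1}(x)$, hence $x \leq f^n(x)$ for every $n \geq 1$. Each matching argument below consults $f^n(x)$ for a suitably large $n$ and then transfers the observed behaviour onto $f^\infty$ by following autonomous transitions, unfolding $f$ further whenever the argument $x$ would otherwise be needed. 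This is exactly the device of Theorem~\ref{thm:roscoe2_trace}, and innocuousness of divergences is what controls it.

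For $\trincl$ there is nothing new: the implication $x \trincl f(x) \Rightarrow x \trincl f^\infty$ is precisely part~\ref{proof2_trace} of the proof of Theorem~\ref{thm:roscoe2_trace}. For $\ws$ I would exhibit a weak simulation witnessing $x \ws f^\infty$, namely (a suitable $\Arr{}$-closure of)
\[
  \mathcal{S} \;=\; \{\,(y, g \circ f^\infty) \;:\; g \in \mathcal{O},\ y \ws g(x)\,\}\,,
\]
which contains $(x,f^\infty)$ by taking $g=\identity$, and where the $g$'s reached in the simulation game are reducts of $f^\infty$. Given $y\arr\mu y'$, I use $y\ws g(x)$ to get a match $g(x)\Arr{\hat\mu}w$ with $y'\ws w$; since $g$ respects $\ws$ and $x\ws f^m(x)$, the same observation is available from $g\circ f^m(x)$, and running the autonomous-prefix construction (with $m$ large enough that $g\circ f^m$ is autonomous) yields an autonomous $g\circ f^m\Arr{\hat\mu}g'$, whence $g\circ f^\infty\Arr{\hat\mu}g'\circ f^\infty$ with $y'\ws g'(x)$. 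Each step matches a single weak transition, a finite task, so termination follows from innocuousness exactly as in the finite-trace case; the only care needed is in absorbing trailing $\tau$'s when handing the witness back to $\mathcal{S}$, a routine up-to-$\Arr{}$ adjustment.

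For $\itrincl$ the finite traces are covered by the same (finite) construction as for $\trincl$, now using that the operators respect $\itrincl$ and that $x\itrincl f^n(x)$. The genuinely new point — which I expect to be the main obstacle — is infinite traces. Given an infinite trace $s$ with $x\Arr s$, I would run the very construction of part~\ref{proof2_trace}, building autonomous transitions $f^\infty\Arr{s_n}g_n\circ f^\infty$ with $s=s_n s'_n$ and $g_n(x)\Arr{s'_n}$. Because $s$ is infinite the construction never stops, so it produces an infinite autonomous run of $f^\infty$, and the subtlety is to distinguish its two possible behaviours. If the lengths $|s_n|$ tend to infinity then $s_n\to s$ and the concatenated run is exactly $f^\infty\Arr s$, as desired. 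If instead $|s_n|$ stabilises, the run emits only finitely many visible actions while performing infinitely many autonomous steps, i.e.\ a divergence of $f^\infty$; by construction each such step is forced by a \emph{non-autonomous} transition of $g_n(x)$ (it consumes a genuine unfolding of $f$), so by Lemma~\ref{l:inj} this divergence is not innocuous, contradicting the hypothesis. Hence only the first behaviour is possible and $f^\infty\Arr s$ follows.

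The hard part is thus the $\itrincl$ infinite-trace analysis: one must argue that the inevitable non-termination of the construction is \emph{productive} (emits all of $s$) rather than stalling into a forbidden non-innocuous divergence. This is the mirror image of the finite case, where non-termination was used to derive a contradiction; here non-termination is the expected outcome, and the innocuousness hypothesis is used exactly to exclude the degenerate, non-productive form of it. Once this distinction is made precise, the remaining bookkeeping follows the pattern of Theorems~\ref{thm:roscoe2_abstract} and~\ref{thm:roscoe2_trace}, replacing equation expressions by operators throughout.
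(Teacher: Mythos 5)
Your proposal is correct and takes essentially the same route as the paper's proof: the $\trincl$ case is exactly item~2 of the proof of Theorem~\ref{thm:roscoe2_trace}, the $\ws$ case is the one-sided (simulation) restriction of the argument behind Theorems~\ref{thm:roscoe2_ccs}/\ref{thm:roscoe2_abstract} (your explicit candidate $\{(y, g\circ f^\infty) : y \ws g(x)\}$ is just that argument made concrete), and your dichotomy for $\itrincl$ --- either the visible prefixes $s_n$ exhaust the infinite trace, or from some point on only $\tau$'s are produced, yielding a non-innocuous divergence and a contradiction --- is precisely the case split the paper indicates. The only cosmetic slip is attributing non-innocuousness to Lemma~\ref{l:inj}; what is really used is the definition of innocuous divergence (no finite unfolding $f^n$ can exhibit the constructed divergence, since each step forces a genuine unfolding), which is the reason you in fact give.
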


\begin{proof}
\begin{enumerate}
\item For $\trincl$, the proof is given as the item~\ref{proof2_trace} of the proof 
of Theorem~\ref{thm:roscoe2_trace}.
\item For $\itrincl$, the proof is very similar to the previous proof: simply consider 
infinite traces, and rather that disjunct over whether or not the construction stops (the 
construction cannot stop), disjunct over whether the full trace is captured, or 
only $\tau$ actions occur from a moment in the construction.
\item For $\ws$ The proof is strictly included in the proofs of 
Theorems \ref{thm:roscoe2_ccs} or \ref{thm:roscoe2_abstract}.
\qedhere
\end{enumerate}
\end{proof}

Theorem~\ref{t:preorder} intuitively says that 
 the  syntactic solution 
of a  pre-equation
is \emph{maximal} 
among all solutions.  Note that, 
in contrast with equations, Theorem~\ref{t:preorder} and 
the theory of pre-equations also work for 
\emph{infinitary trace inclusion}.

The opposite direction for pre-equations, namely 
$\{  X_i \geq E_i\}_{i\in I}$ is less interesting; 
 it means 
 that the syntactic solution is \emph{minimal} among the solutions.
This property is usually easy to obtain for a behavioural
preorder, and we do not need hypotheses such as autonomy,
well-behavedness, 
or 
non divergence, as stated in the following proposition:


\begin{proposition}
\label{prop:min}
Let $f\in\mathcal{O}$ be an operator, $\leq \in \{\trincl, \ws\}$, and $x$ 
such that $f(x)\leq x$. Then, $f^\infty \leq x$.
\end{proposition}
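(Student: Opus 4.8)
The plan is to establish $f^\infty \leq x$ by exhibiting a concrete witnessing structure for the preorder $\leq$ (either a weak simulation, or a trace-inclusion argument) built from the hypothesis $f(x)\leq x$, iterated along the unfoldings of $f$. The key observation is that, since $\leq$ is preserved by the operators in $\mathcal{O}$ (congruence), from $f(x)\leq x$ we immediately get $f^{n+1}(x) = f(f^n(x)) \leq f(x)$ and more generally, composing with $f$ repeatedly, $f^{n}(x)\leq x$ for every $n$. The infinite term $f^\infty$ has the same transitions as its finite unfoldings (by Lemma~\ref{l:abstract_trans} and Lemma~\ref{l:abstract_weaktrans}, and Lemma~\ref{l:trace} for traces), so every behaviour of $f^\infty$ is already present in some $f^n$, and can thus be reflected back into $x$.

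First I would handle the trace-inclusion case $\trincl$. Suppose $f^\infty \Arr{s}$ for some finite trace $s$. By Lemma~\ref{l:trace} there is $n$ with $f^n \Arr{s}$, hence $f^n(x)\Arr{s}$. Since $f(x)\trincl x$ and every function in $\mathcal{O}$ respects $\trincl$, applying $f$ successively gives $f^n(x)\trincl f^{n-1}(x)\trincl \dots \trincl f(x)\trincl x$, so by transitivity $f^n(x)\trincl x$. Therefore $x\Arr{s}$, establishing $f^\infty \trincl x$. Note that, crucially, this direction does \emph{not} require the non-divergence hypothesis: every trace of $f^\infty$ is a trace of a \emph{finite} unfolding, so we never need to go to the limit, which is exactly why the proposition imposes no autonomy, well-behavedness, or divergence condition.

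For the weak-simulation case $\ws$, I would argue directly with a simulation relation. The natural candidate is $\mathcal{R} = \{(g\circ f^\infty,\, g(x)) \mid g\in\mathcal O\}$, and I would show it is a weak simulation, using $f(x)\ws x$ together with congruence of $\ws$ under $\mathcal{O}$. Given a transition $g\circ f^\infty \arr\mu F$, by Lemma~\ref{l:abstract_trans} there is $n$ and $h$ with $g\circ f^n \arr\mu h$ and $F =_f h'\circ f^\infty$; this transition is an instance of an autonomous transition, so $g\circ f^n(x)\arr\mu h(x)$, and since $f^n(x)\ws x$ (again by iterating $f(x)\ws x$ through the congruence) and $g$ respects $\ws$, the move is matched by $g(x)$, landing in a pair of $\mathcal{R}$ up to $=_f$ and $\ws$. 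Since $(f^\infty, x)\in\mathcal{R}$ (taking $g$ the identity), we conclude $f^\infty \ws x$.

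The main obstacle I anticipate is purely bookkeeping rather than conceptual: keeping the $=_f$ equivalence and the congruence bookkeeping straight, since $f^\infty$ is an infinite term that does not itself belong to $\mathcal O$, and its transitions are only defined via the unfolding rules. One must be careful that $f^n(x)\leq x$ really follows from $f(x)\leq x$ by repeated application of the fact that $\mathcal O$-functions respect $\leq$ — that is, that $f^n$ is obtained by composing $f$ with itself, and each such composite respects $\leq$ by clause~\ref{d:operators:congruence} of Definition~\ref{d:opsets}. Apart from this, the proof is short precisely because, unlike Theorem~\ref{t:preorder}, no divergence analysis is needed: reflecting finite behaviours suffices, so the result holds unconditionally for $\trincl$ and $\ws$.
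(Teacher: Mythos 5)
Your proposal is correct and takes essentially the same approach as the paper: for $\trincl$ your argument (reflect each trace of $f^\infty$ into a finite unfolding via Lemma~\ref{l:trace}, then use congruence and transitivity of $\trincl$ to get $f^n(x)\trincl x$) is exactly the paper's proof, which it gives by pointing to item (1) of the proof of Theorem~\ref{thm:roscoe2_trace}. For $\ws$ the paper only says the proof is ``very similar''; your explicit simulation candidate $\{(g\circ f^\infty,\, g(x)) \mid g\in\mathcal{O}\}$, working up to $=_f$ and up to $\ws$ on the right, is a faithful and sound way of spelling out that same argument coinductively, not a different route.
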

\fi

\begin{proof}
\begin{enumerate}
\item For $\trincl$, the proof is given as item~\ref{proof1_trace} of the proof of 
Theorem~\ref{thm:roscoe2_trace}.
\item For $\ws$, the proof is very similar to the above proof for $\trincl$.
  \qedhere
\end{enumerate}
\end{proof}


However, Proposition~\ref{prop:min} may fail for preorders with
 infinitary observables, such as infinitary trace inclusion.
\iflong
This is why the theory of equations fails for infinitary trace
equivalence. 
Indeed, if $P$ is a solution for the equation $E\itrincl \X$ (whether or not
 $E$ is  divergence-free), 
then 
$ \ssol E \itrincl P$ does not necessarily hold.
The equation $X = a+a.X$, which is discussed after
Theorem~\ref{thm:roscoe2_trace}, provides a counter-example to
illustrate this observation.


\begin{remark}
  Suppose that $x$ and $y$ are such that $x\leq f(x)$ and $y \leq
  x$. Then we do not necessarily have $y\leq f(y)$.
Take for instance, in CCS, $P= a .b$, $Q = a|b$ and $E = a.X+b.X$. Then $Q\leq_s E[Q]$, and 
$P\leq_s Q$, however $P\not\leq_s E[P]$.
\end{remark}
\subsection{Rule formats}

A way to instantiate the results in Sections~\ref{s:lts:ops} and~\ref{ss:equivalences}
 is to consider
\emph{rule formats}~\cite{sos:handbook,DBLP:journals/tcs/MousaviRG07}. 
These provide a specification for the form of the SOS rules  used to
describe the constructs  of a  language.  
To fit a rule format into the abstract formulation of the theory from Section~\ref{s:lts:ops},
we view the constructs of a language  as functions on  
the states of the LTS (the terms or processes of the language).

One of the most common formats is
GSOS~\cite{DBLP:journals/jacm/BloomIM95,gsos,gsos_weak,DBLP:conf/lics/FokkinkG16}. 
\iflong
The format guarantees that the constructs of the 
 language preserve autonomy. This allows to show that the set of unary contexts 
 (meaning, contexts that can be filled with one term) 
 in a GSOS language constitute a set of operators (Definition~\ref{d:opsets} 
 (when seen as functions from terms to terms), 
 and allows us to apply Theorem~\ref{thm:roscoe2_abstract} to any GSOS 
 language.
\fi

We now use $x,y$ for \emph{state variables}, and $t,u$ for terms that 
are part of the considered GSOS language (i.e., for elements of the set of states). 
We use $c,d$ for contexts, 
seen as functions from the set of states to itself. We write $\op$ for a construct 
of the language (a function symbol).



For a complete overview of rule formats and for the relevant definitions, 
we refer the reader to \cite{gsos}. We recall the format of GSOS 
rules below, slightly modified in order to use contexts: 
\begin{mathpar}
 \inferrule{\{x_i\arr{\mu_{i,j}}y_{i,j}\st i\in I,~1\leq j\leq m_i\}\cup 
 			\{x_j\notarr{\mu'_{j,k}}\st j\in J,~1\leq k\leq n_j\}}
 		{\op(\til x)\arr\mu c(\til x,\til y)}
\end{mathpar}
$I,J$ are fixed subsets of $[1,n]$, where $n$ is the length of
$\til x$. For $i\in I$ (resp.\ $j\in J$), $m_i$ (resp.\ $n_j$) is a
fixed integer. 
$\til y$ is the list consisting of all
$y_{i,j}$ for $i\in I$ and $1\leq j \leq m_j$, as well as all
$y_{j,k}$ for $j\in J$ and $1\leq k \leq n_j$. $c$ is a 
context belonging to the language.

\begin{lemma}\label{l:gsos}
The set of unary contexts $\mathcal{O}$ of a language defined within
the GSOS format 
 is such that composition in $\mathcal{O}$ preserves autonomy on $\mathcal{O}$.
\end{lemma}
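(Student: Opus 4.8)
The plan is to isolate a single closure property of GSOS operators --- that applying any operator to \emph{autonomous} arguments yields an autonomous function --- and then to lift it to arbitrary unary contexts by a structural induction. Recall that $g$ being autonomous on $\mathcal{O}$ means that every transition of $g(x)$ has the shape $g(x)\arr\mu g'(x)$ with $g\arr\mu\autonomous g'$ and $g'\in\mathcal{O}$; equivalently, the entire first-step behaviour of $g(x)$ --- which labels are enabled and with which derivatives --- is fixed independently of $x$, each derivative being a fixed function of $\mathcal{O}$ applied to $x$.

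First I would prove the key auxiliary fact: if $a_1,\dots,a_n\in\mathcal{O}$ are all autonomous and $\op$ is an $n$-ary operator of the language, then the function $x\mapsto \op(a_1(x),\dots,a_n(x))$ is autonomous. This is exactly where the defining feature of GSOS --- the absence of lookahead --- is used. Every transition of $\op(a_1(x),\dots,a_n(x))$ arises by instantiating some GSOS rule whose premises only test the first-step transitions and non-transitions of the arguments $a_i(x)$. By autonomy of each $a_i$, whether a positive premise $x_i\arr{\mu_{i,j}}y_{i,j}$ can be matched, and whether a negative premise $x_i\notarr{\mu'_{i,k}}$ holds, is independent of $x$; moreover a matched positive premise forces $y_{i,j}=h_{i,j}(x)$ for a fixed $h_{i,j}\in\mathcal{O}$ with $a_i\arr{\mu_{i,j}}\autonomous h_{i,j}$. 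Hence a rule applies for one value of $x$ iff it applies for all, and the instantiated target of the rule becomes the conclusion context $c$ with each argument replaced by an element of $\mathcal{O}$ applied to the same $x$ --- that is, $k(x)$ for some $k\in\mathcal{O}$, using that $\mathcal{O}$ is closed under composition. Thus $\op(a_1,\dots,a_n)\arr\mu\autonomous k$, so every transition of $\op(a_1(x),\dots,a_n(x))$ is autonomous.

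With this fact in hand, I would prove the lemma by structural induction on the unary context $f$, keeping an autonomous $g\in\mathcal{O}$ fixed. If $f$ is the hole, then $f\circ g=g$, which is autonomous by hypothesis. If $f=\op(u_1,\dots,u_n)$, then $f\circ g=\op(u_1\circ g,\dots,u_n\circ g)$ as functions; each $u_i\circ g$ is autonomous by the induction hypothesis, and the auxiliary fact applied to $\op$ and the autonomous arguments $u_i\circ g$ gives that $f\circ g$ is autonomous.

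I expect the main obstacle to be the precise treatment of the auxiliary fact, specifically pinning down that the GSOS format forbids lookahead: rule premises constrain only the \emph{immediate} transitions of the arguments, so once these are fixed by autonomy the full set of transitions of the compound term is determined uniformly in $x$. The negative premises deserve a word --- one must check that $a_i(x)\notarr{\mu'}$ holds for all $x$ at once --- but this is again immediate, since autonomy completely determines the enabled labels of $a_i(x)$. It is worth flagging the contrast: this argument breaks for the \emph{tyft/tyxt} format, where a premise may inspect a derivative of an argument, so that fixing the first-step transitions of the arguments no longer determines the behaviour of the compound term.
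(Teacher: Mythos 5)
Your proof is correct and follows essentially the same route as the paper: a structural induction over unary contexts, where the inductive step analyses an instantiated GSOS rule, uses autonomy of the composed arguments to show that positive premises, negative premises, and the instantiated target are all uniform in the argument $x$. The only difference is presentational --- you factor the rule analysis out as an auxiliary fact (an operator applied to autonomous functions is autonomous) and then the induction collapses to two lines, whereas the paper performs the same analysis inline in the inductive case.
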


\begin{proof}
Let $\mathcal{L}$ be such a language. We reason by induction over the
contexts (which are defined inductively using the constructs of
$\mathcal L$).
\begin{itemize}
\item The empty context (i.e., the `hole', corresponding to the identity function $\identity$) preserves 
autonomy: if $c\in \mathcal{O}$ is autonomous, then so is $\identity \circ c= c$.
\item Consider a $n$-ary construct of the language (a function symbol), $\op$, 
and a context $c=\op(c_1,\dots,c_n)$, 
where the $c_i$'s are unary contexts. 

We have to show that so is $c\circ c'$, for  $c'$ autonomous.
Consider for this a transition $c\circ c' (t)\arr\mu u$
($t,u\in\mathcal{L}$), we prove that this transition is autonomous.
The last rule of the derivation tree of this transition is an instance
of a GSOS rule as seen above:
\begin{mathpar}
 \inferrule{\{x_i\arr{\mu_{i,j}}y_{i,j}\st i\in I,~1\leq j\leq m_i\}\cup 
 			\{x_j\notarr{\mu'_{j,k}}\st j\in J,~1\leq k\leq n_j\}}
 		{\op(\til x)\arr\mu c_0(\til x,\til y)}
\end{mathpar}
($I, J$ and the $m_i$s and $n_j$s being fixed, as well as $c_0$).

By definition $c\circ c'(t)=\op(c_1\circ c'(t),\dots,c_n\circ c'(t))$. 
Therefore the $x_i$ must be instantiated by the terms $c_i\circ c'(t)$. 
Hence there are $u_{i,j}\in\LL$ such that $c_i\circ c'(t)\arr{\mu _{i,j}} u_{i,j}$ 
for $i\in I,~1\leq j\leq m_i$. As well, $c_j\circ c' (t)\notarr {\mu _{j,k}}$ 
for $j\in J,~1\leq k\leq n_j$. Lastly, writing $u=c_0(\til c\circ c'(t),\til u)$ and 
$c\circ c'(t)\arr\mu c_0(\til c \circ c'(t),\til u)$.

By induction hypothesis, each of the $c_i$'s preserves 
autonomy under composition on $\Op$, hence $c_i\circ c'$ is autonomous for all $i$.
Hence there are autonomous  transitions $c_i\circ c'\arr{\mu _{i,j}} d_{i,j}$, 
where $d_{i,j}$ are contexts such that $u_{i,j}=d_{i,j}(t)$. 

Furthermore, for any $t'\in\LL$, $c_j\circ c' (t')\notarr {\mu _{j,k}}$: if there 
was such a $t'$, by autonomy of $c_j\circ c'$, we would have $c_j\circ c'(t)\arr{\mu _{j,k}}$. 
 
Therefore, for any $t'\in\LL$, the premises of the above GSOS rule hold; 
hence we can deduce $c\circ c' (t')\arr\mu c_0(\til c \circ c'(t'),\til d(t))$. 
Thus there is an autonomous transition $c\circ c'\arr\mu c_0(\til c \circ c',\til d)$, 
and $u= (c_0(\til c \circ c',\til d))(t)$. The transition $c\circ c' (t)\arr\mu u$ is 
indeed autonomous. This concludes the proof.
\qedhere
\end{itemize}
\end{proof}

\begin{proposition}\label{prop:gsos}
Consider the
 set of unary contexts 
of a language defined within the GSOS formats, and   suppose that
bisimilarity is preserved by these contexts. This set constitutes a 
set of operators.
\ifnot 
\textbf{old version , more general:}
\textbf{version alternative, plus generale:}\danielmain{ok pour cette
  seconde version, en indiquant que la premiere version (qui est celle
  qui nous interesse, I suppose) s'obtient comme corollaire}
A set of contexts $\mathcal{O}$ 
of a language defined within the GSOS formats that is closed under 
composition, contains the identity and 
such that bisimilarity is preserved by the contexts of $\mathcal{O}$, constitute a 
set of operators.
\fi
\end{proposition}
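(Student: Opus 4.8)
The plan is to verify directly the four conditions of Definition~\ref{d:opsets} that characterise a set of operators, taking the relation $\rel$ to be bisimilarity $\bsim$. Let $\mathcal{O}$ denote the set of unary contexts of the given GSOS language, viewed as functions from states to states. Two of the four conditions are immediate from the structure of contexts, one is the standing hypothesis, and the remaining one has already been isolated as a separate lemma, so the work of the proof reduces to assembling these facts.

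First I would treat condition~(1): the hole, i.e.\ the empty context, is a unary context and coincides with the identity function $\identity$, so $\identity\in\mathcal{O}$. For condition~(2), closure under composition, I would observe that filling the hole of one unary context with another again yields a unary context, and that this operation on contexts realises exactly the composition $f\circ g$ of the corresponding state functions; hence $\mathcal{O}$ is closed under composition.

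Condition~(3), that composition preserves autonomy on $\mathcal{O}$, is precisely the content of Lemma~\ref{l:gsos}, which is therefore invoked directly. Finally, condition~(4), that every function in $\mathcal{O}$ respects $\bsim$, is exactly the hypothesis of the proposition, namely that bisimilarity is preserved by the unary contexts of the language. Having checked all four clauses, $\mathcal{O}$ is a set of operators.

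The only substantive step is condition~(3), and its difficulty is entirely absorbed by Lemma~\ref{l:gsos}; that lemma proceeds by induction on the structure of contexts and uses the shape of GSOS rules—in particular the fact that both the positive and negative premises depending on the argument term can be discharged uniformly in the argument—which is where the GSOS format is genuinely exploited. I do not expect any real obstacle here beyond that lemma: everything else is bookkeeping about identifying unary contexts with state functions and checking that composition of contexts matches composition of functions.
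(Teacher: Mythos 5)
Your proof is correct and follows essentially the same route as the paper's: both reduce the claim to checking the four clauses of Definition~\ref{d:opsets}, noting that the identity (empty context), closure under composition, and congruence are immediate or given by hypothesis, and that the only substantive clause—preservation of autonomy under composition—is exactly Lemma~\ref{l:gsos}.
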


\begin{proof}
This is a direct consequence of Lemma~\ref{l:gsos}: 
for any language, 
the empty context acts as the identity function over the set of terms, 
and by construction contexts can be composed; by hypothesis, we consider contexts 
that respect bisimilarity. All is left to check is clause~\ref{d:operators:autonomy:comp} of Definition~\ref{d:opsets}, 
i.e., that composition preserves autonomy over $\Op$: this is give by 
Lemma~\ref{prop:gsos}.
\end{proof}

Some GSOS rule formats guarantee congruence for weak
bisimilarity~\cite{gsos_weak,DBLP:conf/lics/FokkinkG16}, which
gives {clause~\ref{d:operators:congruence} of Definition~\ref{d:opsets}}. 
{This condition
  could actually be relaxed,
  by imposing congruence only for certain forms of contexts, according
  to how variables occur in the equations. For instance, 
  if certain syntactical constructs are
  not used, we do not need congruence to hold for these constructs. 
%
This can be useful, for instance, to allow non-guarded sums in CCS: 
in sums, we require variable occurrences to be under prefixes (and the
sum itself need not between prefixed processes).
}

\begin{remark}
Proposition~\ref{prop:gsos} makes it possible to apply 
Theorems~\ref{thm:roscoe_abstract} and~\ref{thm:roscoe2_abstract} 
to any GSOS language that additionally preserves bisimilarity.
However the formulation of the `syntactic solution of an equation' and of `innocuous
divergence'  in Section~\ref{s:lts:ops} is not as easy to grasp as in CCS. 

An alternative would be 
 to assume that the language has constants for recursively defined
 processes, which would allow us to remain closer to the setting of
 CCS, as exposed in Section~\ref{s:mt}.
\end{remark}


  

\ifnot
\begin{theorem}\label{t:gsos}
Consider a language whose constructs have SOS rules in the  GSOS
format and preserve $\wb$,  
 and an equation $X =E $ for the language, where $E$ is a function
 corresponding to BLABLA
 If $E$ is autonomous 
(over the set of functions corresponding to the 
 contexts of the language), 
 and if, in the language extended with constants, 
the syntactic solution of the equation
 only has innocuous divergences,
then either the equation has no
solution or it has a unique solution for $\bsim$.
\end{theorem}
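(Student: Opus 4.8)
The plan is to reduce this statement to the abstract unique-solution theorem with innocuous divergences, Theorem~\ref{thm:roscoe2_abstract}, by instantiating the abstract LTS with the one induced by the GSOS language together with its extension by constants. All the work lies in checking that the hypotheses stated here for the GSOS language translate into the hypotheses of the abstract theorem; once that translation is in place, the conclusion is immediate.

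First I would fix the set of operators. By Proposition~\ref{prop:gsos}, the set $\mathcal{O}$ of unary contexts of the language constitutes a set of operators in the sense of Definition~\ref{d:opsets}; this uses exactly the assumptions that the constructs are in GSOS format and preserve $\bsim$ (the latter supplying the congruence clause of Definition~\ref{d:opsets}). The function $E$ of the statement is, by assumption, such a context; writing $f$ for the corresponding operator, the hypothesis that $E$ is autonomous says that $f$ is autonomous on $\mathcal{O}$, and in particular that $f$ is well-behaved (take $f^1=f$). Hence $f$ satisfies the standing requirement on the operator (autonomy, and a fortiori well-behavedness) demanded by Theorem~\ref{thm:roscoe2_abstract}.

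Second I would identify the syntactic solution and bridge the two notions of divergence. The equation $X=E$ has as syntactic solution, in the language extended with constants, the constant $K_E$ with $K_E\Defi E[K_E]$; abstractly this is the infinite term $f^\infty = f\circ f\circ\dots$. The crucial step is a correspondence lemma showing that the divergences of $K_E$, computed in the extended LTS via the GSOS rules plus the rule for constants, coincide with the divergences of $f^\infty$ built from autonomous transitions, and that innocuousness is preserved by this correspondence. This is the GSOS counterpart of the correspondence already established abstractly: Lemma~\ref{l:inj} turns a divergence of a finite unfolding $f^n$ into one of $f^\infty$, and Lemma~\ref{l:abstract_trans} matches the transitions of $f^\infty$ with those of the finite unfoldings; the map between $K_E$ and $f^\infty$ is the same one used for CCS between a constant and its infinite unfolding. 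Granting this, the assumption that $K_E$ has only innocuous divergences is exactly the assumption that all divergences of $f^\infty$ are innocuous. With $f$ autonomous in $\mathcal{O}$ and all divergences of $f^\infty$ innocuous, Theorem~\ref{thm:roscoe2_abstract} applies verbatim and yields that $X=f(X)$ either has no solution or has a unique solution for $\bsim$, which is the desired conclusion.

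The main obstacle is precisely this bridging step: making the abstract divergences of $f^\infty$ agree with the concrete divergences of the constant $K_E$ in the extended language, and checking that the innocuousness counting matches on both sides. Concretely, one must verify that each top-level transition of $K_E$ arises, through the rule for constants, as an \emph{instance} of an autonomous transition of $f$ (this is where autonomy of $f$ is used), so that the translation of divergences is well-defined; and that the number of uses of the constant rule for $K_E$ corresponds to the number of finite unfoldings needed to reproduce a divergence of $f^\infty$, so that ``innocuous on $K_E$'' and ``innocuous for $f^\infty$'' name the same divergences. This amounts to restating Lemmas~\ref{l:inj} and~\ref{l:abstract_trans} for the GSOS extension with constants; everything else is a direct appeal to Proposition~\ref{prop:gsos} and Theorem~\ref{thm:roscoe2_abstract}.
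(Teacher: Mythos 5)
Your proposal is correct and follows essentially the same route as the paper: the paper's proof is a direct two-line appeal to Proposition~\ref{prop:gsos} (the unary GSOS contexts form a set of operators, congruence being supplied by the preservation hypothesis) and then to Theorem~\ref{thm:roscoe2_abstract} applied to the operator given by the context $C$ with $C[X]=E$. The bridging step you single out as the main obstacle is sidestepped in the paper, whose published statement phrases the innocuous-divergence hypothesis directly on the abstract infinite unfolding $f^\infty$ in the sense of Section~\ref{s:generalisation} rather than on a syntactic solution $K_E$ in the language extended with constants (the constants-based formulation is only mentioned as an alternative in a remark), so your extra correspondence between the divergences of $K_E$ and those of $f^\infty$ is sound and would indeed be what is needed for the statement as phrased, but it is not carried out in the paper's own development.
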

\fi


We now state the abstract version of Theorem~\ref{thm:roscoe2_ccs}. For this, we rely
on  the definition of divergence and innocuous divergence 
from Section~\ref{s:lts:ops}, and use contexts of a GSOS language to
stand for operators.

\begin{theorem}
Consider a language whose constructs have SOS rules in the  GSOS
format and preserve $\wb$. Consider an equation of the form 
 $X =E$, where $E$ is a function corresponding to a context of the
 language (seen as an equation expression), that is autonomous and has  only innocuous divergences, 

Then either the equation has no
solution or it has a unique solution for $\bsim$.
\end{theorem}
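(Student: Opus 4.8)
The plan is to treat this statement as a straightforward corollary that glues together the GSOS machinery of Proposition~\ref{prop:gsos} with the abstract unique-solution result Theorem~\ref{thm:roscoe2_abstract}. Nothing new needs to be proved at the level of the construction of bisimulations or of the divergence analysis; the entire task is to package the given GSOS language into the abstract framework of Section~\ref{s:lts:ops} and then to check, one by one, that the hypotheses of Theorem~\ref{thm:roscoe2_abstract} hold for the operator corresponding to $E$.

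First I would fix $\mathcal{O}$ to be the set of unary contexts of the language, viewed as state functions (a context $c$ is the function $t\mapsto c(t)$ on the set of states). Since by assumption every construct of the language preserves $\wb$, an easy induction on the structure of contexts shows that every context preserves $\wb$ as well; this is precisely clause~\ref{d:operators:congruence} of Definition~\ref{d:opsets}. Combined with Lemma~\ref{l:gsos}, which guarantees that composition in $\mathcal{O}$ preserves autonomy, Proposition~\ref{prop:gsos} then tells us directly that $\mathcal{O}$ is a set of operators in the sense of Definition~\ref{d:opsets}. This is the step that does all the structural work, and it is already available.

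Next I would verify the operator-level hypotheses for $E$. By assumption $E$ is a context of the language, hence an element of $\mathcal{O}$, and it is \emph{autonomous}; in particular $E$ is well-behaved, since $E^1=E$ is already autonomous, so $n=1$ witnesses well-behavedness and $E$ satisfies the standing hypothesis of Theorem~\ref{thm:roscoe2_abstract}. The remaining hypothesis of that theorem, that all divergences of $E^\infty$ are innocuous, is literally what is assumed in the statement, once we read ``$E$ has only innocuous divergences'' through the definitions of divergence (Definition~\ref{d:abs:div}) and of innocuous divergence (via Lemma~\ref{l:inj}) of Section~\ref{s:lts:ops}. Applying Theorem~\ref{thm:roscoe2_abstract} to the operator $E\in\mathcal{O}$ then yields that $X=E(X)$ either has no solution or has a unique solution for $\bsim$, which is exactly the claim.

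The step I expect to require the most care is purely the bookkeeping that bridges the two vocabularies: confirming that ``autonomous'' in the GSOS/context reading coincides with ``autonomous on $\mathcal{O}$'' in the sense of the induced LTS of operators, and that the divergences spoken of in the statement are exactly the autonomous divergences of $f^\infty$ treated in Section~\ref{s:lts:ops}, so that no reformulation is needed before invoking Theorem~\ref{thm:roscoe2_abstract}. Once the identification of $\mathcal{O}$ with the set of unary contexts is made and shown, via Proposition~\ref{prop:gsos}, to satisfy all four clauses of Definition~\ref{d:opsets}, there is no genuine mathematical obstacle remaining: the theorem is a direct instantiation of the abstract result, and the proof reduces to citing Proposition~\ref{prop:gsos} and Theorem~\ref{thm:roscoe2_abstract}.
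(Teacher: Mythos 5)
Your proposal is correct and follows essentially the same route as the paper's own proof: identify the body $E$ with a unary context of the language, invoke Proposition~\ref{prop:gsos} to conclude that the set of unary contexts forms a set of operators, and then apply Theorem~\ref{thm:roscoe2_abstract} to that operator. The extra bookkeeping you flag (autonomy of $E$ giving well-behavedness with $n=1$, and matching the divergence vocabularies) is implicit in the paper's two-line argument, so your write-up is if anything slightly more explicit than the original.
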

\begin{proof}
This is a direct consequence of Theorem~\ref{thm:roscoe2_abstract} and 
Proposition~\ref{prop:gsos}. Consider the context $C$ such 
that $C[X]$ and $E$ are syntactically equal. Then $C\in\Op$, 
where $\Op$ is the set of unary contexts of the language; by Proposition~\ref{prop:gsos}, 
it is a set of operators. Using Theorem~\ref{thm:roscoe2_abstract} we deduce that the 
equation has a unique solution.
\end{proof}

Checking 
the autonomy property  
 is often
 straightforward; for instance, 
it holds if, in the body of an equation, all variables are underneath
 an axiom construct, that is, a construct that (like prefix in CCS) is defined by
 means of SOS rules in which the set of premises  is empty. 
{Definitions of guardedness for SOS specifications from~\cite{guardedness} can also 
 be applied here.}

%

\subsubsection*{tyft/tyxt formats.}\label{ss:tyft}
We briefly discuss other widely studied formats, the tyft/tyxt formats~\cite{tyft}. 
In those formats, lookaheads are possible. 
Lookahead allows one to write rules that 
`look into the future' (a  transition is allowed if  certain sequences of
actions are possible); this breaks autonomy (condition~\ref{d:operators:autonomy:comp} of Definition~\ref{d:opsets}), 
hence Theorem~\ref{thm:roscoe2_abstract}
 does not hold in a tyft/tyxt language.
 We provide a counter-example,
in a language with three constructions: first, a prefix constructor,
like the prefix of CCS, and with the same rule (it is indeed a tyft
rule). We assume that we can use the prefix construction with at least two distinct labels, $a$ and $b$. Second, a constant $\mathbf{0}$, that has no rule (as in CCS). Third, a constructor $\forwd$ with the following tyft rule:
$$\displaystyle{ x\arr{\mu'} y ~~~~y\arr\mu z
\over {\forwd (x)\arr\mu z }} $$
Then, the equation $X= a.\forwd (X)$ is autonomous or (strongly) guarded, but does not enjoy unique solution for $\bsim$: both $a.\mathbf{0}$ and $a.b.\mathbf{0}$ are solution for $\bsim$.

\section{Name Passing: the \pc}
\label{s:pi}

\subsection{Unique solution in the asynchronous \pc}\label{s:usol:api}

In this section, we port our results onto the asynchronous \pc, \Apc~\cite{SW01a}
(addressing the full $\pi$-calculus would also be possible, but somewhat more
involved).
The theory of \Apc\ is simpler than that of the synchronous
$\pi$-calculus. Notably, bisimulation does not require closure under
name instantiation.

\subsubsection*{Equation expressions,  processes and systems of equations.} 
For the $\pi$-calculus, we inherit the notations from CCS. 
 $a,b, \ldots, x,y, \ldots$  
range  over the infinite set of names, and $X, Y, \ldots$ range over
equation variables.


%

{In the $\pi$-calculus, the solutions of equations are \emph{process
abstractions}, i.e., processes which are made parametric over their
free names. This allows us to work with closed agents, which makes the treatment of
equations easier (in particular w.r.t.\ name capture).
  }

In order to define the syntactical entities involved in the reasoning
about the \pc, it is useful to start by introducing equation expressions,
ranged over using $E, E', \ldots$,
as follows:
\begin{align*}
F & ~:=~ K \midd X \tag*{(abstraction identifiers)}
     \\[.3em]
E & ~:=~ \nil    \midd    \inp a \tilb . E    \midd    \out a \tilb 
   \midd     
  E_1 |  E_2   \midd    \res a E        
 \midd ! \inp a \tilb . E \midd \app F \tila \tag*{(equation expressions)}
\end{align*}

Inputs of the form $\inp c \tila . E$ and  $! \inp c \tila . E$, 
and restrictions $\res a E$ are
binders for the names
$\tila$ and $a$, with scope $E$. 


Replication could be avoided in the syntax since it can be encoded with recursion. However
 replication is a useful construct for  examples and it is therefore convenient to
have it as a primitive in
the syntax because some of the conditions for our unique-solution theorems 
will then be easier to check (we only allow replicated inputs, which
are guarded).

The grammar for equation expressions includes the
application construct $\app F\tila$, used to instantiate the formal
parameters of the abstraction identifier $F$ with the actual
parameters $\tila$. Indeed, since the operational semantics of the
$\pi$-calculus makes use of name substitution, 
the body of recursive definitions and of equations is
parametrised over a set of names (as will be apparent below).

\emph{Processes}, ranged over using $P, Q, \ldots$, are defined as
equation expressions in which no occurrence of equation variables is
allowed (therefore the grammar for processes, $P$, is like the grammar
for $E$ above, except that $K$ replaces $F$).

 \emph{Process abstractions} are of the form $\abs\tila P$, where
 $\tila$ is a non-empty tuple of distinct names. In $\abs\tila P$,
 names $\tila$ are bound in $P$, giving rise to the usual notion of
 bound and free names. A process abstraction is \emph{closed} if it
 has no free name.

 \emph{Agents.} 
Processes, process abstractions, and  constant identifiers
form the set of \emph{agents}: 
\[
A \quad := \quad P\midd \abs \tila P \midd K 
\hspace{2cm} \tag*{\mbox{(agents)}}
\]

An agent is \emph{closed} if it does not have free 
names. 
\emph{Recursive definitions.}  $K$ is used to stand for recursively
defined abstractions. Each $K$ should have a defining equation of
the form $K \Defi \abs\tila P$, with the additional constraint that $\abs\tila P$ is
closed (i.e., it has no free name).
  
  

\emph{Systems of equations.} In order to write equations,
we need to work with \emph{equation abstractions}, written
$\abs\tila E$, where $\tila$ is a non-empty tuple of distinct
names. 
%
%
Again, in $\abs\tila E$, names $\tila$ are bound in $E$, giving rise
to the usual notion of bound and free names. We say that $\abs\tila E$
is \emph{closed} if it has no free name.

An equation is of the form $X = \abs\tila E$, where the body
$\abs\tila E$ of the equation is
closed, and can contain the equation variable $X$.
%
A solution of such an equation is given by a \emph{closed} process abstraction of
the form $\abs\tilb P$ (or, in the case where $\tila$ is empty, by a
closed process).  
Imposing this condition allows us to ignore difficulties related to
name capture (if, for instance,  we substitute (non closed) $P=\inp
ax.Q$ for $X$ in the equation expression $E=\res a\,X$, name $a$ gets
captured by the restriction).





Equation variables occurring in equation expressions can be
instantiated with process abstractions. When doing so, we instantiate
names in the process abstractions according to the usages of the
equation variable in the equation expression; this operation is done
implicitly, and does not correspond to a computation step. We
illustrate this on an example:
\begin{example}
  Consider the equation expression $E=\out c{d} | \app X
  \tila$, having $X$ as only equation variable.  
  Then
  $E[\abs{\tilb} P]$ stands for
  $\out cd | P\{\tila/\tilb\}$.
%
Note that the equation associated to equation expression $E$ is of the
form $X = \abs{c,d,\tila}E$.
\end{example}


We can also instantiate equation variables 
with equation abstractions: if $E_1$ is an equation expression and
$\abs\tila E_2$ is an equation abstraction, 
then $E_1[E_2]$ stands for
the equation expression obtained by instantiating the variable in
$E_1$ with $\abs\tila E_2$.

The definitions above for equations are extended to systems of
equations; in particular, the solution of a system of equations is
given by a tuple of process abstractions.

As in CCS, we can turn equations into recursive definitions; this
yields the \emph{syntactic solution} of the equations.






%

\ifnot
\finish{i wonder if replication should be a primitive. It can be encoded, but we use it
  also inside equations, and then the encoding becomes a bit delicate, in particular if we
say things like `if equations are guarded then...', if there are replications that have to
be encoded the statement becomes a bit obscure} 
\daniel{we apparently only need replicated inputs (to phrase
  example~\ref{ex:param:api}  in this subsection, and to  treat the example of lambda
  encoding); we could discuss only the encoding of replicated inputs,
  and notice that it gives rise to a guarded definition.}
\fi
%

\emph{Sorting.} Since the calculus is polyadic, we assume a
\emph{sorting system} \cite{Mil99} to avoid disagreements in the
arities of the tuples of names carried by a given name.  The sorting
is extended to agents in the expected manner.  Similarly, when
considering equation variables, it is intended that each variable has
a sort so that 
we know the number and the sorts of their parameters.  An
equation expression is also sorted, 
and, accordingly, composition of equation
expressions is supposed to be well-sorted.

We will not present the sorting system 
because  it is well-studied (cf.~\cite{Mil99,SW01a}), and adapting it to our
setting raises no particular difficulty. 
The reader should
take for granted that all agents described  obey  a sorting. 

\subsubsection*{Transitions.} 
As in CCS, transitions are of the form $P\arr\mu P'$, where the
grammar for actions is given by
$$
\mu\quad:=\quad \inp a\tilb\midd \res{\til{d}}\out a\tilb\midd \tau
\enspace.
$$

$\res{\emptyset}\out a\tilb$ is written simply $\out a\tilb$. Free and
bound names of an action are defined as usual.
%
%
Figure~\ref{f:lts:api} presents the transition rules for \Apc.
We maintain from CCS 
 the notations for transitions, such as 
$\Arr {}$, $\Arcap \mu$, and so on.

\begin{figure}[ht]
\begin{mathpar}
  \inferrule{~}{\inp a\tilb.P \arr{\inp a\tilb}P}
  \and
  \inferrule{~}{!\inp a\tilb.P \arr{\inp a\tilb}!\inp a\tilb.P | P}
  \and
  \inferrule{~}{\out a\tilb\arr{\out a\tilb}\nil}
  \and
  \inferrule{P\arr{\res{\til{d}}\out a\tilb}P'
  }{
    \res n P\arr{\res{(\{n\}\cup\til{d})}\out a\tilb} P'}
  n\in\tilb
  \and
  \inferrule{P\arr\mu P'
  }{
    \res n P\arr\mu \res n P'}
  d\notin\names{\mu}
  \and
  \inferrule{P\arr{\inp a \tilb}P'\and Q\arr{\res{\til{d}} \out a{\til{b'}}}Q'
  }{
    P | Q\arr\tau \res{\til{d}}(P'\sub{\til{b'}}{\tilb} | Q')
  }
  \and
  \inferrule{P\arr\mu P'}{P | Q\arr\mu P' | Q}\bnames\mu\cap\fnames
  Q=\emptyset
  \and 
  \inferrule{P\{\tilb/\tila\}\arr\mu P'}{\app K\tilb\arr\mu P'} \mbox{ if }K\Defi \abs\tila P
\end{mathpar}
  \caption{\Apc: Labelled Transition Semantics}
  \label{f:lts:api}
\end{figure}

In bisimulations or similar coinductive relations for the asynchronous
$\pi$-calculus, no name instantiation
is required in the input clause or elsewhere (provided
$\alpha$-convertible processes are identified); i.e., the
\emph{ground} versions of the relations are congruence relations
 \cite{SW01a}. 
Similarly, 
the extension of bisimilarity to agents only considers fresh names:
 $A  \wb A'$ if $\app{A}{\tila} \wb \app{A'}{\tila}$ where $\tila $ is  a tuple 
of fresh names (as usual, of the appropriate sort).


\begin{example}\label{ex:param:api} 
If $K$ is the syntactic solution of the 
 equation $X=(a)~!a(x).X\narg x$, then 
we have  $\app K b \arr{b(y)} \app  K y |  \app K b $.
\end{example}

Theorems~\ref{thm:roscoe_ccs} and~\ref{thm:roscoe2_ccs} for CCS can be
adapted to the asynchronous $\pi$-calculus. 
The definitions concerning transitions
and divergences are transported to \Apc 
as
expected. In the case of an abstraction, 
one first has to
instantiate the parameters with fresh names; thus $F$ has a divergence
if the process $\app F\tila$ has a divergence, where $\tila$ are fresh
names.

\begin{theorem}[Unique solution in  A$\pi$]\label{thm:roscoe_api}
A guarded system of equations 
whose  syntactic 
solutions do not contain divergences has  a unique 
solution for $\bsim$.\end{theorem}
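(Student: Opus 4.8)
The plan is to transport, ingredient by ingredient, the CCS proof of Theorem~\ref{thm:roscoe_ccs}. First I would carry over the notion of a system \emph{protecting its solutions} to \Apc: the system $\til E$ protects its solutions if, for every solution $\til P$ and every reduct $E'\in\reds{\til E}$, any weak transition $E'[\til P]\Arr\mu Q$ can be matched by $E'\croc{\til E^n}\Arr{\hat\mu}E''$ for some $n$ and $E''$ with $E''[\til P]\bsim Q$, the bound names of $\mu$ being taken fresh for $E'$ and $\til P$. The theorem then splits into two claims: \textbf{(i)} a system that protects its solutions has a unique solution for $\bsim$ (the analogue of Proposition~\ref{l:protects}), and \textbf{(ii)} a guarded system whose syntactic solutions do not diverge protects its solutions (the analogue of the body of Theorem~\ref{thm:roscoe_ccs}).

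For claim (i) I would reuse the candidate relation and bisimulation diagram of Proposition~\ref{l:protects}, namely
\[
\R \defi \{(S,T) \mid \exists E'\in\reds{\til E},\ S\bsim E'[\til P],\ T\bsim E'[\til Q]\}.
\]
The only point that differs from CCS is the justification of the vertical steps of the diagram: where the CCS proof invokes ``congruence and bisimilarity'' to push a transition through the context $E'$ and across the two solutions, here I would invoke the congruence properties of $\bsim$ in \Apc. This is exactly where the fact that, in the asynchronous calculus, bisimilarity is closed under name substitution (hence is a congruence for input prefix and for the application of abstractions) is used: instantiating a variable $X$ by bisimilar closed abstractions $A\bsim A'$ and applying them to the actual parameters $\tila$ dictated by the occurrences of $X$ in $E'$ preserves $\bsim$, because $\app A\tila\bsim\app{A'}\tila$ holds for \emph{arbitrary} $\tila$. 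Since this closure is already built into the congruence of $\bsim$, the candidate $\R$ itself need not be closed under substitutions, which accounts for the remark made in the introduction.

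For claim (ii) I would replay the unfolding argument of Theorem~\ref{thm:roscoe_ccs}. Starting from $E_0[\til P]\Arr\mu P'$, one builds a strictly increasing sequence of expression transitions $E_0\croc{\til E^n}\Arr{\hat\mu}E_n$, at each stage unfolding the system once more and keeping the longest prefix of the matching transition sequence that consists of \emph{instances} of expression transitions (those in which $\til P$ contributes nothing). Guardedness guarantees at least one such step is available at each stage, so the construction strictly grows; if it never terminates, replacing $\til P$ by the syntactic solution $\ssols E$ yields, after finitely many visible actions, an infinite sequence of $\tau$-transitions, i.e.\ a divergence of some $\ssoli E i$, contradicting the hypothesis. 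The passage from a single equation to a system is handled as in CCS, taking reducts of the unfoldings of \emph{any} equation into account.

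The hard part will be the transport of Lemma~\ref{l:context:lts} to the name-passing setting, which underlies every step of the unfolding construction. In \Apc\ a transition of $E[\til P]$ emanating from the $E$ component may be an input that binds names, a bound output that extrudes a restricted name, or a communication that performs a substitution on the continuation; in each case I must show it is an instance of an equation-expression transition $E\arr\mu E'$, with the resulting substitution commuting with the instantiation of the variables. The closedness requirement on solutions and on the bodies of equations is precisely what prevents name capture when substituting process abstractions into $E$, so that scope extrusion interacts cleanly with the hole, and bound names of $\mu$ must be chosen fresh for both $E'$ and $\til P$. Finally, to turn the non-terminating construction into a genuine divergence of the syntactic solution, I would appeal to the characterisation of divergences for \Apc\ given in Lemma~\ref{l:div_api}, which plays here the role that the CCS notion of divergence plays in the proof of Theorem~\ref{thm:roscoe_ccs}.
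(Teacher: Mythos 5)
Your proposal takes essentially the same route as the paper: the paper's entire proof of Theorem~\ref{thm:roscoe_api} is the observation that, because equation bodies and solutions are \emph{closed} abstractions, all names get instantiated when a variable is replaced by an abstraction, so that the CCS proof (protection of solutions, the unfolding construction, the divergence contradiction) carries over unchanged. Your expansion --- congruence and substitution-closure of ground bisimilarity in \Apc\ for the analogue of Proposition~\ref{l:protects}, closedness preventing name capture in the analogue of Lemma~\ref{l:context:lts}, fresh bound names in the matching clauses --- spells out exactly what that two-sentence proof implicitly relies on.

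One concrete correction: your final appeal to Lemma~\ref{l:div_api} is misplaced. That lemma is not a characterisation of divergence in \Apc; it is the analogue of Lemma~\ref{l:criterion:div}, namely a sufficient \emph{syntactic} criterion guaranteeing that all divergences of a system are innocuous, and it is used only for applications (the $\lambda$-calculus encoding), not in the proof of Theorem~\ref{thm:roscoe_api}. At the point where your construction fails to terminate, no lemma is needed: the paper transports the \emph{definition} of divergence to \Apc\ directly (an abstraction diverges when its application to fresh names does), and the infinite sequence of transitions you obtain from $E_0[\ssols E]$ contains at most one visible action, hence it is a divergence by definition, contradicting the hypothesis that the syntactic solutions do not diverge. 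Invoking Lemma~\ref{l:div_api} there is at best a non sequitur --- that lemma concerns systems whose syntactic solutions may well diverge, provided the divergences are innocuous. Dropping that citation, your argument stands as written.
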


\begin{proof}
Because we only consider closed abstractions, all names are
instantiated when performing a substitution of a variable by an
abstraction.  This allows us to use the same proof as for
Theorem~\ref{thm:roscoe2_ccs} in CCS.
\end{proof}

\begin{theorem}[Unique solution with innocuous divergences in  A$\pi$]\label{thm:roscoe2_api}
A guarded system of equations 
whose  syntactic 
solutions only have  innocuous divergences has  a unique 
solution for $\bsim$.
%
\end{theorem}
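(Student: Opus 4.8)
The plan is to transport the proof of Theorem~\ref{thm:roscoe2_ccs} to \Apc, exactly as is done for the divergence-free case in Theorem~\ref{thm:roscoe_api}. First I would check that the auxiliary machinery of Section~\ref{s:mt} carries over verbatim: the $n$-th unfolding of an equation, the syntactic solution $\ssols E$, the reducts of the unfoldings $\reds{\til E}$, the property of an equation \emph{protecting its solutions}, and the notion of \emph{innocuous divergence} counted via the number of applications of the constant-unfolding rule. The one structural fact that genuinely needs re-proving is the analogue of Lemma~\ref{l:context:lts}: for a guarded equation expression $E$, any transition of $E[\tP]$ whose first action is contributed by the $E$-part is an instance of a transition of $E$ itself. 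In \Apc this requires care, since the operational semantics involves name substitution (in the communication rule) and name extrusion; but because solutions are \emph{closed} process abstractions, substituting a variable by a solution instantiates all of its free names, so no name capture arises in $E[\tP]$ and the factorisation through the expression part goes through as in CCS.

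The second, and genuinely $\pi$-specific, point is that the bisimulation argument behind Proposition~\ref{l:protects} does \emph{not} require closure of the candidate relation under name substitutions. This is what distinguishes \Apc from the synchronous $\pi$-calculus: in \Apc ground bisimilarity is already a congruence, so $\tP \bsim \tQ$ together with congruence of $\bsim$ with respect to the contexts furnished by the reducts of the unfoldings is enough to close the diagram of Proposition~\ref{l:protects}. Concretely, I would re-run the construction in the proof of Theorem~\ref{thm:roscoe2_ccs}: starting from a transition $E_0[\tP] \Arr\mu P'$ with $E_0 \in \reds{\til E}$ and $\tP$ a solution, I build the strictly increasing sequence of expression transitions $E_0\croc{\til E^n} \Arr{\mu?} E_n$, unfolding one equation at a time, and stop as soon as the accumulated transition is an instance of an expression transition from $E_0\croc{\til E^n}$.

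As in Theorem~\ref{thm:roscoe2_ccs}, the termination of this construction is the heart of the matter: if it never stopped, the resulting infinite sequence of transitions of the syntactic solution would invoke the constant-unfolding rule on some $\ssoli E i$ infinitely many times, hence would be a \emph{non-innocuous} divergence, contradicting the hypothesis. Thus the construction halts, witnessing that $\til E$ protects its solutions, and Proposition~\ref{l:protects} delivers $\tP \bsim \tQ$ for any two solutions. The hard part will not be the combinatorial core, which is identical to CCS, but verifying that the transition-factorisation and divergence lemmas behave correctly under name passing — in particular that the bookkeeping of applications of the constant-unfolding rule, on which the definition of innocuous divergence rests, is unaffected by bound-output extrusion and by the instantiation of formal parameters in applications $\app F\tila$.
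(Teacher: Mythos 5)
Your proposal is correct and takes essentially the same route as the paper: the paper's proof of Theorem~\ref{thm:roscoe2_api} consists precisely in observing that, because solutions are closed abstractions, all names are instantiated when substituting a variable, so the CCS proof transfers verbatim, modified exactly as the proof of Theorem~\ref{thm:roscoe_ccs} is modified to obtain Theorem~\ref{thm:roscoe2_ccs}. The two points you single out --- the \Apc\ analogue of Lemma~\ref{l:context:lts} going through thanks to closedness, and Proposition~\ref{l:protects} needing only ground congruence rather than closure under substitutions --- are exactly the facts the paper's terse argument relies on, merely made explicit.
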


\begin{proof}
  The proof of Theorem~\ref{thm:roscoe_api} has to be modified exactly
  as in the CCS case, where the proof of Theorem~\ref{thm:roscoe_ccs} is
  modified to establish Theorem~\ref{thm:roscoe2_ccs}.
\end{proof}

As in CCS, the guardedness condition can be removed
if the rule for the unfolding of a constant
 produces a $\tau$-transition. 

We now state an analogue of Lemma~\ref{l:criterion:div} for \Apc, giving a  
sufficient condition to guarantee that a system of equations only has innocuous 
divergences.
{This condition is decidable, and sufficient for the 
application in Section~\ref{s:application_cbn}. We leave the study of finer 
conditions for future work.}

\begin{lemma}\label{l:div_api}
Consider a well-sorted system of equations $\til{X} =\til{E}$ in \Apc\
(in particular,  for each $i$, the sort of $X_i$ and of $E_i$ are the same).

Suppose that there is a sort of names such that names of that sort are never 
used in subject output position.
If for each $i$ there is
$n_i$ such that in $(E_i)^{n_i}$, each equation variable occurs underneath an 
(input) prefix of that sort, then the system has only innocuous
divergences.
\end{lemma}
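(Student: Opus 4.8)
The plan is to reduce this statement to Lemma~\ref{l:criterion:div}, transported to the asynchronous $\pi$-calculus, by showing that the sort hypothesis plays exactly the role that the ``complementary prefix does not appear'' condition played in CCS. The key observation is the following: an input prefix $\inp a \tilb$ of the distinguished sort can only be consumed through a synchronisation (the communication rule of Figure~\ref{f:lts:api}), and such a synchronisation requires a matching output action $\res{\til d}\out a{\til{b'}}$ with the \emph{same subject} $a$. Since, by hypothesis, names of the distinguished sort are never used in subject output position, no such output transition can ever be produced by any reduct of the system of equations. Consequently, an input prefix of the distinguished sort can never take part in a $\tau$-transition.

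First I would make precise what ``underneath an (input) prefix of that sort'' buys us. By the hypothesis, for each $i$ there is $n_i$ such that in $(E_i)^{n_i}$ every occurrence of an equation variable is guarded by an input prefix $\inp a\tilb$ whose subject $a$ has the distinguished sort. In the syntactic solution $\ssols E$, reaching an occurrence of $X_i$ (that is, triggering an application of rule \const to the corresponding constant) past this point requires first consuming that guarding input prefix. Well-sortedness guarantees that the subject of a synchronisation is preserved, so consuming this prefix requires an output at a name of the distinguished sort in subject position, which the hypothesis forbids.

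Next I would argue the non-triggerability along an infinite sequence of $\tau$ steps, mirroring the proof of Lemma~\ref{l:criterion:div}. Consider any divergence of some $\ssoli E i$. After the finitely many unfoldings needed to reach a reduct of $(E_i)^{n_i}$, every further application of rule \const to one of the constants $\ssoli E j$ would have to be preceded by the consumption of a guarding input prefix of the distinguished sort. Since such a prefix can never be consumed (there is no matching subject output), the infinite $\tau$-tail of the divergence cannot cross infinitely many such guards; hence rule \const is applied only finitely many times along the divergence. By the definition of innocuous divergence this means the divergence is innocuous.

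\textbf{Main obstacle.} The delicate point, as in the CCS case, is controlling \emph{all} derivation proofs of the transitions in the divergence and accounting for the fact that the communication rule creates $\tau$-actions from matching input/output pairs: I must rule out that a synchronisation on a distinguished-sort name arises indirectly, e.g.\ through scope extrusion ($\res{\til d}\out a\tilb$) or through names received as objects and later used as subjects. This is precisely where well-sortedness is essential: it ensures that a name of the distinguished sort, wherever it is passed around as an object, retains its sort and therefore can never appear as the subject of an output, so no matching output subject is ever generated anywhere in any reduct. Making this closure-under-reduction argument airtight — that the set of reducts of the unfoldings never contains a subject output at a distinguished-sort name — is the core of the proof, after which the innocuousness conclusion follows exactly as in Lemma~\ref{l:criterion:div}.
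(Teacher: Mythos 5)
Your proposal is correct, and it is essentially the paper's own argument: the paper states Lemma~\ref{l:div_api} without proof, and its proof of the CCS analogue (Lemma~\ref{l:criterion:div}) follows exactly the scheme you use---the prefixes guarding the equation variables can never take part in a $\tau$-action, hence they cannot be consumed along the infinite $\tau$-tail of a divergence, hence rule \texttt{const} is applied only finitely many times, which is the definition of an innocuous divergence. Your identification of the $\pi$-specific crux---that well-sortedness makes the property ``no name of the distinguished sort occurs in output subject position'' invariant under transitions and substitutions, so that no matching output subject can ever arise in any reduct---is precisely the additional point needed to transport the CCS argument, and you spell it out in more detail than the paper's own two-line proof of the CCS version does.
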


In earlier sections about CCS, we pointed out the connection between techniques based on
unique solution of equations and 
enhancements of the bisimulation proof
method. The same connection is less immediate in name-passing calculi, where indeed 
there are noticeable  differences.
In particular,   `up to context' enhancements for the ground bisimilarity of the
$\pi$-calculus require closure under name instantiation, even when ground bisimilarity
is known to be  preserved by substitutions (it is an open problem whether the closure can
be lifted). Thus, when comparing two derivatives $\ct P$ and $\ct Q$,
in general it is not sufficient that $P$ and $Q$ alone are in the
candidate relation:  
 one is required to include 
  also  all their  closures under name
substitutions (or, if the terms in the holes are abstractions,
instantiation of their parameters with arbitrary tuples of names). 
In contrast, the two unique solution theorems above are `purely ground': 
$F =  (\tilx)P$ is solution of  an equation $X = (\tilx) E$ if 
$P$ and $E \sub FX$ are ground bisimilar -- a single ground instance of the equation 
is evaluated.


\subsubsection*{The full \pc}

In the full $\pi$-calculus, including the output prefix $\out a
{\tilb}. P$,  ground bisimilarity is not a
congruence. One has therefore to use other forms of bisimilarity.
The most used is  \emph{early bisimilarity}; correspondingly one uses
an early transition system, where the parameters of inputs 
are instantiated with arbitrary (i.e., not necessarily
fresh) names, as in $\inp a{\til{x}}.P \arr{\iae a \tilb } P
\sub{\tilb}{\til x} $.
Modulo the move to the early setting, the theory exposed for the
asynchronous $\pi$-calculus also holds in the full $\pi$-calculus. 

However, when considering a different LTS, such as the early LTS,
divergences that did not exist in the ground LTS may arise, and as
such, prevent Theorems~\ref{thm:roscoe_api} or~\ref{thm:roscoe2_api}
to be used.  It is even possible that, by considering a different LTS,
an equation that had a unique solution loses this property. We leave
the study of this question for future work.

\subsection{An application: encoding of the {call-by-name} $\lambda$-calculus  }
\label{s:application_cbn}


To show an extended 
application of our techniques for the 
$\pi$-calculus, we revisit the proof of full abstraction for Milner's encoding of the
call-by-name (or lazy) $\lambda$-calculus  into \Apc 
  \cite{milnercbn} with respect to L{\'e}vy
Longo Trees (LTs),  precisely the   completeness part. 
We use $M,N$ to range over
the set $\Lambda$ of $\lambda$-terms, and              
$x,y,z$ to range over 
$\lambda$ 
variables. 
\iflong

We assume the standard concepts of free and bound variables and
substitutions, and
 identify $\alpha$-convertible terms. 
The terms in $ \Lambda$ are sometimes call \emph{open} to distinguish them from the subset
of  \emph{closed}  terms -- those without free variables. 
The rules defining the call-by-name strategy, defined on open terms,
are the following:
\begin{mathpar}
  (\lambda x . M) N\rightarrow M\{N/x\} \and
\inferrule{   M\rightarrow M' }{   M~N   \rightarrow M'~N} 
\end{mathpar}

 As usual $ \Rightarrow $ is
the reflexive and transitive closure of the single-step reduction $ \rightarrow $.  

We write $M \Up$ if $M$ diverges. 
If $M$  is an open \lterm, then  either $M$ diverges, 
or  $M\Rightarrow\lambda x. M'$ (for some $x$ and $M'$),  or
$M\Rightarrow x~M_1~\dots~M_n$ (for some $x$, $M_1$, \ldots, $M_n$).


The following notion of tree is used to provide a semantics for the
call-by-name $\lambda$-calculus.
\begin{definition}[L{\'e}vy-Longo Tree]
 The \emph{L{\'e}vy-Longo Tree} (LT) of an open $\lambda$-term $M$, written $LT(M)$,  is the 
(possibly infinite) tree defined coinductively as follows.
\begin{enumerate}
\item If $M$ diverges, then $LT(M)$ is the tree with a single node labelled $\bot$.
\item If $M\Rightarrow\lambda x.M'$,  
then $LT(M)$ is the tree with a root labelled with "$\lambda x.$", and 
$LT(M')$ as its
 unique descendant.
\item If $M\Rightarrow x~M_1~\dots~M_n$,
then $LT(M)$ is the tree
 with a root labelled with "$x$", and  $LT(M_1),\dots,LT(M_n)$ (in this order) as its  $n$ descendants.
\end{enumerate}
\end{definition}

LT equality 
\iflong
(whereby two $\lambda$-terms are identified if their LTs are equal) 
\else
(two $\lambda$-terms are identified if their LTs are equal) 
\fi
can
also be presented as a bisimilarity (\emph{open bisimilarity}, $\bsimo$),  defined as the largest
\emph{open bisimulation}.   
\begin{definition}
\label{d:open}
A relation $\R$ on $\Lambda $ is an \emph{open bisimulation} if,
whenever $M\RR N$:
\begin{enumerate}
\item\label{d:open:conv} $M\Rightarrow \lambda x. M'$ implies $N\Rightarrow \lambda
  x. N'$ with $M' \RR N'$;
\item\label{d:open:red} $M\Rightarrow x~M_1~\dots~M_n$ with $n\geq 0$ implies $N\Rightarrow x~N_1~\dots~N_n$
  and $M_i  \RR N_i$ for all $1\leq i\leq n$.
\item The converse of clauses \ref{d:open:conv} and \ref{d:open:red} on the challenges from
$N$.
\end{enumerate}
\end{definition}

\iflong
\begin{thmC}[\cite{cbn,cbn2}]\label{t:lt:openbis}
Let $M$ and $N$ be two \lterms; then $LT(M)=LT(N)$ iff $M\bsimo N$.
\end{thmC}
\fi

Milner's encoding of the call-by-name $\lambda$-calculus into \Apc~\cite{milnercbn}
is defined as follows: 
\begin{mathpar}
\enco {\lambda x. M}  ~\Defi~ (p) ~ p (x,q).~\app{\enco M} q
\and
    \enco x~\Defi~  (p)~\out x p 
\and
\enco
      {M~N} ~\Defi~  (p)~ \res {r, x} (\app{\enco{ M}} r | 
\out r {x,p}|!x(q).\app{\enco{   N }}q)
\end{mathpar}

Function application  
is translated into a particular form of 
parallel combination of two agents, the function and its argument;  
$\beta$-reduction  is then  modeled as   process interaction. 
Since the syntax of the  $\pi$-calculus  only allows for the
transmission of names 
along channels, the communication of a term is simulated by the
communication of a {\em trigger\/} for it. 
The translation of a $\lambda$-term is an abstraction that is parametric on a name, the
\emph{location} of the $\lambda$-term, which is intuitively the name along which the
term, as a function, will receive its argument. 
Precisely, the encoding of a term 
receives two names along its location       $p$: the first is 
 a
trigger  for its argument and  the second is  the  location to be
used for 
 the next interaction.

\smallskip
The full abstraction theorem for the encoding \cite{cbn,cbn2} states that 
two $\lambda$-terms have the same LT iff their encodings into \Apc
 are
weakly bisimilar terms. 
Full abstraction has two components: soundness, which says that
if the encodings are weakly bisimilar then the original terms  have the same LT; and
completeness, which is the converse direction. 
The 
\iflong
proof~\cite{cbn,cbn2} 
\else
proof~\cite{cbn} 
\fi
first establishes an operational correspondence between
the behaviour (visible and silent actions) of $\lambda$-terms and  of their encodings.
Then, exploiting this correspondence, soundness and completeness are
proved  
using the bisimulation proof method. For soundness, this amounts to
following the defining clauses of open
bisimulation 
(Definition~\ref{d:open}). In contrast, 
 completeness  involves enhancements of the bisimulation proof
method, notably  `bisimulation up to context and expansion'.  
\iflong
In the  latter, \emph{expansion}  is an auxiliary preorder relation, finer than weak
bisimilarity.
\fi
 As a consequence,  the technique requires having
developed the basic theory for the expansion preorder (e.g., precongruence properties and basic algebraic
laws), and requires
an operational correspondence  fine enough in order to be
able to reason about expansion
\iflong
 (expansion appears within the statements of operational
correspondence). 
\fi

Below we show that, by appealing to unique solution of equations,  completeness can be
proved by defining   an appropriate system of equations, each equation
having a simple shape,
and without the need for auxiliary preorders. 
For this, the only  results needed
are: $(i)$ validity of $\beta$-reduction for
the encoding (Lemma~\ref{l:beta}), whose
 proof is simple and consists in the application of a  few algebraic
 laws (including  laws for replication); $(ii)$ the property that
 if $M$ diverges then $\app{\enco M}p $ may never produce a visible
 action (Lemma~\ref{l:enco_div}); $(iii)$ a Lemma for rearranging parallel composition 
 and restrictions in the process encoding  
 $x~M_1~\dots~M_n$ (Lemma~\ref{l:enco_aux}).  

\begin{lemma}[Validity of $\beta$-reduction, \cite{cbn}]
\label{l:beta}
For $M \in \Lambda$, if $M \rightarrow M'$ then $\enco M \wb \enco{M'}$.   
\end{lemma}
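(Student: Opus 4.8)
The plan is to prove the statement by induction on the derivation of the reduction $M \rightarrow M'$, following the two call-by-name rules. There are exactly two cases to treat: the axiom $(\lambda z. M_0)\,N \rightarrow M_0\{N/z\}$ (the head $\beta$-redex), and the contextual rule that derives $M_1\,N \rightarrow M_1'\,N$ from a sub-derivation $M_1 \rightarrow M_1'$. The inductive (contextual) case reduces to a congruence argument, while the axiom case is where the real work lies.

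For the contextual case, I would exploit the fact that the encoding of an application is compositional in its left component: $\app{\enco{M_1 N}}p = \res{r,x}(\app{\enco{M_1}}r \mid \out r{x,p} \mid\, !x(q).\app{\enco N}q)$, so that $\app{\enco{M_1}}r$ sits in a static context, namely a parallel composition under the restrictions $\res{r,x}$. Since in \Apc\ the ground bisimilarity $\wb$ is a congruence (preserved by parallel composition and restriction, as recalled in Section~\ref{s:usol:api}), the induction hypothesis $\enco{M_1} \wb \enco{M_1'}$ — equivalently $\app{\enco{M_1}}r \wb \app{\enco{M_1'}}r$ for fresh $r$ — propagates through this context and yields $\enco{M_1 N} \wb \enco{M_1' N}$, as required.

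For the axiom case, the plan is a short deterministic computation followed by a replication law. Unfolding the three encoding clauses gives
$$\app{\enco{(\lambda z. M_0)\,N}}p = \res{r,x}\big(r(z,q).\app{\enco{M_0}}q \mid \out r{x,p} \mid\, !x(q').\app{\enco N}{q'}\big).$$
The only initial transition is the internal communication on the private name $r$, a deterministic $\tau$-step that instantiates $z:=x$, $q:=p$ and renders $r$ inert; garbage-collecting $\res r$, and using that a process whose sole transition is a deterministic $\tau$ to $R$ is weakly bisimilar to $R$, I would obtain
$$\app{\enco{(\lambda z. M_0)\,N}}p\ \wb\ \res{x}\big(\app{\enco{M_0}}p\{x/z\} \mid\, !x(q').\app{\enco N}{q'}\big).$$
It then remains to establish the \emph{substitution lemma}: for any term $M$, variable $z$, term $N$, fresh name $x$ and location $p$,
$$\res{x}\big(\app{\enco{M}}p\{x/z\} \mid\, !x(q).\app{\enco N}q\big)\ \wb\ \app{\enco{M\{N/z\}}}p,$$
whose instance with $M = M_0$ turns the right-hand side above into $\app{\enco{M_0\{N/z\}}}p$; chaining the two bisimilarities closes the case.

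I expect the substitution lemma to be the main obstacle. It is proved by structural induction on $M$. The variable cases reduce to replication bookkeeping: for $M = z$ the server $!x(q).\app{\enco N}q$ is triggered exactly once, and for $M = y \neq z$ it is never reached, so in both cases it can be discarded using that a replicated input whose subject is restricted and never reached is bisimilar to $\nil$. The abstraction and application cases proceed by pushing the replicated resource inside prefixes and distributing it across the parallel components of $\enco{M_1 M_2}$, which requires the standard \emph{replication theorems} of the $\pi$-calculus (sharing of a replicated input, its distribution over parallel composition, and its commutation with restriction and prefix), combined with the induction hypotheses. The delicate points are the freshness bookkeeping (ensuring $x$ is never captured and that the distributed copies of the server act independently) and the precise invocation of these replication laws; these are exactly the ingredients of the original argument in \cite{cbn}, which may be cited, and no machinery beyond precongruence of $\wb$ and the replication theorems is needed.
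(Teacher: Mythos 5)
Your proposal is correct and takes essentially the same route as the paper, which does not reprove this lemma but cites \cite{cbn} and describes its proof as ``the application of a few algebraic laws (including laws for replication)'': your deterministic $\tau$-step on the private location, followed by a substitution lemma established by structural induction with the replication theorems (plus congruence of ground bisimilarity for the contextual rule), is precisely that argument. Nothing further is needed.
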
 

\begin{lemma}
  \label{l:enco_div}
If $M$ diverges, then $\enco M \bsim \abs p \nil$.
\end{lemma}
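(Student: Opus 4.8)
The plan is to unfold the definition of $\bsim$ on agents and reduce the statement to a purely operational property of the encoding. By the clause defining bisimilarity on abstractions, $\enco M\bsim\abs p\nil$ holds exactly when $\app{\enco M}{p}\bsim\nil$ for a fresh name $p$, using that $\app{(\abs p\nil)}{p}=\nil$. I would then invoke the standard characterisation of the processes bisimilar to $\nil$: the relation $\{(P,\nil)\mid \text{no visible action is weakly reachable from }P\}$ is a weak bisimulation, since a $\tau$-step $P\arr\tau P'$ is matched by the empty move of $\nil$ and keeps us in the relation, while by assumption $P$ offers no challenge $P\Arr\mu P'$ with $\mu\neq\tau$. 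Hence it suffices to show that, when $M\Up$, no visible action is weakly reachable from $\app{\enco M}{p}$.

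Next I would establish the operational core: the visible transitions of $\app{\enco M}{p}$ are governed by the weak head normal form of $M$. Concretely, $\app{\enco M}{p}$ can only expose a visible action once $M$ reaches a weak head normal form --- an input at the location $p$ if $M\Longrightarrow\lambda x.M'$, and a free output at a head variable $y$ if $M\Longrightarrow y\,M_1\cdots M_n$ --- whereas all remaining transitions are $\tau$-steps that simulate weak head $\beta$-reduction (validated by Lemma~\ref{l:beta}) together with administrative communications. The crucial structural observation is that in $\app{\enco{M_1M_2}}{p}=\res{r,x}(\app{\enco{M_1}}{r}\mid\out r{x,p}\mid\,!x(q).\app{\enco{M_2}}{q})$ both the channel $r$ of the redex and the trigger $x$ of the argument are restricted, and the argument $\enco{M_2}$ is guarded by a replicated input on the restricted name $x$; thus no subterm can contribute a \emph{free} visible action until the head of $M$ has become an abstraction or a free variable.

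Since $M\Up$, weak head reduction never terminates, so $M$ never reaches a weak head normal form: it always has the shape $(\lambda x.N)\,L\,M_1\cdots M_k$ with a head redex. By the correspondence above, none of the $\tau$-descendants of $\app{\enco M}{p}$ ever enables a visible action, and the result follows. The main obstacle is making this last step precise, namely proving that \emph{every} process reachable from $\app{\enco M}{p}$ by $\tau$-steps is still free of weak visible transitions. I would handle this coinductively, with the invariant that such descendants are, up to the accumulated restrictions and the replicated-input encodings of arguments, encodings at location $p$ of the successive weak head reducts of $M$. The delicate case is an argument reaching the head: if a copy of some $\enco{M_2}$ is ever triggered in head position, then $M$ reduces to a term of the form $M_2\,M_1\cdots M_k$, so for $M$ to keep diverging that argument must itself diverge or reduce to an abstraction, and in either case it cannot surface a free head variable, so no free output ever appears. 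This is precisely the content of the operational correspondence of~\cite{cbn,cbn2}, which I would invoke to discharge the bookkeeping rather than re-prove it from scratch.
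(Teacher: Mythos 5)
Your proposal is correct, and it is essentially the argument the paper intends: the paper states this lemma without proof, glossing it exactly as ``the property that if $M$ diverges then $\app{\enco M}p$ may never produce a visible action,'' which is precisely your reduction of $\enco M \bsim \abs p \nil$ to the absence of weakly reachable visible actions from $\app{\enco M}p$ (via the fresh-name instantiation clause for agents and the standard characterisation of processes bisimilar to $\nil$). Your operational core --- visible actions can only surface once the head of $M$ becomes an abstraction (input at the free location $p$) or a free variable (free output), both excluded by divergence, with the invariant on $\tau$-descendants discharged by the operational correspondence of \cite{cbn,cbn2} --- is the same argument the paper implicitly defers to those cited works, so you have in effect supplied the proof the paper omits.
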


\begin{lemC}[\cite{cbn}]\label{l:enco_aux}
For any $M_1,\dots,M_n\in\Lambda$, and variable $x$, we have
\begin{align*}
\enco {x~M_1~\dots~M_n} & = \abs p 
( \res{ r_0, \ldots, r_n} )\big(\out x {r_0} | \out  {r_0}{r_1,x_1} | \ldots |
\out  {r_{n-1}}{r_n,x_n}  | \\
 & \quad\quad\quad\quad\quad\quad\quad
  ! \inp{x_1}{q_1} . \app{\enco {M_1}}{q_1} |  \ldots
| 
   ! \inp{x_n}{q_n} . \app{\enco{M_n}}{q_n}\big)
   \enspace.
\end{align*}
\end{lemC}

\begin{theorem}[Completeness, \cite{cbn}]\label{t:fullabs}
For $M, N\in\Lambda$, 
 $LT(M) = LT(N)$ implies $\enco M \bsim \enco N$.
\end{theorem}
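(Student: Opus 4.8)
The plan is to prove completeness by the unique-solution technique of Theorem~\ref{thm:roscoe_api}, realising the slogan that there is one equation per node of the common Lévy--Longo tree. Since $LT(M)=LT(N)$, the two terms share the same (possibly infinite) tree $T$; equivalently, by Theorem~\ref{t:lt:openbis}, there is an open bisimulation relating them. I would write $T$ as a set of node positions $\sigma$ and associate to each $\sigma$ the head form reached from $M$ along $\sigma$, i.e.\ a term $M_\sigma$ (with $M_\varepsilon=M$) whose Lévy--Longo tree is the subtree of $T$ rooted at $\sigma$; define $N_\sigma$ from $N$ in the same way. By construction $M_\sigma$ and $N_\sigma$ carry the same label in $T$. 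I would then build a \emph{single} system of equations $\{X_\sigma=E_\sigma\}_\sigma$, indexed by the nodes of $T$, and show that both families $\{\enco{M_\sigma}\}_\sigma$ and $\{\enco{N_\sigma}\}_\sigma$ are solutions; uniqueness then yields $\enco{M_\varepsilon}\bsim\enco{N_\varepsilon}$, which is the claim.

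The body $E_\sigma$ is dictated by the label of $\sigma$, using the three shape lemmas for the encoding. If $\sigma$ is a $\bot$-node, then $M_\sigma$ diverges and I set $E_\sigma=\abs p \nil$, a closed body with no variable. If $\sigma$ is a $\lambda x$-node with child $\sigma 0$, I set $E_\sigma=\abs p{\inp p{x,q}.\app{X_{\sigma 0}}q}$. If $\sigma$ is a head-variable node $x\,M_1\cdots M_n$ with children $\sigma 1,\dots,\sigma n$, I set $E_\sigma$ to be the abstraction given by Lemma~\ref{l:enco_aux}, with each $\enco{M_i}$ replaced by the variable $X_{\sigma i}$ guarded by its replicated input $!\inp{x_i}{q_i}$. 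The $\lambda$-binders and head variables of $T$ fix the binding structure of the equations: a variable $x$ bound by a $\lambda$-node becomes a name bound by the input $\inp p{x,q}$ and used as an output subject in the descendant equations, while the genuinely free variables of $M$ (which coincide with those of $N$) are abstracted away so that every equation body is a \emph{closed} equation abstraction, as required by the framework of Section~\ref{s:usol:api}.

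To prove that $\{\enco{M_\sigma}\}_\sigma$ is a solution I would use validity of $\beta$-reduction. For each $\sigma$, the term $M_\sigma$ reduces to the head form labelling the node, so by Lemma~\ref{l:beta} and transitivity of $\bsim$ we have $\enco{M_\sigma}\bsim\enco{\lambda x.M_{\sigma 0}}$, or $\enco{M_\sigma}\bsim\enco{x\,M_1\cdots M_n}$, or $\enco{M_\sigma}\bsim\abs p\nil$ (Lemma~\ref{l:enco_div}), according to the case. Rewriting the right-hand sides with the encoding clause for $\lambda$ and with Lemma~\ref{l:enco_aux}, and using that ground bisimilarity in \Apc\ is a congruence, one obtains exactly $\enco{M_\sigma}\bsim E_\sigma[\widetilde{\enco{M_\cdot}}]$, i.e.\ the family is a solution. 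The identical computation with $N_\sigma$ in place of $M_\sigma$, and the \emph{same} $E_\sigma$ (the labels agree since $LT(M)=LT(N)$), shows that $\{\enco{N_\sigma}\}_\sigma$ is a solution as well.

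It remains to check the hypotheses of the unique-solution theorem. Guardedness is immediate: in every equation each variable occurrence lies beneath an input prefix (on the location $p$ at a $\lambda$-node, on the argument trigger $x_i$ at a head-variable node). The delicate part, and the main obstacle, is discharging the divergence hypothesis. Two points are crucial. First, divergent subterms do not propagate: a $\bot$-node contributes the body $\abs p\nil$ by Lemma~\ref{l:enco_div}, so the divergence of $M_\sigma$ is absorbed rather than unfolded. Second, the closed syntactic solution is \emph{passive}: at a head-variable node the spine outputs can only fire against an environment providing the head name, and the replicated inputs on the fresh triggers $x_i$ likewise wait for such an environment, so a closed solution performs no internal communication, while $\lambda$-spines contribute only visible inputs; hence the syntactic solutions are divergence-free and Theorem~\ref{thm:roscoe_api} applies. (If one prefers to keep the internal reductions leading to head normal form, the resulting divergences are innocuous and one appeals instead to Theorem~\ref{thm:roscoe2_api}, with Lemma~\ref{l:div_api} as the decidable sufficient criterion, choosing a name sort under which the variables are guarded.) The hardest bookkeeping will be getting the binding discipline of the equations exactly right so that both encodings are literally solutions, and confirming that no infinite unfolding of a head-variable spine can produce a genuine, non-innocuous divergence.
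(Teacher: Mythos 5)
Your proposal is correct and takes essentially the same route as the paper: the same system of equations indexed by the nodes of the common L\'evy--Longo tree (the paper indexes equations by the pairs of an open bisimulation obtained via Theorem~\ref{t:lt:openbis} rather than by tree positions, a cosmetic difference), the same verification that both families of encodings are solutions via Lemmas~\ref{l:beta}, \ref{l:enco_div} and~\ref{l:enco_aux} together with congruence of $\bsim$, and a unique-solution theorem to conclude. The only substantive deviation is how the divergence hypothesis is discharged: the paper does not claim $\tau$-freeness of the syntactic solutions but instead uses the body $!\tau$ at $\bot$-nodes and invokes the syntactic criterion of Lemma~\ref{l:div_api} (no sort of names is both an input and an output subject in the relevant positions) to obtain only innocuous divergences and apply Theorem~\ref{thm:roscoe2_api}, whereas your primary route ($\nil$ at $\bot$-nodes, outright absence of internal communication in the ground LTS, Theorem~\ref{thm:roscoe_api}) is also sound but rests on the ``passivity'' invariant you only sketch, and your stated fallback coincides exactly with the paper's argument.
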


\begin{proof}
Suppose $M_0$ and $N_0$ are two 
$\lambda$-terms with the same LT. 
We  define a system of equations, whose solutions  
are obtained from the encodings of $M_0$ and $N_0$.
We will then use 
Theorem~\ref{thm:roscoe2_api} to deduce 
$\enco {M_0} \wb \enco{N_0}$. 

Since $M_0$ and $N_0$ have the same LT, then by Theorem~\ref{t:lt:openbis} there is an open bisimulation $\R$ containing the
pair $(M_0,N_0)$. 
The variables of the equations are of the form $X_{M,N}$ for $M\RR
N$, and there is one equation for each pair in $\R$. 

 We  consider a pair $(M,N)$ in $\RR$, and explain how the
 corresponding equation is defined. 
We assume an ordering of the $\lambda$-calculus  variables so to be able to view
a finite set of variables as a tuple. This allows us to write  $\tilx$ for the   variables appearing free in $M$ or  $N$.

The equations are  the  translation of the clauses of
Definition~\ref{d:open}, assuming a generalisation of the encoding to
equation variables by adding the clause: 
\hskip .2cm $ 
\enco {X_{M,N}} \defi (\tilx, p)  \app {X_{M,N}}{\tilx, p}
$.

Since  $M \RR N$,  then either  both $M$ and $N$ diverge, or they satisfy
 one of the two clauses of Definition~\ref{d:open}.

 \begin{itemize}
 \item If  $M, N$  are both divergent, then the equation is 
$
X_{M,N}  = (\tilx,p) !\tau 
$. 
(as $!\tau\bsim \enco{\Omega}$)

\item If  $M, N$ satisfy clause~\ref{d:open:conv} 
of Definition~\ref{d:open}, the
equation  is 
\[ 
X_{M,N}  = \abs{\tilx,p}  \app{ \enco{\lambda x. X_{M',N'}}} p
\enspace,
\quad\mbox{ that is,  }\quad
X_{M,N}  = \abs{\tilx,p}  p(x,q ). \app{X_{M',N'}}{\til{y},q }  
\enspace,
\] 
where 
$\til {y}$ are the free variables in $M',N'$. 

\item 
If  $M, N$ satisfy clause~\ref{d:open:red} 
of Definition~\ref{d:open}, the equation is 
given by the translation of $x~ X_{M_1,N_1}\ldots X_{M_n,N_n}$, which,
rearranging restrictions and 
parallel compositions (Lemma~\ref{l:enco_aux}), can be written
\[
X_{M,N}  = (\tilx,p) 
( \res{ r_0, \ldots, r_n} )\big(\begin{array}[t]{l}
 \out x {r_0} | \out  {r_0}{r_1,x_1} | \ldots |
\out  {r_{n-1}}{r_n,x_n}  | \\
 ! \inp{x_1}{q_1} . \app{X_{M_1,N_1}}{\til {x_1},q_1} |  \ldots
| 
! \inp{x_n}{q_n} . \app{X_{M_n,N_n}}{\til {x_n},q_n}
                              \,\big)
\end{array}
 \]
where $\til {x_i}$ are the free variables in $M_i,N_i$. 
 \end{itemize}
 Essentially, each equation above represents the translation of a specific
 node of the LT for $M$ and $N$.

We then rely on Lemma~\ref{l:div_api} 
to show that 
 the equations may only produce innocuous divergences.
 It is easy to check that the syntactic condition holds:
  a   location name may only
appear once (in input position);  a trigger name either appears once (as a replicated
input), or it only appears in output position. 

Now, for $(M,N)\in\R$, 
define $F_{M,N} $ as the abstraction $(\tilx,p)\app{ \enco {M}}{p}$, and similarly 
 $G_{M,N} \Defi (\tilx,p)\app{ \enco {N}}{p}$.
Lemma~\ref{l:beta} allows us to show that the set of all such abstractions $F_{M,N} $ yields a solution for the
system of equations, and similarly for $G_{M,N}$.  

We reason by cases, following Definition~\ref{d:open}.
\begin{itemize}
\item 
  If clause (1) of
Definition~\ref{d:open} holds, then $M\Rightarrow x~M_1~\dots~M_n$ and we have 
\begin{align*}
F_{M,N} & =  (\til y,p) \app{\enco M}p \\
& \wb  (\til y,p) \app{\enco {x M_1\dots M_n}}p & \tag*{(by Lemma~\ref{l:beta})}\\
& = (\til y,p)p(x,q).\enco{M'}q\\
& =  (\til y,p) \big(p(x,q).\app {X_{M',N'}}{\til z}\big)\{F_{M',N'}/X_{M',N'}\}
\end{align*}
where $\til z$ is a subset of $x,\til y$, containing the free variables of $M$ and $N$.

\item If clause (2) holds, then $M\Rightarrow \lambda x. M'$ and we have
\begin{align*}
F_{M,N} & =  (\til y,p) \app{\enco M}p \\
& \wb  (\til y,p) 
( \res{ r_0, \ldots, r_n} )\big(\out x {r_0} | \out  {r_0}{r_1,x_1} | \ldots |
\out  {r_{n-1}}{r_n,x_n}  | \\
&\quad\quad ! \inp{x_1}{q_1} . \enco{M_1}{q_1} |  \ldots
| 
! \inp{x_n}{q_n} . \enco{M_n}{q_n}
                              \,\big) & \tag*{(by Lemmas~\ref{l:beta}
                                        and~\ref{l:enco_aux})}\\ 
&\bsim  (\til y,p) 
( \res{ r_0, \ldots, r_n} )\big(\out x {r_0} | \out  {r_0}{r_1,x_1} | \ldots |
\out  {r_{n-1}}{r_n,x_n}  | \\
&\quad\quad ! \inp{x_1}{q_1} . \app{X_{M_1,N_1}}{\til {x_1},q_1} |  \ldots
| 
! \inp{x_n}{q_n} . \app{X_{M_n,N_n}}{\til {x_n},q_n}
                              \,\big) \\
&\quad \quad \{F_{M_1,N_1}/X_{M_1,N_1},\dots,F_{M_n,N_n}/X_{M_n,N_n}\}
\end{align*}
\item 
  If neither clause (1) or (2) hold, meaning both $M$ and $N$ diverge, 
 then we have 
\begin{align*}
F_{M,N} &= (\tilx,p) \app {\enco M}p\\
&\bsim (\tilx,p) \nil&\tag*{(by Lemma~\ref{l:enco_div})}\\
&\bsim (\tilx,p) !\tau
\end{align*}

\end{itemize} 

Since $F_{M,N}$ and $G_{M,N}$ are
both solutions of the system of equations, they are bisimilar, which
finally allows us to deduce that $\enco {M_0} \wb \enco{N_0}$
\end{proof}

\section{Conclusions \& Future Work}

We have compared our techniques to one of the most powerful forms of  enhancements of the
bisimulation proof method, namely Pous `up to transitivity
and context', showing that, up to a technical condition, our techniques are at least as
powerful.  
 We believe that also the converse holds, though possibly
under different side conditions. We leave a detailed analysis of this comparison, which seems
non-trivial,  for
future work. 
In this respect, the goal of 
the work on unique solution of equations is to provide a
way of better understanding   up-to techniques
and to shed  light into the conditions for their
soundness.  The technique by Pous, in particular, 
is arguably more complex, 
both in its definition and its application,
\iflong
 than the 
unique solution theorems presented here. 
\fi

As said above, the comparison with up-to techniques could also help
understanding the need for 
the
 closure under substitutions in up to context techniques for name-passing calculi such as 
the asynchronous
$\pi$-calculus.     
\iflong
Surprisingly, our unique solution techniques, despite the strong similarities 
with up-to context
techniques, do not require the closure under substitutions. 
\fi
However, currently it is unclear how to formally relate 
 bisimulation enhancements and `unique solution of
equations' in name-passing calculi.
\ifnot
\finish{or below?}
whether the comparison
might shed light into the open problem for the $\pi$-calculus, as the
relationship between bisimulation enhancements and unique solution of
equation theorems appears to be weaker in name-passing calculi with
respect to CCS-like  languages.
\fi

Up-to techniques have been analysed in an abstract
setting using lattice theory~\cite{DBLP:conf/lics/Pous16} and category
theory~\cite{DBLP:journals/acta/BonchiPPR17,DBLP:conf/sofsem/RotBR13}.
It could  be interesting to do the same for the unique-solution
techniques, 
to study their connections with up-to techniques, and
to understand which equivalences can be handled 
(possibly using, or refining, the   abstract formulation 
\iflong presented in \else of \fi 
Section~\ref{s:generalisation}).

In comparison with the enhancements of the bisimulation proof method,
the main drawback of the techniques exposed in this paper 
 is the presence of a semantic condition,
involving divergence: the unfoldings of the
equations should not produce divergences, or only produce innocuous divergences.
A syntactic condition for this has been proposed (Lemma~\ref{l:criterion:div}). 
Various techniques for checking
divergence in concurrent calculi exist in the literature, including 
type-based techniques
\cite{yoshida:berger:honda:termination:ic,DBLP:conf/concur/DemangeonHS10,DBLP:journals/iandc/DengS06}.
 However, in general
divergence is undecidable, and therefore, the check may sometimes be
unfeasible.  Nevertheless, the equations that one writes for proofs
usually involve forms of `normalised' processes, and as such they are
divergence-free (or at most, contain only innocuous divergences).
More experiments are needed to validate this claim
or to understand how limiting this problem is.



Several studies in functional
programming and type theory 
rely on type-based 
methods to insure that coinductive definitions are productive, i.e.,
do not give rise to partially defined functions (in order to preserve
logical consistency)~\cite{DBLP:conf/lics/Nakano00,DBLP:conf/icfp/AtkeyM13}. Understanding whether these approaches can be
adapted to analyse divergences and innocuous divergences in systems of
equations is a topic for future investigations.

\subparagraph*{Acknowledgements.}

This work was supported by Labex MILYON/ANR-10-LABX-0070, by
 the
European
Research Council (ERC) under the Horizon 2020 programme (CoVeCe,
grant agreement No {678157}),
 by  H2020-MSCA-RISE project `Behapi' (ID 778233), and by the
project ANR-16-CE25-0011 REPAS.


\bibliographystyle{alpha}
\bibliography{main}



\end{document}